\documentclass[smallextended]{svjour3}

\usepackage[paperwidth=210mm,paperheight=297mm,centering,hmargin=2.3cm,vmargin=2.7cm]{geometry}
\usepackage{marvosym}
\usepackage{amsfonts}
\usepackage{dcolumn}
\usepackage{bm,bbm}
\usepackage{kpfonts}
\usepackage{braket}
\usepackage{color}
\usepackage{authblk}

\newcommand{\cC}{{\mathcal C}}
\newcommand{\cD}{{\mathcal D}}
\newcommand{\cE}{{\mathcal E}}

\newcommand{\cH}{{\mathcal H}}

\newcommand{\cK}{{\mathcal K}}
\newcommand{\cL}{{\mathcal L}}
\newcommand{\cM}{{\mathcal M}}
\newcommand{\cN}{{\mathcal N}}

\newcommand{\cP}{{\mathcal P}}

\newcommand{\cS}{{\mathcal S}}
\newcommand{\cT}{{\mathcal T}}

\newcommand{\cX}{{\mathcal X}}
\newcommand{\cY}{{\mathcal Y}}


\newcommand{\fE}{{\mathfrak E}}
\newcommand{\fF}{{\mathfrak F}}

\newcommand{\fI}{{\mathfrak I}}

\newcommand{\fP}{{\mathfrak P}}

\newcommand{\fS}{{\mathfrak S}}
\newcommand{\fT}{{\mathfrak T}}

\newcommand{\fW}{{\mathfrak W}}


\newcommand{\bbmC}{{\mathbbm C}}
\newcommand{\bbmE}{{\mathbbm E}}
\newcommand{\bbmN}{{\mathbbm N}}

\newcommand{\bbmeins}{{\mathbbm 1}}


\newcommand{\prob}{\mathrm{Pr}}
\newcommand{\id}{\mathrm{id}}

\newcommand{\tr}{\mathrm{tr}}
\newcommand{\spann}{\mathrm{span}}

\newcommand{\conv}{\mathrm{conv}}
\newcommand{\pr}{\mathfrak{P}}

\newcommand{\rebd}{\mathrm{rebd}}
\newcommand{\ri}{\mathrm{ri}}
\newcommand{\aff}{\mathrm{aff}}

\hyphenation{en-tang-le-ment}
\hyphenation{en-tang-le-ment-assis-ted}

\begin{document}
\title{Entanglement-assisted classical capacities of compound and arbitrarily varying quantum channels} 
\author{Holger Boche \and Gisbert Jan\ss en \and Stephan Kaltenstadler\thanks{The authors thank their colleague Sajad Saeedinaeeni for careful reading of the 
manuscript and suggestions which improved the presentation of this paper. This work was supported by Bundesministerium f\"ur Bildung und Forschung (BMBF) via grant 16KIS0118K (H.B., G.J.) and the German National Science Foundation (DFG) via project
Bo 1743/ 20-1 (S.K.).}}
\institute{Holger Boche \at  \email{boche@tum.de} \and Gisbert Jan\ss en \at \email{gisbert.janssen@tum.de} \and Stephan Kaltenstadler \at \email{kaltenstadler@campus.tu-berlin.de} \at \\
Lehrstuhl f\"ur Theoretische Informationstechnik, Technische Universit\"at M\"unchen, 80290 M\"unchen, Germany}
\titlerunning{Entanglement-assisted classical capacities of compound and AV quantum channels}
\authorrunning{H. Boche, G. Jan\ss en, S. Kaltenstadler}
\keywords{quantum Shannon theory, entanglement, quantum channels, classical capacities}
\maketitle

\begin{abstract}
We consider classical message transmission under entanglement assistance for compound memoryless and arbitrarily varying quantum channels. In both cases, we prove general coding theorems together with corresponding 
weak converse bounds. In this way, we obtain single-letter characterizations of the entanglement-assisted classical capacities for both channel models. Moreover, we show that the 
entanglement-assisted classical capacity does exhibit no strong converse property for some compound quantum channels for the average as well as the maximal error criterion. 
A strong converse to the entangle- ment-assisted classical capacities does hold for each arbitrarily varying quantum channel.
\end{abstract}
\begin{section}{Introduction} \label{sect:introduction}
Entanglement is well-known as a valuable communication resource in quantum information theory. Beside several tasks, such as quantum teleportation \cite{bennett93}, where entanglement is an indispensable resource 
to run successful protocols, entanglement also has impact as an additional resource in quantum channel coding scenarios.
An early example where additional entanglement has such an effect is the noiseless dense-coding protocol \cite{bennett92}, where a shared
maximally entangled pair of qubits in addition to a noiseless qubit channel allow perfect transmission of four bits. That is two times as much as when doing message transmission over that channel without further entanglement. 
A Shannon-theoretic refinement of this idea was introduced in \cite{bennett02} (see also \cite{holevo02}), where the classical message transmission capacity of (potentially noisy) memoryless quantum channels 
under free supply of additional shared entanglement 
was determined. The entanglement-assisted classical message transmission capacity of a memoryless quantum channel was characterized by the input-state maximized quantum mutual information of that channel. 
This result has a blood-stirring effect on the information-theorist for at least two reasons. \newline 
On one hand, additional entanglement allows to achieve substantially higher transmission rates for some channels. On the other hand, the capacity is characterized by a handy single-letter formula, 
a feature which is not shared by most of the known capacities for quantum communication tasks. 
\newline 
An extension of the mentioned results to multi-user situations led the authors of \cite{hsieh08} to a characterization of the entanglement-assisted message transmission capacity region of memoryless quantum multiple-access 
channels.
The codes derived therein to prove the coding theorem shed further light on the utility of entanglement-assisted 
message transmission codes. It is possible to derive coding constructions and protocols for several other important quantum communication tasks, by making entanglement-assisted classical message transmission codes coherent 
\cite{devetak08}, \cite{hsieh10}. 
In this way, entanglement-assisted message transmission codes fill a prominent position within the so-called ``family of quantum protocols'' \cite{devetak04}. \newline 
All mentioned results were derived under the idealized conditions that the transmission channel is memoryless, and the generating channel map which governs the statistics of the system is perfectly known to sender and 
receiver. Both of the mentioned restrictions will be hardly fulfilled in real-world communication systems. In this paper, we pursue a way in direction of partly dropping the mentioned conditions. We investigate 
the task of entanglement-assisted message transmission assuming the users to be connected by either a compound memoryless quantum channel \footnote{During the final stage of preparation of this work, the authors became aware 
of the paper \cite{berta16}. Some results we present regarding the entanglement-assisted classical capacity of compound quantum channels are 
also contained therein, with substantially different proof techniques.} or an arbitrarily varying quantum channel (AVQC).
\newline 
If the communication parties are confronted with a compound memoryless quantum channel, the transmission is governed by memoryless extensions of a generating channel map for each blocklength. However, sender and receiver
have no perfect knowledge of the actual generating map. They are rather provided with a set of confidence of channel maps, where each of them is possibly generating the transmission. 
Therefore, they are forced to use coding procedures that are 
universal in the sense that they are asymptotically perfectly reliable for each of the possible realizations of the channel statistics. \newline 
The AVQC model confronts the users with a substantially increased level of system uncertainty. Each use of the channel can be driven by an arbitrary channel map from a prescribed 
set of channels, where most of the possible realizations are not even memoryless. It is instructive for the AVQC model to regard a third, malicious party being involved in the scenario. This third party acting 
as a jammer may choose the channel map for each use of the channel freely from a prescribed set to attack the transmission goals of the sending and receiving parties connected by the channel. \newline 
The contributions of the paper are the following. We prove a coding theorem for entanglement-assisted message transmission over any given compound memoryless quantum channel, and a corresponding converse bound
which determine the entanglement-assisted classical capacity of compound memoryless channels for the average
as well as the maximal transmission error as a criterion in terms of a single-letter formula.  \newline 
Considering the AVQC model, we use the entanglement-assisted message transmission codes derived for the compound quantum channel. Applying 
quantum versions of the so-called robustification and elimination methods from \cite{ahlswede86}, we prove a coding theorem for entanglement-assisted message transmission over AVQCs. Together with the corresponding 
converse bound, we establish a single-letter characterization of the entanglement-assisted message transmission capacity also for the AVQC. \newline 
From the obtained capacity characterization, we infer two remarkable features of the entanglement-assisted classical capacities of AVQCs. 
The capacity is additive, and continuous, which implies stability of the capacity under perturbation of the AVQC-generating set of channels. Both do not hold in general for the unassisted classical message transmission 
capacities of AVQCs \cite{boche13}. \newline 
The question whether or not several coding theorems in quantum information theory can be supplied with a so-called strong converse recently has received increasing interest among quantum information theorists. We show, by providing 
a counterexample that if the average transmission error is considered as criterion of reliability, no such strong converse can hold in general for the entanglement-assisted classical capacity of compound channels. A general 
strong converse statement does also not hold for the maximal error-criterion, which we show by demonstrating that under entanglement assistance both error criteria are essentially equivalent for compound quantum channels. 
We complete the set of statements on both channel models by providing a general strong converse statement for AVQCs. This is one more remarkable feature of entanglement assistance. For the unassisted classical capacities of 
AVQCs it is an open question, whether or not such a statement does hold. Even for classical AVQCs a general strong converse for the unassisted message transmission capacity, by now, is not more than a conjecture \cite{ahlswede06}.

\begin{subsection}*{Related work} \label{subsect:related_work}
The task of entanglement-assisted message transmission was first considered by Bennett et al. \cite{bennett02} (see also \cite{holevo02}). Therein, the classical message transmission capacity of a perfectly known
memoryless quantum channel was determined. The multi-user capacity for entanglement-assisted message transmission was characterized in \cite{hsieh08}, where a refined coding strategy
was presented. The coding theorem for (unassisted) classical message transmission over compound
quantum channels was derived in \cite{mosonyi15}, while a coding theorem for the genuine quantum capacities without entanglement assistance of a compound quantum channel were proven in 
\cite{bjelakovic09b}. Later on, the same authors together with R. Ahlswede also derived coding theorems for the unassisted quantum capacities of arbitrarily varying quantum channels \cite{ahlswede13}. The techniques 
used in this paper, in fact, strongly rely on the arguments used therein. To prove a coding theorem for the entanglement-assisted classical capacities of compound channels, we use capacity achieving codes for certain 
compound classical-quantum channels. Such codes were derived in \cite{hayashi09}, \cite{bjelakovic09}, \cite{datta10} before. In this paper, we use codes from the more recent work \cite{mosonyi15} instead,
which achieve the message transmission capacities of general (not necessarily finite or countable compound channels) with exponentially decreasing errors.
From \cite{ahlswede13} we borrow a variation of the famous robustification and elimination techniques which in turn originally were introduced in \cite{ahlswede86},
as a method to prove the coding theorem for arbitrarily varying classical channels. Another, very nice approach to derive good codes for entanglement-assisted classical message transmission can be found in \cite{hsieh08},
where the capacity region for entanglement-assisted message transmission over quantum multiple access channels was derived. The coding strategy to prove the latter result relied on a packing lemma together with a very powerful while elementary encoder construction,
which also added a nice method of proof for the single user setting. We exploit this approach, and show that the encoding construction is also reasonable to derive sufficient codes for entanglement-assisted message transmission 
over compound quantum channels. \newline 
It was shown in \cite{bennett14} that the entanglement-assisted classical capacity even obeys a general strong converse property for perfectly known memoryless quantum channels, i.e. all code sequences achieving rates above 
capacity are asymptotically completely useless (i.e. their transmission errors approach one in the asymptotic limit). A different proof for this result was given recently in \cite{gupta14}. \newline 
While the present paper was assembled, the authors learned of the paper \cite{berta16}, which has some overlap in results with the present one. Therein, several entanglement-assisted capacities of compound quantum channels were 
determined. However, the results were proven there with a different approach, employing results from one-shot information theory to derive a coding theorem for entanglement-assisted entanglement transmission, which in turn
led them to a proof of the classical entanglement-assisted capacities. Here, we take the opposite route. We derive a coding theorem for entanglement-assisted message transmission from universal codes for unassisted message 
transmission over compound classical-quantum channels. Moreover, the techniques employed in this work allow to derive codes for entanglement-assisted message transmission over compound quantum channels which have sufficient 
error performance to use a variant of the robustification approach \cite{ahlswede78} which allows us to prove a coding theorem for arbitrarily varying quantum channels. To derive coding theorems regarding the entanglement-assisted 
capacities for the latter channel model is also stated as open problem in \cite{berta16}. 
\end{subsection}
\end{section}
\begin{section}{Notation and conventions}\label{sect:notations}
 All Hilbert spaces appearing in this work are considered to be finite dimensional complex vector spaces. 
 $\mathcal{L}(\cH)$ is the set of linear maps and $\cS(\cH)$ the set of states (density matrices) on a Hilbert 
 space $\cH$ in our notation. We denote the set of quantum channels, i.e. completely positive and trace 
 preserving (c.p.t.p.) maps from $\mathcal{L}(\cH)$ to $\mathcal{L}(\cK)$ by $\mathcal{C}(\cH, \cK)$.
 \newline
 Regarding states on multiparty systems, we freely make use of the following convention for a system consisting
 of some parties $X,Y,Z$, for instance, we denote $\cH_{XYZ} := \cH_{X} \otimes \cH_Y \otimes \cH_Z$, and denote
 the marginals by the letters assigned to subsystems, i.e. $\sigma_{XZ} := \tr_{\cH_Y}(\sigma)$ for $\sigma \in
 \cS(\cH_{XYZ})$ and so on. \newline 
 The von Neumann entropy of a quantum state $\rho$ is 
 defined 
 \begin{align}
  S(\rho) := - \tr(\rho \log \rho),
 \end{align}
 where we denote by $\log(\cdot)$ and $\exp(\cdot)$ the base two logarithms and exponentials throughout this paper.
 The \emph{quantum mutual information} of a quantum state $\rho \in \cS(\cH_A)$, and a channel $\cN \in \cC(\cH_A, \cH_B)$ is 
 defined by
 \begin{align}
  I(\rho, \cN) := S(\rho) + S(\cN(\rho)) - S(\cN \otimes \id(\ket{\psi}\bra{\psi})), \label{quant_mut_inf_def}
 \end{align}
 where $\psi$ is the state vector of an arbitrary purification of $\rho$. The quantum mutual information is well-defined by 
 (\ref{quant_mut_inf_def}), because the r.h.s. is known to not depend on the  choice of the purification $\psi$. 
  We denote the set of classical probability distributions on a set $S$ by $\pr(S)$. The $l$-fold Cartesian
 product of $S$ will be denoted $S^l$ and $s^l := (s_1,...,s_l)$ will be a notation for elements of 
 $S^l$. For each positive integer $n$, the shortcut $[n]$ is used to abbreviate the set $\{1,...,n\}$.
 For a set $A$ we denote the convex hull of $A$ by $\conv(A)$. If $\fI := \{\cN_s\}_{s \in S} \subset \cC(\cH,\cK)$ is 
 a finite set of quantum channels, the convex hull can be written as 
 \begin{align}
  \conv(\mathcal{\fI}) = \left\{\tilde{\cN}_p \in \cS(\cH):\ \tilde{\cN}_p = \sum_{s\in S} p(s) \ \cN_s, \ p \in \pr(S)
    \right\}. \label{conv_hull_def}
 \end{align}
 We use the diamond norm $\|\cdot\|_\Diamond$ to measure the distance between quantum channels. For a linear map 
 $\cN: \cL(\cH) \rightarrow \cL(\cK)$, its diamond norm is defined 
 \begin{align}
  \|\cN\|_\Diamond := \underset{n \in \bbmN}{\sup} \ \underset{\substack{a \in \cL(\bbmC^n \otimes \cH) \\ \|a\|_1 = 1}}{\max} \| \id_{\bbmC^n} \otimes \cN (a)\|_1,
 \end{align}
where $\id_{\bbmC^n}$ is our notation for the identical channel, i.e. $\id_{\bbmC^n}(x) = x$ for each $x \in \cL(\bbmC^n)$. By $D_\Diamond$, we denote the 
Hausdorff distance which is generated by $\|\cdot\|_\Diamond$. For any two sets $\fI, \fI' \subset \cC(\cH, \cK)$, their Hausdorff distance is defined by
\begin{align}
 D_\Diamond(\fI, \fI') := \max\left\{\underset{\cN \in \fI}{\sup} \underset{\cN' \in \fI'}{\inf} \|\cN - \cN' \|_\Diamond, \ \underset{\cN' \in \fI'}{\sup} \underset{\cN \in \fI}{\inf} \|\cN - \cN' \|_\Diamond \right\}.
\end{align}
 By $\fS_n$, we denote the group of permutations on $n$ elements, in this way
 $\sigma(s^n) = (s_{\sigma(1)},...,s_{\sigma(n)})$ for each $s^n = (s_1,...,s_n)\in S^n$ and permutation $\sigma \in
 \fS_n$.\newline
\end{section}

\begin{section}{Basic definitions and main results} \label{sect:definitions}
 In this section, we give concise definitions for the coding scenarios we consider, and state the main results of this paper. First we introduce 
 the compound memoryless quantum channel and AVQC models. Let for the rest of this section 
 $\fI:= \{\cN_s\}_{s \in S} \subset \cC(\cH_A, \cH_B)$ be a given set of c.p.t.p. maps with a set $S$ of parameters not necessarily finite or countable. \newline 
 The \emph{compound quantum channel generated by }$\fI$ is given by the set $\{\cN_s^{\otimes n}: s \in S, n \in \bbmN\}$. This definition 
 is understood as follows. For each blocklength $n$, the transmission is governed by $\cN_s^{\otimes n}$, where $s$ can be any member of the index set $S$. \newline 
 The \emph{AVQC generated by} $\fI$ is given by the set $\{\cN_{s^n}: s^n \in S^n, \ n \in \bbmN\}$,
 where the definitions 
 \begin{align*}
  \cN_{s^n} := \cN_{s_1} \otimes \cdots \otimes \cN_{s_n}  && (s^n = (s_1,\dots, s_n) \in S^n)
 \end{align*}
 apply. The AVQC models a rather pessimistic transmission situation. The channel map governing in the transmission can vary over the set $\fI$ in each use of the channel. AVQCs can be thought as
 modelling an adversarial attack on the transmission. A jammer may confront the communication parties with an arbitrary channel from $\fI$ for each channel use.
\begin{subsection}{Compound memoryless quantum channels} \label{subsect:definitions_comp}
 In the following, we define the coding procedures allowed for entanglement-assisted message transmission. 
 \begin{definition} \label{def:comp_ea_codes}
   An \emph{$(n,L,M)$-code for entanglement-assisted (EA) message transmission} over the compound quantum channel $\fI$ is a triple $\cC := (\Psi,\cE_m,D_m)_{m =1}^M$, where with additional Hilbert 
   spaces $\cK_A$, $\cK_B$ (under control of $A$,$B$)
   \begin{itemize}
    \item $\Psi \in \cS(\cK_A \otimes \cK_B)$ \ is a pure quantum state, 
    \item $\cE_m \in \cC(\cK_A, \cH_A^{\otimes n})$ \ for all $m \in [M]$, 
    \item $D_m \in \cL(\cH_B^{\otimes n}\otimes \cK_B)$, \ $0 \leq D_m \leq \bbmeins$\ for all \ $m \in [M]$, \ $\sum_{m=1}^M D_m \leq \bbmeins$, \ and 
    \item $L := \dim \cK_A$.
   \end{itemize}
   Using the shortcut $D_m^c := \bbmeins - D_m$ for each $m\in [M]$, we define the functions
   \begin{align}
    \overline{e}(n,\cC, \fI) 	&:= \underset{s \in S}{\sup} \frac{1}{M} \sum_{m=1}^M \ \tr\left\{D^c_m(\cN_s^{\otimes n}\circ \cE_m \otimes \id_{\cK_B})(\Psi)\right\} \hspace{2cm} \text{(average error)}, \ \text{and} \nonumber \\
    e(n,\cC, \fI) 		&:= \underset{s \in S}{\sup} \underset{m \in [M]}{\max} \ \tr\left\{D^c_m(\cN_s^{\otimes n} \circ \cE_m \otimes \id_{\cK_B})(\Psi)\right\} \hspace{2cm} \text{(maximal error)}. \label{comp_max_err_def}
   \end{align}
  \end{definition}
  It might seem somewhat unusual that we regard the pure state allowed for assistance of the transmission as a feature of the code. This has its only reason in notational convenience. 
  \begin{definition} \label{def:comp_ea_capacity}
   A number $R \geq 0$ is called an \emph{achievable rate for EA message transmission over the compound quantum channel $\fI$ under average error criterion}, if for each $\epsilon > 0, \delta > 0$ there exist numbers
   $R_e < \infty$, and $n_0 = n_0(\epsilon,\delta)$, such that for each blocklength $n > n_0$ we find an $(n,L,M)$-code $\cC$ for EA message transmission over $\fI$ which has the properties
   \begin{enumerate}
    \item $\frac{1}{n} \log M \geq R - \delta$,
    \item $\overline{e}(n, \cC, \fI) \leq \epsilon$, and  
    \item $\frac{1}{n}\log L \leq R_e.$
   \end{enumerate}
   We call 
   \begin{align*}
    \overline{C}_{EA}(\fI) := \sup\{R \geq 0: \ R \ \text{achiev. rate for av. error EA message transmission over the compound channel}\ \fI \}
   \end{align*}
   the \emph{average error entanglement-assisted message transmission capacity of the compound quantum channel $\fI$}. 
  \end{definition}
  The corresponding definitions for achievable rates and capacity regarding EA message transmission over $\fI$ under maximal error criterion can be easily obtained by replacing the average error by the maximal error from 
  (\ref{comp_max_err_def}) in Definition \ref{def:comp_ea_capacity}. The corresponding capacity will be denoted by $C_{EA}(\fI)$. \newline 
  Notice that the third condition in Definition \ref{def:comp_ea_capacity} above states that only protocols are allowed, which consume entanglement on systems with rate-bounded number of degrees of freedom. 
   The upper bound on the capacity in Theorem \ref{theorem:comp_ea_capacity} below holds, in fact, also without this restriction. However,
   consuming resources of asymptotically unbounded rate seems not reasonable from the communication point of view. 
   The following theorem is the main result of this paper regarding the compound quantum channel model. 
   \begin{theorem} \label{theorem:comp_ea_capacity}
    It holds
    \begin{align}
     \overline{C}_{EA}(\fI) = C_{EA}(\fI) \ = \ \underset{\rho \in \cS(\cH_A)}{\sup} \ \underset{s \in S}{\inf} \ I(\rho, \cN_s). \label{theorem:comp_ea_capacity_1}
    \end{align}
   \end{theorem}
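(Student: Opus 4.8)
The plan is to establish the two inequalities separately, using the standard converse/achievability split, together with the easy observation that $C_{EA}(\fI) \leq \overline{C}_{EA}(\fI)$ (maximal error is more demanding), so that it suffices to prove the weak converse $\overline{C}_{EA}(\fI) \leq \sup_{\rho}\inf_{s} I(\rho,\cN_s)$ and the achievability $C_{EA}(\fI) \geq \sup_{\rho}\inf_{s} I(\rho,\cN_s)$.

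For the \emph{converse}, I would start from a good $(n,L,M)$-code $\cC$ with $\overline e(n,\cC,\fI)\le\epsilon$. Fix any single $s\in S$; then the code is in particular a good entanglement-assisted code for the memoryless channel $\cN_s^{\otimes n}$, so the single-letter weak converse of Bennett--Shor--Smolin--Thapliyal (\cite{bennett02}, applied to the memoryless channel $\cN_s$) together with a Fano-type/continuity argument gives $\tfrac1n\log M \le I(\rho_s,\cN_s) + f(\epsilon)$ for some input state $\rho_s$ depending on $n$ and $s$, with $f(\epsilon)\to 0$. To make this uniform in $s$ I would use a maximization over purifications: writing the state fed into the channel after applying $\cE_m$ and averaging over $m$ gives a single average input state $\bar\rho$ (independent of $s$), so one actually gets $\tfrac1n\log M \le I(\bar\rho,\cN_s)+f(\epsilon)$ for \emph{every} $s$ simultaneously, hence $\tfrac1n\log M\le \inf_{s}I(\bar\rho,\cN_s)+f(\epsilon)\le \sup_{\rho}\inf_{s}I(\rho,\cN_s)+f(\epsilon)$. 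Taking $n\to\infty$, $\epsilon\to0$ finishes the converse; the only mild subtlety is checking that the quantum mutual information $I(\cdot,\cN_s)$ is additive on tensor powers and concave in the input state so that the per-block single-letterization is legitimate, which is classical.

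For the \emph{direct part}, the approach is to reduce entanglement-assisted message transmission over the compound quantum channel $\fI$ to unassisted message transmission over a suitable compound classical-quantum channel, and then invoke the universal c-q compound coding theorem of Mosonyi \cite{mosonyi15} (codes with exponentially small error, valid for arbitrary, not necessarily countable, parameter sets). Concretely: fix $\rho\in\cS(\cH_A)$ nearly attaining the supremum, let $\psi$ be a purification on $\cH_A\otimes\cK_A$, and take as the shared entangled resource $\Psi=\psi^{\otimes n}$ (this respects the rate bound on $\log L$, since $\tfrac1n\log\dim\cK_A$ is a constant). Following the Hsieh--Devetak--Winter encoder construction \cite{hsieh08} adapted to the universal setting, the sender applies, conditioned on the message $m$, a shift-and-multiply (Heisenberg--Weyl) unitary on her half of $\Psi$; averaged over the message set this produces, on the channel-plus-reference system, a state close to the product $\cN_s^{\otimes n}(\rho^{\otimes n})\otimes(\text{reference } \rho_{\cK_A}^{\otimes n})$, so that distinguishing the messages at $B$ amounts to the hypothesis-testing/packing problem solved by a good code for the compound c-q channel $s\mapsto$ (Choi-type state of $\cN_s\otimes\id$). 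Mosonyi's universal decoder then yields a POVM $(D_m)$ achieving rate $\inf_s I(\rho,\cN_s)-\delta$ with maximal error $\to 0$, giving $C_{EA}(\fI)\ge \sup_\rho\inf_s I(\rho,\cN_s)$.

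The main obstacle I expect is the direct part: one must make the Hsieh--Devetak--Winter encoder argument genuinely \emph{universal}, i.e. produce a single encoding/decoding pair whose error is small for every $s\in S$ at once. The classical random-coding-plus-operator-Chernoff estimates used in \cite{hsieh08} have to be replaced by the universal c-q compound estimates of \cite{mosonyi15}, and one must verify that the relevant ``averaged'' state created by the Heisenberg--Weyl trick is, uniformly in $s$, close in trace norm to the product state on which the c-q code was designed — this is where the exponential error decay of the compound c-q codes is essential, since it must beat the (polynomial, after type-counting) cost of union-bounding the uncertainty set. Handling possibly uncountable $S$ is dealt with exactly as in \cite{mosonyi15}: approximate $\fI$ by a finite $\tau$-net in diamond norm, use continuity of $\rho\mapsto I(\rho,\cN_s)$ and of $\cN\mapsto I(\rho,\cN)$ (Fannes/Alicki--Fannes), and let the net get finer as $n\to\infty$.
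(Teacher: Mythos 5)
Your proposal follows essentially the same route as the paper: the converse is the standard Fano--Holevo argument applied with the $s$-independent averaged input state $\overline{\tau}$ (plus subadditivity and concavity of $I(\cdot,\cN_s)$ to single-letterize uniformly over $s$), and the direct part reduces entanglement-assisted coding to a compound classical-quantum channel via the Hsieh--Devetak--Winter encoder and invokes Mosonyi's universal compound cq codes. The only cosmetic differences are that the paper controls the encoder's effect through a Holevo-quantity approximation on frequency-typical subspaces (Lemma \ref{lemma:mutual_approx}) rather than trace-norm closeness to a product state, and that Mosonyi's theorem handles uncountable $S$ directly, so the net over the uncertainty set is only needed to upgrade average to maximal error, not to union-bound the achievability estimate.
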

   The inequality $\overline{C}_{EA}(\fI) \geq C_{EA}(\fI)$ in (\ref{theorem:comp_ea_capacity_1}) follows directly from the definition of the capacities. 
   The remaining inequalities will be shown to hold in Section \ref{subsect:prfs_compound}
   below.
    We point out that a so-called strong converse to EA message transmission capacity of compound quantum channels does not hold in general for each of the error criteria.  
   To formalize this statement, we define for each $n \in \bbmN$, $\lambda \in (0,1)$, $R_e < \infty$ 
   \begin{align}
    \overline{N}_{EA}(n,\fI,R_e,\lambda) 
    &:= \max\{M: \ \exists (n,L,M)-\text{EA code}\  \cC \ \text{for} \ \fI \ \text{such that} \ \overline{e}(n,\cC, \fI) \leq \lambda, \ L \leq 2^{nR_e}\}. \label{def:comp_max_message_card}
   \end{align}
   We define $N_{EA}$ analogously by replacing the average error with the maximal error function. A strong converse holds to the average error classical message transmission capacity of the compound quantum channel $\fI$ if the following 
   statement is true. For each $\fI$, $R_e < \infty$, it holds
    \begin{align}
     \forall \lambda \in (0,1): \hspace{.3cm} \limsup_{n \rightarrow \infty} \frac{1}{n}\log \overline{N}_{EA}(n, \fI, R_e,\lambda) \ \leq  
     \underset{\rho \in \cS(\cH_A)}{\sup} \underset{\cN \in \fI}{\inf} \ I(\rho, \cN). \label{average_error_compound_strong_converse}
    \end{align}
    The above statement says that the state-maximized worst-case channel mutual information is the best achievable rate, even, if the coding 
    procedures are not demanded to approach zero average transmission error asymptotically. We show the following claim. 
    \begin{claim}
     A strong converse to $\overline{C}_{EA}(\fI)$ or $C_{EA}(\fI)$ does not hold in general.
    \end{claim}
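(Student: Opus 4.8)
The strategy is to refute the strong converse~(\ref{average_error_compound_strong_converse}) for one explicitly constructed compound channel, and then to transfer the conclusion to the maximal error. The effect I would exploit is a strict gap between the individual entanglement-assisted capacities and the compound one, engineered from two qubit channels whose individually optimal input states are incompatible. Concretely, let $\cN_1$ be an amplitude damping channel $\cA_\gamma$ with a small fixed $\gamma\in(0,1)$, let $\cN_2$ be its conjugate by the bit flip, $\cN_2(\rho):=X\cA_\gamma(X\rho X)X$, and put $\fI:=\{\cN_1,\cN_2\}$. Since conjugation by $X$ carries an optimal entanglement-assisted input of $\cN_1$ to one of $\cN_2$, we have $C_{EA}(\{\cN_1\})=C_{EA}(\{\cN_2\})=:C^*$. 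It then suffices to prove (a) $\overline{C}_{EA}(\fI)=C_{EA}(\fI)<C^*$, and (b) $\limsup_{n\to\infty}\tfrac1n\log\overline{N}_{EA}(n,\fI,R_e,\lambda)\ge C^*$ for some $R_e<\infty$ and some $\lambda\in(0,1)$; together these contradict~(\ref{average_error_compound_strong_converse}).

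For (a): from $I(\rho,\cN_2)=I(X\rho X,\cN_1)$ and concavity of $\rho\mapsto I(\rho,\cN_1)$ one gets, for every $\rho$, $\min\{I(\rho,\cN_1),I(\rho,\cN_2)\}\le I\!\left(\tfrac12(\rho+X\rho X),\cN_1\right)$. The state $\sigma_\rho:=\tfrac12(\rho+X\rho X)$ is fixed by conjugation with $X$, hence its diagonal part in the computational basis equals $\tfrac12\bbmeins$; since $\cA_\gamma$ is covariant under the phase gates $e^{\mathrm i\theta Z}$, another application of concavity (averaging $\sigma_\rho$ over these gates) yields $I(\sigma_\rho,\cA_\gamma)\le I(\tfrac12\bbmeins,\cA_\gamma)$. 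Taking the supremum over $\rho$ and using the converse bound contained in Theorem~\ref{theorem:comp_ea_capacity}, $\overline{C}_{EA}(\fI)=\sup_\rho\min\{I(\rho,\cN_1),I(\rho,\cN_2)\}\le I(\tfrac12\bbmeins,\cA_\gamma)$. Finally $C^*=\max_\rho I(\rho,\cA_\gamma)$, and the maximiser may be taken diagonal in the computational basis by $Z$-covariance; a direct computation of $p\mapsto I(\mathrm{diag}(p,1-p),\cA_\gamma)$ shows that its maximiser differs from $\tfrac12$ for all small $\gamma$, so $I(\tfrac12\bbmeins,\cA_\gamma)<C^*$. With $\overline{C}_{EA}(\fI)=C_{EA}(\fI)$ from Theorem~\ref{theorem:comp_ea_capacity}, this gives (a).

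For (b): fix $\delta>0$ and, by the single-channel coding theorem (\cite{bennett02}, equivalently Theorem~\ref{theorem:comp_ea_capacity} with a one-element index set), take for every $n$ EA codes $\cC_i$ for $\cN_i$ with shared state $\Psi^{(i)}$, encoders $\cE^{(i)}_m$, decoding POVM $\{D^{(i)}_m\}$, $\tfrac1n\log M_i\ge C^*-\delta$, $\tfrac1n\log L_i\le R_{e,i}$, and maximal error tending to $0$. Assemble an EA code $\cC$ for $\fI$ as follows: message set $G_1\sqcup G_2$ with $|G_i|=\lfloor 2^{n(C^*-\delta)}\rfloor$ (a subset of $\cC_i$'s messages); shared state $\Psi^{(1)}\otimes\Psi^{(2)}$; for $m\in G_i$ the encoder is $\cE^{(i)}_m$, leaving the other code's input system unused; and the decoding POVM is the convex mixture $\tfrac12(\text{decode with }\cC_1)+\tfrac12(\text{decode with }\cC_2)$, where each $\cC_i$-decoder outputs a message of $G_i$ (defaulting to a fixed one on its reject outcome). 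A convex mixture of sub-POVMs is again one, so $\cC$ is an admissible $(n,L_1L_2,M)$ code with $\tfrac1n\log(L_1L_2)\le R_{e,1}+R_{e,2}=:R_e$. If the true channel is $\cN_1^{\otimes n}$, every $m\in G_1$ is decoded correctly whenever the $\cC_1$-branch of the mixture is realised, so its error is $\le\tfrac12+o(1)$, while the errors on $m\in G_2$ are bounded only by $1$; hence $\overline{e}(n,\cC,\fI)$ stays below $1$, e.g.\ at most $\tfrac34+o(1)$, and the same holds for $\cN_2^{\otimes n}$ by symmetry. Since $\tfrac1n\log M\ge C^*-\delta-o(1)$, first $n\to\infty$ and then $\delta\to0$ prove (b) for every $\lambda\in(\tfrac34,1)$. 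The maximal-error statement follows by expurgation: from such an average-error code one keeps the at least $(1-\sqrt\lambda)$-fraction of messages whose individual error is $\le\sqrt\lambda$, obtaining a maximal-error code of error $\le\sqrt\lambda<1$ at the same asymptotic rate, which likewise exceeds $\sup_\rho\inf_{\cN\in\fI}I(\rho,\cN)$.

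The principal obstacle is step (a): the whole argument depends on finding two channels whose individually optimal entanglement-assisted input states are genuinely incompatible — equivalently, on checking that the symmetry reduction in the compound optimisation is forced onto a state (here $\tfrac12\bbmeins$) that is strictly suboptimal for the component channels. Amplitude damping is well suited because its optimal input is a known, explicitly computable diagonal state that is not maximally mixed, and the bit flip together with the $e^{\mathrm i\theta Z}$-covariance pins the compound-optimal input down to $\tfrac12\bbmeins$; one must be careful that the covariance-and-concavity steps really collapse the symmetric inputs to the single state $\tfrac12\bbmeins$ rather than to a nontrivial family, and that the explicit suboptimality estimate is performed in a valid range of $\gamma$.
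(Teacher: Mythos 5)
Your average-error argument is a genuinely different route from the paper's and, as far as it goes, it is sound. The paper instead exhibits two explicit entanglement-breaking channels on $\bbmC^5$ whose images live on overlapping classical supports, and builds a purely combinatorial code of $2\cdot 3^n-1$ messages with average error below $\tfrac12$ (using no entanglement at all, $L=1$); the capacity gap $\sup_\rho\min_i I(\rho,\cN_i)<\log 3$ comes from the disjointness of the optimizing input sets. Your construction replaces this by amplitude damping and its bit-flip conjugate, with the gap forced by the $X$- and $e^{\mathrm i\theta Z}$-symmetrization onto $\tfrac12\bbmeins$, and the above-capacity code obtained by mixing the two single-channel decoders. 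Both steps of your part (a) are correct (the covariance/concavity reductions go in the right direction, and the derivative of $p\mapsto h(p)+h((1-\gamma)p)-h(\gamma p)$ at $p=\tfrac12$ is indeed negative for small $\gamma$, so $I(\tfrac12\bbmeins,\cA_\gamma)<C^*$), although you assert rather than carry out the one quantitative computation on which everything rests. Part (b) also works: the mixed decoder gives average error at most $\tfrac34+o(1)$ at rate $C^*-\delta$, which refutes (\ref{average_error_compound_strong_converse}) for any $\lambda\in(\tfrac34,1)$.

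The genuine gap is the final transfer to the maximal-error capacity. Your expurgation step fails here for two reasons. First, quantitatively: for a compound channel you need a single subset of messages that is good for \emph{both} channels simultaneously; Markov's inequality applied to each channel separately guarantees only a common fraction $1-2\sqrt{\lambda}$, which is negative for $\lambda>\tfrac14$, and your code has $\lambda>\tfrac34$. Second, and more fatally, in your specific code every message $m\in G_1$ has error close to $1$ under $\cN_2$ (the $\cC_2$-branch of the decoder always outputs an element of $G_2$, and the $\cC_1$-branch has no guarantee on $\cN_2$-outputs), and symmetrically for $G_2$; hence $\max_i f_i(m)\approx 1$ for \emph{every} message and no expurgation can extract a maximal-error code with error bounded away from $1$. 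The paper avoids this entirely with Lemma \ref{lemma:comp_av_max_equal}: an additional maximally entangled state of Schmidt rank $M$ lets sender and receiver implement a coordinated uniformly random cyclic relabeling of the messages, which converts any average-error EA code into a maximal-error EA code whose maximal error \emph{equals} the original average error, at every error level and with only a bounded increase in entanglement rate. You need this lemma (or an equivalent shared-randomness symmetrization) to conclude the claim for $C_{EA}(\fI)$; expurgation is not a substitute in this regime.
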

    To justify the above claim, we demonstrate that not general strong converse does hold to $\overline{C}_{EA}(\fI)$ by giving an explicit counterexample (Example \ref{nostrongconverse_example} in Section \ref{subsect:prfs_compound}). From this assertion,
    we conclude that also in case of the maximal error, no such statement is valid in general. Indeed, Lemma \ref{lemma:comp_av_max_equal} in Section \ref{subsect:prfs_compound} states that there is no essential difference between 
    the maximal and average error criteria due to free-rate entanglement assistance. 
    \end{subsection}

  \begin{subsection}{Arbitrarily varying channels} \label{subsect:definitions_avqc}
   In this subsection, we consider an AVQC generated by the set $\fI := \{\cN_s\}_{s \in S}$. 
   \begin{definition} \label{def:avqc_ea_codes}
   An \emph{$(n,L,M)$-code for (EA) message transmission} over the AVQC $\fI$ is a triple $\cC := (\Psi,\cE_m,D_m)_{m =1}^M$, where with additional Hilbert spaces 
   $\cK_A$, $\cK_B$ (under control of $A$,$B$)
   \begin{itemize}
    \item $\Psi \in \cS(\cK_A \otimes \cK_B)$ is a pure state,
    \item $\cE_m \in \cC(\cK_A, \cH_A^{\otimes n})$ \ for all \  $m \in [M]$, 
    \item $D_m \in \cL(\cH_B^{\otimes n} \otimes \cK_B)$, \ $0 \leq D_m \leq \bbmeins$ \ for all \ $m \in [M]$, with $\sum_{m=1}^M D_m \leq \bbmeins$, \ and
    \item $L := \dim \cK_A$.
   \end{itemize}
   Using the shortcut $D_m^c := \bbmeins - D_m$ for each $m\in [M]$, we define the functions
   \begin{align}
    \overline{e}_{av}(n,\cC, \fI) 	&:= \underset{s^n \in S^n}{\sup} \frac{1}{M} \sum_{m=1}^M \ \tr\left\{D^c_m(\cN_{s^n}\circ \cE_m \otimes \id_{\cK_B})(\Psi)\right\} \hspace{2cm} \text{(average error)}, \ \text{and} \\
    e_{av}(n,\cC, \fI) 		&:= \underset{s^n \in S^n}{\sup} \underset{m \in [M]}{\max} \ \tr\left\{D^c_m(\cN_{s^n}\circ \cE_m \otimes \id_{\cK_B})(\Psi)\right\} \hspace{2cm} \text{(maximal error)}.
   \end{align}
  \end{definition}
   \begin{definition} \label{def:avqc_ea_capacity}
   A number $R \geq 0$ is called an \emph{achievable rate for EA message transmission over the AVQC $\fI$ under average error criterion}, if we find a number $R_e < \infty$, such that for each $\epsilon > 0, \delta > 0$ there 
   exists a number $n_0 = n_0(\epsilon,\delta)$, 
   such that for each blocklength $n > n_0$ we find an $(n,L,M)$-code $\cC$ for EA message transmission over $\fI$ which has the following properties
   \begin{enumerate}
    \item $\frac{1}{n} \log M \geq R - \delta$
    \item $\overline{e}_{av}(n, \cC, \fI) \leq \epsilon$,\ and
    \item $\frac{1}{n} \log L \leq R_e$.
   \end{enumerate}
   We call 
   \begin{align*}
    \overline{C}^{AV}_{EA}(\fI) := \sup\{R \geq 0: \ R \ \text{ach. rate for EA message transmission over the AVQC} \ \fI \ \text{under av. error criterion} \}
   \end{align*}
   the \emph{average error entanglement-assisted message transmission capacity of the AVQC $\fI$}. 
  \end{definition}
  As in the case of compound quantum channels, the definition for achievable rates regarding the maximal error criterion can be easily guessed. We denote the corresponding capacity by $C_{EA}^{AV}(\fI)$.  
  \begin{remark}
   As opposed to the unassisted case, we abstain from providing separate definitions of the entangle- ment-assisted message transmission capacities in case that random coding procedures are 
   allowed for message transmission. Since additional entanglement can be used for coordinating random coding procedures, the deterministic and random capacities of arbitrarily 
   varying quantum channels match under entanglement assistance. Especially, phenomena as the so-called Ahlswede dichotomy known from classical \cite{ahlswede78} well as quantum \cite{ahlswede13} channel coding scenarios 
   without assistance do not arise in the present context.
  \end{remark}
  The following theorem is the second main result of this paper, and determines the EA classical message transmission capacities of AVQCs. 
  \begin{theorem} \label{theorem:avqc_ea_capacity}
   Let $\fI \subset \cC(\cH_A, \cH_B)$ be a set of c.p.t.p. maps. It holds
    \begin{align}
     \overline{C}_{EA}^{AV}(\fI) \ = \ C_{EA}^{AV}(\fI) \ = \ C_{EA}\left(\conv(\fI)\right) \  
     = \ \underset{\rho \in \cS(\cH_A)}{\sup} \ \underset{\cN \in \conv(\fI)}{\inf} \ I(\rho, \cN). \label{theorem:avqc_ea_capacity_1}
    \end{align}
  \end{theorem}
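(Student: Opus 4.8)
The plan is to derive the whole chain of equalities in (\ref{theorem:avqc_ea_capacity_1}) from Theorem~\ref{theorem:comp_ea_capacity}, applied to the compound channel generated by $\conv(\fI)$, together with quantum versions of Ahlswede's robustification and elimination techniques borrowed from \cite{ahlswede86,ahlswede13}. The rightmost equality is immediate, since $\conv(\fI)\subset\cC(\cH_A,\cH_B)$ is again a (possibly uncountable) family of c.p.t.p.\ maps, so Theorem~\ref{theorem:comp_ea_capacity} yields $C_{EA}(\conv(\fI))=\sup_{\rho\in\cS(\cH_A)}\inf_{\cN\in\conv(\fI)}I(\rho,\cN)$; and $\overline{C}^{AV}_{EA}(\fI)\ge C^{AV}_{EA}(\fI)$ follows directly from the definitions. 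For the converse bound I would observe that, for any $(n,L,M)$ code $\cC=(\Psi,\cE_m,D_m)$ for the AVQC and any $p\in\pr(S)$, averaging the message-$m$ error over $s^n$ with the memoryless distribution $p^{\otimes n}$ and using that $s^n\mapsto\cN_{s^n}$ is affine in each tensor factor gives
\begin{align}
 \tr\{D_m^c(\tilde{\cN}_p^{\otimes n}\circ\cE_m\otimes\id_{\cK_B})(\Psi)\} &= \sum_{s^n\in S^n}p^{\otimes n}(s^n)\,\tr\{D_m^c(\cN_{s^n}\circ\cE_m\otimes\id_{\cK_B})(\Psi)\}. \nonumber
\end{align}
Hence $\cC$ is simultaneously an $(n,L,M)$ code for the compound channel $\conv(\fI)$ with $\overline{e}(n,\cC,\conv(\fI))\le\overline{e}_{av}(n,\cC,\fI)$ and $e(n,\cC,\conv(\fI))\le e_{av}(n,\cC,\fI)$, so every achievable AVQC rate is achievable for that compound channel and Theorem~\ref{theorem:comp_ea_capacity} applies; together with $C^{AV}_{EA}(\fI)\le\overline{C}^{AV}_{EA}(\fI)$ this proves $C^{AV}_{EA}(\fI)\le\overline{C}^{AV}_{EA}(\fI)\le C_{EA}(\conv(\fI))$.

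For the achievability $C^{AV}_{EA}(\fI)\ge C_{EA}(\conv(\fI))$ I would fix $R<C_{EA}(\conv(\fI))$ and first treat finite $S$ with $|S|=K$. The proof of Theorem~\ref{theorem:comp_ea_capacity} builds EA codes on top of the universal classical-quantum codes of \cite{mosonyi15} and therefore inherits an exponentially small error; using in addition the maximal-error form of the compound coding theorem (cf.\ Lemma~\ref{lemma:comp_av_max_equal}) one obtains $(n,L,M)$ EA codes $\cC_n=(\Psi,\cE_m,D_m)$ for $\conv(\fI)$ with $\tfrac1n\log M\ge R$, $\tfrac1n\log L\le R_e$ for some fixed $R_e<\infty$, and $e(n,\cC_n,\conv(\fI))\le 2^{-\kappa n}$ for a suitable $\kappa>0$. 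Permuting the $n$ tensor factors of $\cH_A^{\otimes n}$ and $\cH_B^{\otimes n}$ by $\pi\in\fS_n$ turns $\cC_n$ into a code $\cC_n^\pi$ whose error on the realization $s^n$ equals the $\cC_n$-error on $\pi(s^n)$. Since $\{\tilde{\cN}_p^{\otimes n}:p\in\pr(S)\}$ is precisely the set of memoryless channels contained in the compound channel $\conv(\fI)$, the message-wise error functions of $\cC_n$ satisfy the hypothesis of the robustification lemma, so the random code $(\cC_n^\pi)_{\pi\in\fS_n}$ with uniform $\pi$ has error at most $(n+1)^K 2^{-\kappa n}\to 0$ for every $s^n\in S^n$ and every message. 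An elimination step follows: drawing $\pi_1,\dots,\pi_N$ independently and uniformly and controlling the deviation of the empirical mean by a Chernoff estimate against this exponentially small mean, with a union bound over the finitely many pairs $(s^n,m)$, shows that a number $N$ of permutations \emph{independent of the blocklength} already keeps the maximal error below $\epsilon$ uniformly in $s^n$. Finally one derandomizes this $N$-fold random code \emph{for free}: enlarging the pure assistance state $\Psi$ by a maximally entangled state of Schmidt rank $N$ and having $A$ and $B$ measure it in a fixed basis supplies them with a common index $j\in[N]$ that selects $\cC_n^{\pi_j}$; this costs only $\log N=O(1)$ additional e-bits, hence an asymptotically vanishing entanglement rate, so $\tfrac1n\log L$ stays below $R_e+o(1)$. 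The result is a deterministic EA code for the AVQC $\fI$ of rate $\ge R$ with maximal error $\le\epsilon$; letting $\epsilon,\delta\to0$ and then $R\uparrow C_{EA}(\conv(\fI))$ finishes the finite case. This also accounts for the remark preceding the theorem: the shared randomness a random AVQC code would require is itself a by-product of the freely available entanglement, so deterministic and random capacities coincide and no Ahlswede dichotomy occurs.

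The passage from finite to arbitrary (uncountable) $S$ is, I expect, the main obstacle. One must reduce to the finite case using that $\cC(\cH_A,\cH_B)$ is a compact metric space and that the capacity formula $\fI\mapsto\sup_\rho\inf_{\cN\in\conv(\fI)}I(\rho,\cN)$ is $D_\Diamond$-continuous — a consequence of uniform continuity of the quantum mutual information — and unchanged under passage to the closed convex hull. The naive device of replacing $\fI$ by a $\tau$-net fails here, because such a net injects an error of order $n\tau$ into the $n$ channel uses while at the same time inflating the cardinality $K$ that governs the robustification loss $(n+1)^K$, and the two effects cannot be balanced. The discretization therefore has to be arranged more carefully — e.g.\ through finite families whose convex hulls sandwich $\conv(\fI)$ inside $\cC(\cH_A,\cH_B)$, used together with the minimax identity $\sup_\rho\inf_{\cN\in\conv(\fI)}I(\rho,\cN)=\inf_{\cN\in\conv(\fI)}\sup_\rho I(\rho,\cN)$, which is valid because $I(\rho,\cN)$ is concave in $\rho$, convex in $\cN$, and $\cS(\cH_A)$ and $\overline{\conv}(\fI)$ are compact and convex — so as not to spoil the exponential error decay furnished by the compound codes. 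Combining the achievability bound thus obtained with the converse bound of the first paragraph then closes the chain of equalities in (\ref{theorem:avqc_ea_capacity_1}).
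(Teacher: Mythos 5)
Your overall route coincides with the paper's: the converse via $C_{EA}^{AV}(\fI)\le C_{EA}(\conv(\fI))$ and Theorem \ref{theorem:comp_ea_capacity}, and achievability for finite $S$ by feeding the exponential-error compound codes for $\conv(\fI)$ into robustification over tensor-factor permutations, eliminating down to a small family of permuted codes by a Chernoff/union-bound argument, and derandomizing by spending a negligible amount of the shared entanglement to distribute the common index (the paper, in Lemma \ref{lemma:finite avqc_coding}, keeps $K=\lceil 2^{nc/8}\rceil$ permutations so as to retain an exponentially small error, which still costs only constant entanglement rate; your constant-$N$ version suffices for the capacity statement). All of that part of your argument is sound.

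The genuine gap is the passage from finite to arbitrary $S$, which you correctly identify as the main obstacle but do not close. The minimax identity you invoke does not help here: achievability needs a concrete code for the uncountable AVQC, not a rewriting of the capacity formula. The paper's mechanism (Lemma \ref{lemma:arb_av_coding}) is the following. First compose every channel with a $\gamma$-depolarizing map $\cD_\gamma$ so that $\cD_\gamma(\overline{\conv(\fI)})$ has positive distance from the relative boundary of $\cC(\cH_A,\cH_B)$; only then does there exist a polytope $\fF_\gamma$ of \emph{channels} with $\cD_\gamma(\overline{\conv(\fI)})\subset\fF_\gamma\subset\cC(\cH_A,\cH_B)$ and $D_\Diamond$-distance $O(\gamma)$ (Lemma \ref{haussdorff_1}). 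One then runs the finite-$S$ construction for the AVQC generated by the finitely many extreme points $\fE_\gamma$ of $\fF_\gamma$, and absorbs $\cD_\gamma$ into the decoder via its Hilbert--Schmidt adjoint. The step that actually transfers the code to $\fI$ — and that is missing from your sketch — is the identity $\cD_\gamma\circ\cN_{s_i}=\sum_{e_i}q(e_i|s_i)\tilde\cN_{e_i}$, which tensorizes to show that the error of the code on any $s^n\in S^n$ is a $q^n(\cdot|s^n)$-average of its errors on sequences $e^n\in E_\gamma^n$, hence is dominated by the maximal error of the \emph{finite} AVQC code; this is why an outer polytope (rather than any inner net) is the right discretization. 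Finally, Fannes-type continuity controls the rate loss $f(2\gamma)\to0$. Without this averaging identity and the boundary perturbation that makes the enclosing polytope exist inside the channel set, your "sandwich" remains a statement about capacity formulas rather than about codes.
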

  As in the case of compound quantum channels, the inequality $\overline{C}_{EA}^{AV}(\fI) \ \geq \ C_{EA}^{AV}(\fI)$ is obvious from the definitions. The remaining statements of Theorem \ref{theorem:avqc_ea_capacity}
  are proven in detail in Section \ref{subsect:prfs_avqc} below.  \newline 
  From the characterization of the entanglement-assisted classical capacity for the AVQC model, we immediately obtain two important structural properties of these capacities. The first one (Corollary 
   \ref{corr:avqc_ea_additivity} below) is additivity of the entanglement-assisted message transmission capacities. The second (Corollary \ref{corr:avqc_cap_stab}) is stability of the entanglement-assisted 
   message transmission capacities of AVQCs under perturbation of the
   generating set. Both of the mentioned assertions are proven in Section \ref{subsect:prfs_avqc}.
   \begin{corollary}[Additivity of the AVQC EA message transmission capacities] \label{corr:avqc_ea_additivity}
   Let $\fI \subset \cC(\cH_A, \cH_B)$, $\fI' \subset \cC(\cH_A', \cH_B')$ be any two sets of c.p.t.p. maps, and $\fI \otimes \fI' := \{\cN \otimes \cN': \ \cN \in \fI,\ \cN' \in \fI'\}$. It holds
   \begin{align}
       \overline{C}_{EA}^{AV}(\fI) + \overline{C}_{EA}^{AV}(\fI') = \overline{C}_{EA}^{AV}(\fI \otimes \fI') = C_{EA}^{AV}(\fI \otimes \fI') = C_{EA}^{AV}(\fI) + C_{EA}^{AV}(\fI')
  \end{align}
  \end{corollary}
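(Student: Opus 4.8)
The plan is to split the asserted chain of equalities into a superadditivity bound, obtained by running two AVQC codes in parallel, and a subadditivity bound, read off from the single‑letter formula in Theorem~\ref{theorem:avqc_ea_capacity}; together with the identity $\overline{C}_{EA}^{AV}=C_{EA}^{AV}$ from that theorem these two bounds force all intermediate quantities to coincide.

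\noindent\textbf{Superadditivity (achievability).} First I would fix achievable rates $R$ over $\fI$ and $R'$ over $\fI'$ with entanglement‑rate bounds $R_e,R_e'$, and for large $n$ pick an $(n,L,M)$‑code $\cC=(\Psi,\cE_m,D_m)$ over $\fI$ and an $(n,L',M')$‑code $\cC'=(\Psi',\cE'_{m'},D'_{m'})$ over $\fI'$ with average error at most $\epsilon$. After the canonical reordering of tensor factors identifying the AVQC generated by $\fI\otimes\fI'$ at blocklength $n$ with $\cN_{s^n}\otimes\cN'_{t^n}$ acting on $\cH_A^{\otimes n}\otimes\cH_A'^{\otimes n}$, I form the product code with resource state $\Psi\otimes\Psi'$, encoders $\cE_m\otimes\cE'_{m'}$, and decoding operators $D_m\otimes D'_{m'}$; admissibility follows from $\sum_{m,m'}D_m\otimes D'_{m'}=(\sum_m D_m)\otimes(\sum_{m'}D'_{m'})\le\bbmeins$. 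Since every generating channel of $\fI\otimes\fI'$ is a product, the transmitted state factorizes and the operator inequality $\bbmeins-D_m\otimes D'_{m'}=(\bbmeins-D_m)\otimes\bbmeins+\bbmeins\otimes(\bbmeins-D'_{m'})-(\bbmeins-D_m)\otimes(\bbmeins-D'_{m'})\le D_m^c\otimes\bbmeins+\bbmeins\otimes D_{m'}'^{\,c}$ reduces the error on $\cN_{s^n}\otimes\cN'_{t^n}$ to the sum of the two individual errors; taking suprema over $(s^n,t^n)$ gives $\overline{e}_{av}\le 2\epsilon$, with message rate $\ge R+R'-2\delta$ and entanglement rate $\le R_e+R_e'$. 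Hence $\overline{C}_{EA}^{AV}(\fI\otimes\fI')\ge\overline{C}_{EA}^{AV}(\fI)+\overline{C}_{EA}^{AV}(\fI')$, and the same computation with the maximal error yields the corresponding statement for $C_{EA}^{AV}$.

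\noindent\textbf{Subadditivity (converse).} Here I would invoke Theorem~\ref{theorem:avqc_ea_capacity} to write $C_{EA}^{AV}(\fI\otimes\fI')=\sup_{\rho}\inf_{\cM\in\conv(\fI\otimes\fI')}I(\rho,\cM)$, and note that for all $p\in\pr(S)$, $p'\in\pr(S')$ the channel $\tilde{\cN}_p\otimes\tilde{\cN}'_{p'}=\sum_{s,t}p(s)p'(t)\,\cN_s\otimes\cN'_t$ is a convex combination of members of $\fI\otimes\fI'$, hence lies in $\conv(\fI\otimes\fI')$. Restricting the infimum to this subfamily and passing to the marginals $\rho_A,\rho_{A'}$ of $\rho$ gives, for each $\rho$,
\[
 \inf_{\cM\in\conv(\fI\otimes\fI')}I(\rho,\cM)\ \le\ \inf_{p,p'}\bigl(I(\rho_A,\tilde{\cN}_p)+I(\rho_{A'},\tilde{\cN}'_{p'})\bigr)\ =\ \inf_{\cN\in\conv(\fI)}I(\rho_A,\cN)+\inf_{\cN'\in\conv(\fI')}I(\rho_{A'},\cN'),
\]
provided the quantum mutual information is subadditive under tensoring of channels, i.e. $I(\rho,\cN_1\otimes\cN_2)\le I(\rho_A,\cN_1)+I(\rho_{A'},\cN_2)$. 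Taking the supremum over $\rho$ then yields $C_{EA}^{AV}(\fI\otimes\fI')\le C_{EA}^{AV}(\fI)+C_{EA}^{AV}(\fI')$, and chaining $\overline{C}_{EA}^{AV}(\fI)+\overline{C}_{EA}^{AV}(\fI')\le\overline{C}_{EA}^{AV}(\fI\otimes\fI')=C_{EA}^{AV}(\fI\otimes\fI')\le C_{EA}^{AV}(\fI)+C_{EA}^{AV}(\fI')=\overline{C}_{EA}^{AV}(\fI)+\overline{C}_{EA}^{AV}(\fI')$, with the outer equalities supplied by Theorem~\ref{theorem:avqc_ea_capacity}, collapses everything to equality.

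\noindent\textbf{Where the work is.} The step I expect to be the actual obstacle is the mutual‑information subadditivity $I(\rho,\cN_1\otimes\cN_2)\le I(\rho_A,\cN_1)+I(\rho_{A'},\cN_2)$. I would prove it by moving to the complementary‑channel representation $I(\rho,\cN)=S(\rho)+S(\cN(\rho))-S(\cN^c(\rho))$, which follows from Definition~(\ref{quant_mut_inf_def}) since $S(\cN\otimes\id(\ket{\psi}\bra{\psi}))=S(\cN^c(\rho))$ for a purification $\psi$ of $\rho$, using $(\cN_1\otimes\cN_2)^c=\cN_1^c\otimes\cN_2^c$, and reducing the desired inequality (after cancelling $S(\rho)$ against subadditivity of von Neumann entropy) to $I(E_1;E_2)_{\tau}\le I(A_1;A_2)_{\rho}+I(B_1;B_2)_{\sigma}$ with $\sigma=(\cN_1\otimes\cN_2)(\rho)$, $\tau=(\cN_1^c\otimes\cN_2^c)(\rho)$; the first right‑hand term dominates the left by monotonicity of mutual information under the local channels $\cN_1^c,\cN_2^c$, and the second term is nonnegative. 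The only other point that needs attention is that the marginals $\rho_A,\rho_{A'}$ entering the final estimate need not be optimizers for $\fI$ and $\fI'$, so the passage to $\sup_\rho$ must be performed after, not before, bounding by the respective maximizations.
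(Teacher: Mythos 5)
Your proposal is correct and follows essentially the same route as the paper: the $\geq$ direction via parallel composition of codes (which the paper dismisses as obvious from the operational definition) and the $\leq$ direction by combining Theorem~\ref{theorem:avqc_ea_capacity} with subadditivity of the quantum mutual information after restricting the infimum over $\conv(\fI\otimes\fI')$ to product convex combinations. The only cosmetic difference is that you re-derive the subadditivity $I(\rho,\cN_1\otimes\cN_2)\le I(\rho_A,\cN_1)+I(\rho_{A'},\cN_2)$ via complementary channels, whereas the paper simply invokes its Lemma~\ref{mutual_subadditivity} (Adami--Cerf), which already states exactly this.
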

  \begin{corollary}[Stability of the EA message transmission capacities] \label{corr:avqc_cap_stab}
  Let $\cH_A, \ \cH_B$ be Hilbert spaces. For each $\epsilon > 0$ exists a $\delta > 0$, such that for any two sets $\fI, \fI' \subset \cC(\cH_A, \cH_B)$ the 
  implication 
  \begin{align}
   D_\Diamond(\fI, \fI') < \delta \ \Rightarrow \ \left|C_{EA}^{AV}(\fI) - C_{EA}^{AV}(\fI')\right| < \epsilon
  \end{align}
   is true. The same statement holds with $C_{EA}^{AV}$ replaced by $\overline{C}_{EA}^{AV}$.
  \end{corollary}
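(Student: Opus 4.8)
The plan is to deduce the corollary from the single-letter characterization of Theorem~\ref{theorem:avqc_ea_capacity} together with continuity of the von Neumann entropy in fixed finite dimension. By that theorem, for every $\fI \subset \cC(\cH_A,\cH_B)$ we have
\[
 C_{EA}^{AV}(\fI) \ = \ \overline{C}_{EA}^{AV}(\fI) \ = \ \underset{\rho \in \cS(\cH_A)}{\sup}\ \underset{\cN \in \conv(\fI)}{\inf}\ I(\rho,\cN) \ =: \ \alpha(\fI),
\]
so it suffices to prove that the functional $\fI \mapsto \alpha(\fI)$ is uniformly continuous with respect to $D_\Diamond$, with a modulus of continuity depending only on $\dim\cH_A$ and $\dim\cH_B$; this settles both capacities at once.

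First I would observe that forming convex hulls is nonexpansive for the Hausdorff distance, i.e. $D_\Diamond(\conv(\fI),\conv(\fI')) \leq D_\Diamond(\fI,\fI')$. Indeed, an arbitrary element of $\conv(\fI')$ is a finite convex combination $\cN' = \sum_i \lambda_i\,\cN'_{s_i}$ with $\cN'_{s_i} \in \fI'$; since $\inf_{\cN \in \fI}\|\cN - \cN'_{s_i}\|_\Diamond \leq D_\Diamond(\fI,\fI')$ for each $i$, one can pick $\cN_{s_i} \in \fI$ with $\|\cN_{s_i} - \cN'_{s_i}\|_\Diamond$ arbitrarily close to that infimum, and then $\cN := \sum_i \lambda_i\,\cN_{s_i} \in \conv(\fI)$ satisfies $\|\cN - \cN'\|_\Diamond \leq \sum_i \lambda_i \|\cN_{s_i} - \cN'_{s_i}\|_\Diamond$ by the triangle inequality, which is $\leq D_\Diamond(\fI,\fI')$ up to an arbitrarily small slack; the symmetric estimate is identical. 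Since $\cC(\cH_A,\cH_B)$ is compact and $I(\rho,\cdot)$ is continuous, the infima in $\alpha(\fI)$ may be taken over the compact closure $\overline{\conv(\fI)}$ and are attained, and the Hausdorff bound above is unchanged under passing to closures.

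Next I would show that $\cN \mapsto I(\rho,\cN)$ is uniformly continuous in the diamond norm, with a modulus independent of $\rho$. Writing $I(\rho,\cN) = S(\rho) + S(\cN(\rho)) - S\big((\cN \otimes \id)(\ket{\psi}\bra{\psi})\big)$ with $\psi$ a purification of $\rho$ on $\cH_A \otimes \cH_{A'}$ where $\dim\cH_{A'} \leq \dim\cH_A$, the term $S(\rho)$ does not involve $\cN$, while the definition of the diamond norm gives $\|\cN(\rho) - \cN'(\rho)\|_1 \leq \|\cN - \cN'\|_\Diamond$ and $\|(\cN \otimes \id)(\ket{\psi}\bra{\psi}) - (\cN' \otimes \id)(\ket{\psi}\bra{\psi})\|_1 \leq \|\cN - \cN'\|_\Diamond$. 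Applying a continuity estimate for the von Neumann entropy (for instance the Fannes--Audenaert inequality) in the fixed dimensions $\dim\cH_B$ and $\dim\cH_B \cdot \dim\cH_A$ then yields $|I(\rho,\cN) - I(\rho,\cN')| \leq \omega(\|\cN - \cN'\|_\Diamond)$ for a function $\omega$, depending only on $\dim\cH_A$ and $\dim\cH_B$, with $\omega(t) \to 0$ as $t \to 0$.

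Finally, given $\epsilon > 0$ I would pick $\delta > 0$ with $\omega(\delta) < \epsilon$ and assume $D_\Diamond(\fI,\fI') < \delta$, hence $D_\Diamond(\overline{\conv(\fI)},\overline{\conv(\fI')}) < \delta$. For fixed $\rho$, let $\cN^\ast \in \overline{\conv(\fI')}$ attain $\inf_{\cN \in \conv(\fI')} I(\rho,\cN)$ and choose $\cN_\rho \in \overline{\conv(\fI)}$ with $\|\cN_\rho - \cN^\ast\|_\Diamond < \delta$; then $\inf_{\cN \in \conv(\fI)} I(\rho,\cN) \leq I(\rho,\cN_\rho) \leq I(\rho,\cN^\ast) + \omega(\delta) = \inf_{\cN \in \conv(\fI')} I(\rho,\cN) + \omega(\delta)$. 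Taking the supremum over $\rho$ gives $\alpha(\fI) \leq \alpha(\fI') + \omega(\delta)$, and by symmetry $|\alpha(\fI) - \alpha(\fI')| \leq \omega(\delta) < \epsilon$, as required. The one step that requires care is arranging the modulus of continuity of the quantum mutual information to be uniform in the input state; the rest is routine manipulation of Hausdorff distances, and I anticipate no real difficulty because finite-dimensionality fixes every dimension entering the entropy bounds.
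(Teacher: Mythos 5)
Your proof is correct and follows essentially the same route as the paper's: reduce to the single-letter formula of Theorem~\ref{theorem:avqc_ea_capacity}, establish a modulus of continuity for $\cN \mapsto I(\rho,\cN)$ in the diamond norm that is uniform in $\rho$ and depends only on the fixed dimensions, and combine this with the nonexpansiveness of $\conv(\cdot)$ under the Hausdorff distance $D_\Diamond$. The only cosmetic differences are that you apply a Fannes--Audenaert-type bound to the two entropy terms separately where the paper invokes the Alicki--Fannes inequality for the conditional entropy, and that you spell out the convex-hull Hausdorff estimate $D_\Diamond(\conv(\fI),\conv(\fI')) \leq D_\Diamond(\fI,\fI')$ which the paper uses implicitly.
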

   We demonstrate that opposed to what we stated in the last section in case of compound quantum channels, the average error as well as maximal error entanglement-assisted classical capacities obey a strong converse for each 
   AVQC. We define, for each $n \in \bbmN$, $\lambda \in (0,1)$, $R_e < \infty$
   \begin{align}
    \overline{N}_{EA}^{AV}(n,\fI,R_e,\lambda) 
    &:= \max\{M: \ \exists (n,L,M)-\text{EA code}\  \cC \ \text{for the AVQC} \ \fI \ \text{such that} \ \overline{e}_{av}(n,\cC, \fI) \leq \lambda, \ L \leq 2^{nR_e}\} \label{def:av_max_message_card}
   \end{align}
   A corresponding quantity $N^{AV}_{EA}$ is defined analogously by replacing the average error with the maximal error function. 
    \begin{theorem} \label{theorem:strong_converse}
     Let $\fI \subset \cC(\cH_A, \cH_B)$ be a set of c.p.t.p. maps. For each $R_e < \infty$, $\lambda < 1$, the claims 
     \begin{enumerate}
     \item $\hspace{.3cm} \underset{n \rightarrow \infty}{\limsup}\  \frac{1}{n}\log \overline{N}^{AV}_{EA}(n, \fI, R_e, \lambda) \ \leq  
     \ \underset{\rho \in \cS(\cH_A)}{\sup} \ \underset{\cN \in \conv(\fI)}{\inf} \ I(\rho, \cN)$, and
     \item $\hspace{.3cm} \underset{n \rightarrow \infty}{\limsup}\ \frac{1}{n}\log N^{AV}_{EA}(n, \fI, R_e, \lambda) \ \leq  
     \ \underset{\rho \in \cS(\cH_A)}{\sup} \ \underset{\cN \in \conv(\fI)}{\inf} \ I(\rho, \cN)$
     \end{enumerate}
     hold. i.e. a general strong converse holds to each of both capacities.
    \end{theorem}
    The above claims nearly immediately follow from a combination of the strong converse to the entanglement-assisted message transmission capacities for perfectly known memoryless quantum channels which is known to be valid 
    \cite{bennett14}, \cite{gupta14} and Theorem \ref{theorem:avqc_ea_capacity}. We give a short argument to prove Theorem \ref{theorem:strong_converse} in Section \ref{subsect:prfs_avqc}.
 \end{subsection}
\end{section}

\begin{section}{Proofs} \label{sect:prfs}
 \begin{subsection}{Compound memoryless channels} \label{subsect:prfs_compound}
   In this section, we prove Theorem \ref{theorem:comp_ea_capacity}. The following proposition asserts existence of codes sufficient for proving achievability therein. 
   \begin{proposition} \label{prop:comp_ea_coding}
    Let $\fI := \{\cN_s\}_{s \in S} \subset \cC(\cH_A,\cH_B)$. For each state $\rho \in \cS(\cH_A)$, and $\delta > 0$ there is a number
    $n_0 := n_0(\delta, \rho)$, such that for each $n > n_0$, we find an $(n,L,M)$-EA message transmission code $\cC$ 
    with $L \leq (\dim\cH_A)^n$, and the conditions 
    \begin{enumerate}
     \item $\frac{1}{n} \log M \ \geq \ \underset{s \in S}{\inf} \ I(\rho, \cN_s) - \delta$, and
     \item $e(n, \cC, \fI) \ \leq \ 2^{-nc}$
    \end{enumerate}
    being fulfilled with strictly positive constant $c := c(\delta, \rho)$.
   \end{proposition}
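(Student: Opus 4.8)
The plan is to follow the Hsieh--Devetak--Brun entanglement-modulation recipe~\cite{hsieh08} for the encoder and to use the universal codes for compound classical--quantum channels from~\cite{mosonyi15} for the decoder. Fix $\rho\in\cS(\cH_A)$, choose $\cH_R$ with $\dim\cH_R=\dim\cH_A$ and a unit vector $\ket{\psi}\in\cH_A\otimes\cH_R$ with $\tr_{\cH_R}(\ket{\psi}\bra{\psi})=\rho$, and for blocklength $n$ set $\cK_A:=\cH_A^{\otimes n}$, $\cK_B:=\cH_R^{\otimes n}$, $\Psi:=\ket{\psi}\bra{\psi}^{\otimes n}$; then $L=\dim\cK_A=(\dim\cH_A)^n$, so the dimension constraint is automatic and $\tfrac1n\log L=\log\dim\cH_A$ is a fixed finite number. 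Pick $\delta'>0$ small compared to $\delta$ and to the logarithms of the dimensions, let $\Pi$ be the $\delta'$-typical projector of $\rho^{\otimes n}$ on $\cH_A^{\otimes n}$ with $\tr(\rho^{\otimes n}\Pi)\ge 1-2^{-n\kappa}$ and $d_n:=\tr\Pi\in[2^{n(S(\rho)-\delta')},2^{n(S(\rho)+\delta')}]$, and let $\ket{\Phi}$ be the normalised vector proportional to $(\Pi\otimes\bbmeins)\ket{\psi}^{\otimes n}$; then $\ket{\Phi}\bra{\Phi}$ is $2^{-n\kappa'}$-close in trace norm to $\ket{\psi}\bra{\psi}^{\otimes n}$ and, all its Schmidt coefficients lying in $[2^{-n(S(\rho)+\delta')},2^{-n(S(\rho)-\delta')}]$, is trace-norm close to the maximally entangled state of Schmidt rank $d_n$ between the two typical subspaces. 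Fix a discrete Weyl system $\{W(a,b)\}_{a,b\in\bbmZ_{d_n}}$ on $\Pi\cH_A^{\otimes n}$ (so $\tfrac1{d_n^2}\sum_{a,b}W(a,b)\,X\,W(a,b)^{\dagger}=\tfrac{\tr X}{d_n}\Pi$ for all $X$) and define encoders $\cE_{a,b}(\sigma):=W(a,b)\Pi\sigma\Pi W(a,b)^{\dagger}+\tr((\bbmeins-\Pi)\sigma)\,\tau_0$ with a fixed $\tau_0\in\cS(\cH_A^{\otimes n})$; each $\cE_{a,b}$ is c.p.t.p.\ from $\cK_A$ to $\cH_A^{\otimes n}$.

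With these encoders the state at the receiver when $(a,b)$ is sent over $\cN_s^{\otimes n}$ equals, up to trace-norm error $O(2^{-n\kappa''})$ uniformly in $s$ and $(a,b)$, the state $\rho^{(s)}_{a,b}:=(\cN_s^{\otimes n}\otimes\id_{\cK_B})\big(W(a,b)\,\ket{\Phi}\bra{\Phi}\,W(a,b)^{\dagger}\big)$. This is exactly a classical--quantum channel with input set $\bbmZ_{d_n}\times\bbmZ_{d_n}$ and uniform input distribution, one such channel $s\mapsto\big((a,b)\mapsto\rho^{(s)}_{a,b}\big)$ for each $s\in S$, i.e.\ a compound cq-channel. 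Next I would compute its worst-case Holevo quantity. By the transpose trick $W(a,b)\otimes\bbmeins$ applied to $\ket{\Phi}$ can be replaced by a unitary acting on $\cK_B$ alone, which commutes with $\cN_s^{\otimes n}\otimes\id_{\cK_B}$; hence every $\rho^{(s)}_{a,b}$ is unitarily equivalent to $(\cN_s^{\otimes n}\otimes\id)(\ket{\Phi}\bra{\Phi})$ and, by closeness of $\ket{\Phi}$ to $\ket{\psi}^{\otimes n}$ and Fannes' inequality, has von Neumann entropy $n\,S\big((\cN_s\otimes\id)(\ket{\psi}\bra{\psi})\big)\pm 2^{-n\kappa''}$, the \emph{same} for all $(a,b)$. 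By the Weyl twirl the average state equals $\tfrac1{d_n^2}\sum_{a,b}\rho^{(s)}_{a,b}=\cN_s^{\otimes n}(\Pi/d_n)\otimes\omega_R$ with $\omega_R:=\tr_{\cH_A^{\otimes n}}(\ket{\Phi}\bra{\Phi})$, a product state; here $S(\omega_R)=\log d_n\pm O(n\delta')$, and standard typical-subspace estimates (using $\Pi\preceq 2^{n\delta'}d_n\rho^{\otimes n}$ on its range and $\Pi\rho^{\otimes n}\Pi\approx\rho^{\otimes n}$) give $S\big(\cN_s^{\otimes n}(\Pi/d_n)\big)=n\,S(\cN_s(\rho))\pm O(n\delta')$, so the average-state entropy is $n\big(S(\cN_s(\rho))+S(\rho)\big)\pm O(n\delta')$. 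Subtracting, the Holevo quantity is $n\,I(\rho,\cN_s)\pm O(n\delta')$ for each $s$ by the definition~(\ref{quant_mut_inf_def}), and its infimum over $s$ is $n\inf_{s\in S}I(\rho,\cN_s)-O(n\delta')$.

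It remains to extract the message set and the decoder. For this I would invoke the coding theorem for compound classical--quantum channels of~\cite{mosonyi15}: since the family $\{\rho^{(s)}_{a,b}\}_{s\in S}$ arises, up to the above exponentially small perturbation, from the memoryless extensions $\cN_s^{\otimes n}$ applied to the i.i.d.\ resource $\ket{\psi}\bra{\psi}^{\otimes n}$, that construction produces, for all $n$ larger than some $n_0(\delta,\rho)$, indices $(a_m,b_m)$ for $m\in[M]$ with $M=\lceil 2^{n(\inf_{s\in S}I(\rho,\cN_s)-\delta)}\rceil$ and a sub-POVM $(D_m)_{m=1}^M$ on $\cH_B^{\otimes n}\otimes\cK_B$ such that $\max_{m\in[M]}\sup_{s\in S}\tr\!\big(D_m^{c}\,\rho^{(s)}_{a_m,b_m}\big)\le 2^{-nc'}$ for some $c'=c'(\delta,\rho)>0$; what is essential here is that these codes are \emph{universal} over the (possibly uncountable) family indexed by $S$ and achieve \emph{exponentially small maximal} error. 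Setting $\cE_m:=\cE_{a_m,b_m}$ and absorbing the $O(2^{-n\kappa''})$ replacement error into the exponent yields an $(n,L,M)$-EA code $\cC=(\Psi,\cE_m,D_m)_{m=1}^M$ with $L\le(\dim\cH_A)^n$, $\tfrac1n\log M\ge\inf_{s\in S}I(\rho,\cN_s)-\delta$, and $e(n,\cC,\fI)\le 2^{-nc}$ for a suitable $c=c(\delta,\rho)>0$ (which stays positive once $\delta'$ is chosen small enough relative to $\delta$). This is the assertion.

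I expect the main obstacle to be the reduction step together with the attendant error bookkeeping: one has to check that the entanglement-modulation ensemble really meets the hypotheses of the compound cq-channel coding theorem of~\cite{mosonyi15} (this is where the transpose trick, the Weyl twirl, and the closeness of $\ket{\Phi}$ to a maximally entangled state carry the argument), and one has to control the cascade of typical-subspace approximations so that it costs at most $\delta$ in rate and keeps the error exponentially small. The conceptual core---that a rate of $I(\rho,\cN_s)$ per channel use is available for the fixed input state $\rho$ and fixed channel $\cN_s$---is the Hsieh--Devetak--Brun idea and reduces, exactly as in the memoryless single-channel case, to the Holevo-quantity computation above; the point genuinely specific to the present setting is that the decoder must be simultaneously reliable for every $s\in S$, which is precisely what the universal compound cq-channel codes of~\cite{mosonyi15} deliver.
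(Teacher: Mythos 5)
Your Holevo-quantity computation for the entanglement-modulated ensemble is sound, and the overall strategy (HDW-type encoders plus universal compound cq codes from \cite{mosonyi15}) is the same one the paper uses. But there is a genuine gap at the step where you extract the code: the object you hand to the coding theorem of \cite{mosonyi15} is not of the type that theorem accepts. Proposition~\ref{prop:comp_cq_codes} (Theorem IV.18 of \cite{mosonyi15}) concerns a \emph{fixed} cq channel $W_s:\cX\to\cS(\cK)$ with a finite, blocklength-independent alphabet $\cX$ and fixed output space $\cK$, used memorylessly; it produces codewords in $\cX^a$ against the product outputs $W_s^{\otimes a}$, and both the rate guarantee and the exponential error decay are asymptotic statements in the number $a$ of uses of that fixed channel. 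Your construction instead produces, for each $n$, a \emph{one-shot} ensemble $(a,b)\mapsto\rho^{(s)}_{a,b}$ with an alphabet $\bbmZ_{d_n}\times\bbmZ_{d_n}$ whose size grows like $2^{2nS(\rho)}$ and an output space $\cH_B^{\otimes n}\otimes\cK_B$ that also grows with $n$; there is no memoryless structure left over which to run the compound coding theorem. What you would actually need here is a one-shot packing lemma that is simultaneously reliable for every $s$ in a possibly uncountable set $S$ -- and that universality is precisely the hard part that \cite{mosonyi15} solves only in the memoryless regime. So the sentence ``this is exactly a compound cq-channel \dots that construction produces indices $(a_m,b_m)$'' is where the argument breaks.

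The paper closes exactly this gap by a two-level blocking that your proposal omits. It fixes an inner blocklength $k$ and builds the Hsieh--Devetak--Winter unitary ensemble $\{\tilde\cE_x\}_{x\in\cX}$ on the typical subspaces of $\cH_A^{\otimes k}$ (one orthonormal unitary basis per type class, Appendix~\ref{appendix:encoding}); this yields, for each $s$, a genuine cq channel $V_s:\cX\to\cS((\cH_B\otimes\cH_B)^{\otimes k})$ with a \emph{fixed finite} alphabet, whose Holevo quantity under the uniform distribution is $kI(\rho,\cN_s)$ up to $2d\log(k+1)$ (Lemma~\ref{lemma:mutual_approx}). Writing $n=ka+b$, it then applies Proposition~\ref{prop:comp_cq_codes} to the memoryless compound cq channel $\{V_s^{\otimes a}\}_{s\in S}$ and sets $\cE_m:=\tilde\cE_{u_{m,1}}\otimes\cdots\otimes\tilde\cE_{u_{m,a}}\otimes\id^{\otimes b}$. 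The loss $2d\log(k+1)/k$ from the inner block is absorbed into $\delta$ by choosing $k$ large but fixed. Your single-block Weyl-twirl computation is essentially the $k=n$ special case of Lemma~\ref{lemma:mutual_approx}, but taken at $k=n$ it destroys the memoryless structure needed for the universal decoder. (A secondary, fixable omission: the maximal-error form of the compound cq coding theorem is not what \cite{mosonyi15} states; the paper derives it by a net-plus-expurgation argument, which you would also need.)
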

   The proof of Proposition \ref{prop:comp_ea_coding} given below utilizes coding schemes for compound classical-quantum (cq) channels which we borrow from \cite{mosonyi15}. 
   We first provide the definitions necessary to understand the claim of Proposition \ref{prop:comp_cq_codes} below. A classical-quantum channel 
   (with input finite alphabet $\cX$ and output Hilbert space $\cK$) is a map from $\cX$ to the set of density matrices on $\cK$. The memoryless cq channel generated by a cq-channel map
   $W: \cX \rightarrow \cS(\cK)$, is given by the family $\{W^{\otimes n}: n \in \bbmN\}$, where for each $n  \in \bbmN$, the map $W^{\otimes n}: \cX^n \rightarrow \cS(\cK^{\otimes n})$ is defined 
   by 
   \begin{align*}
    W^{\otimes n}(x^n) := W(x_1) \otimes \cdots \otimes W(x_n) && (x^n = (x_1,\dots, x_n) \in \cX^n).
   \end{align*} 
   For each cq channel $W:\ \cX \rightarrow \cS(\cK)$ and probability distribution $p \in \cP(\cX)$ we define the \emph{Holevo quantity} of $(p,W)$ by 
   \begin{align*}
    \chi(p,W) := S\left(\sum_{x \in \cX} \ p(x) \ W(x) \right) - \sum_{x \in \cX} \ p(x) \ S(W(x)).
   \end{align*}
   An $(n, M)$ message transmission code for $W$ is a family $(u_m,D_m)$,
   where $u_1,\dots,u_m \in \cX^n$, and $D_1,\dots,D_M \in \cL(\cK^{\otimes n})$, $0 \leq D_m \leq \bbmeins$, $m \in [M]$, and $\sum_{m=1}^M D_m \leq \bbmeins$. Proposition \ref{prop:comp_cq_codes} 
   states existence of universal capacity approaching codes with super-polynomial decrease of error for compound classical-quantum channels. For more detailed definitions on classical message transmission
   over compound cq channels, the reader is referred to \cite{mosonyi15}.
   \begin{proposition}[\cite{mosonyi15}, Theorem IV.18] \label{prop:comp_cq_codes}
    Let $\fW := \{W_s: \cX \rightarrow \cS(\cK)\}_{s \in S}$ be a set of cq channels, and $p \in \fP(\cX)$ be a probability distribution. 
    For each $\eta > 0$ there exists a number $n_0 := n_0(\eta)$, such that for each $n > n_0$ we find an $(n,M)$ message transmission 
    code $\cC := (u_m, D_m)_{m \in [M]}$ with 
    \begin{enumerate}
     \item $\frac{1}{n} \log M  \geq \underset{s \in S}{\inf} \ \chi(p,W_s) - \eta$, and
     \item $\sup_{s \in S} \ \max_{m \in [M]} \  \tr(D_m^cW_s^{\otimes n}(u_m)) \leq 2^{-n\tilde{c}}$,
    \end{enumerate}
    where $\tilde{c} := \tilde{c}(\eta,p) >0 $ is a constant.
   \end{proposition}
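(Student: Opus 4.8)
The plan is to prove the proposition by a universal random-coding argument: draw codewords from $p^{\otimes n}$, decode with a single measurement built from the method of types that does not depend on the unknown parameter $s$, and control the ensemble-averaged error by the Hayashi--Nagaoka operator inequality. A net-and-continuity argument then upgrades the per-parameter bound to one codebook that is uniformly good over the whole (possibly uncountable) set $S$, and a final expurgation converts the average error into the maximal error demanded in the statement. Concretely, I would fix the given $\eta > 0$, put $M := \lfloor 2^{n(\inf_{s} \chi(p,W_s) - \eta)}\rfloor$, and draw $M' := 2M$ codewords $U_1,\dots,U_{M'}$ independently with distribution $p^{\otimes n}$ on $\cX^n$.

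The only genuinely compound feature is the construction of a decoding POVM $(D_m)$ that serves all $W_s$ simultaneously, so I would build it from universal typical projectors of the method of types for quantum states: a projector $\Pi$ onto the universally typical subspace of the output, which captures weight $1 - \mathrm{poly}(n)2^{-nc}$ of \emph{every} product output state of bounded entropy, together with universal conditionally typical projectors $\Pi_m$ attached to each codeword. Setting $\Lambda_m := \Pi\,\Pi_m\,\Pi$ and forming the pretty-good (square-root) measurement
\begin{align}
 D_m := \Big(\sum_{m'=1}^{M'}\Lambda_{m'}\Big)^{-1/2}\Lambda_m\Big(\sum_{m'=1}^{M'}\Lambda_{m'}\Big)^{-1/2}, \qquad m \in [M'], \nonumber
\end{align}
yields operators that depend only on $p$, on the codewords, and on the output dimension, but never on $s$ --- this $s$-independence is precisely what later makes the same codes reusable in Proposition \ref{prop:comp_ea_coding}.

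For the error I would invoke the Hayashi--Nagaoka inequality $\bbmeins - (A+B)^{-1/2}A(A+B)^{-1/2} \leq 2(\bbmeins - A) + 4B$ with $A = \Lambda_m$ and $B = \sum_{m'\neq m}\Lambda_{m'}$, splitting the error of message $m$ into a capture term $2\,\tr\big((\bbmeins - \Lambda_m)\,W_s^{\otimes n}(u_m)\big)$ and a cross-talk term $4\sum_{m'\neq m}\tr\big(\Lambda_{m'}\,W_s^{\otimes n}(u_m)\big)$. Because $\Lambda_m$ is assembled from universal projectors, the capture term is at most $\mathrm{poly}(n)2^{-nc}$ for every $s$ at once whenever $u_m$ is of typical type. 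Taking the codebook expectation and using independence of the $U_{m'}$, the cross-talk expectation is governed by $\tr\big(\bar\Lambda\,\overline{W}_s^{\otimes n}\big)$ with $\bar\Lambda := \expe\,\Lambda_{U_1}$ and $\overline{W}_s := \sum_{x} p(x) W_s(x)$; the dimension and eigenvalue bounds of the universal projectors then produce a factor of order $2^{-n(\chi(p,W_s) - \delta)}$, which for each $s$ is at most $2^{-n(\inf_{s'}\chi(p,W_{s'}) - \delta)}$. With the chosen $M$ and any $\delta < \eta$ the expected per-parameter average error is therefore at most $2^{-n\hat c}$ with $\hat c := \hat c(\eta,p) > 0$, uniformly over $S$.

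Finally I would close the two remaining gaps --- passing from a per-parameter expectation to one codebook good for all $s$, and from average to maximal error --- by a net argument together with expurgation. Choose a finite $\tau_n$-net $S_{\tau_n} \subset S$ in diamond norm with $\tau_n = 2^{-n\beta}$, whose cardinality is at most $\mathrm{poly}(n)2^{n\beta d}$ for a dimension constant $d$; summing the uniform expectation bound over the net and applying Markov's inequality produces a fixed codebook whose message-averaged worst-case error across the net is at most $|S_{\tau_n}|\,2^{-n\hat c}$. Discarding the $M$ messages of largest net-error leaves $M$ messages each with net-error at most twice that value, and the telescoping trace-norm estimate $\|W_s^{\otimes n}(u_m) - W_{s'}^{\otimes n}(u_m)\|_1 \leq n\,\|W_s - W_{s'}\|_\Diamond$ carries the bound from the net to all $s \in S$ at the cost of an additive $n\tau_n \to 0$. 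Taking $\beta$ small enough that $\beta d < \hat c$ keeps the exponent strictly positive, giving property (2) with a suitable $\tilde c > 0$, while property (1) holds by construction up to the single-bit loss from the expurgation. The step I expect to be the crux is the universal-decoder analysis: one must check that a single type-based measurement both nearly supports every output state $W_s^{\otimes n}(x^n)$ and has size controlled by the \emph{worst-case} Holevo quantity $\inf_{s}\chi(p,W_s)$, so that capture and cross-talk stay uniform in $s$ at the correct rate; an attractive alternative that dispenses with the net altogether is Mosonyi's route through the sandwiched quantum Rényi divergences, where the infinite parameter set is absorbed directly into a Rényi radius and the limit $\alpha \to 1$ recovers the Holevo rate.
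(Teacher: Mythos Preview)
Your argument is correct, but it does considerably more work than the paper's proof. The paper treats the average-error version of the statement as a black box: it cites Theorem IV.18 of \cite{mosonyi15}, which already delivers codes $(u_m,D_m)_{m=1}^{M_n}$ with $\frac{1}{n}\log M_n \geq \inf_s \chi(p,W_s) - \eta$ and $\sup_{s}\frac{1}{M_n}\sum_m \tr(D_m^c W_s^{\otimes n}(u_m)) \leq 2^{-n\hat c}$, and the entire proof in the paper is just the upgrade from average to maximal error. That upgrade is essentially the same as your final step: pass to a finite $\delta_n$-net $S_n \subset S$, bound the average error of the mixture channel $\frac{1}{|S_n|}\sum_{s\in S_n}W_s^{\otimes n}$ by $2^{-n\hat c}$, expurgate down to a subcode of size $\geq M_n/2$ whose maximal error over $S_n$ is at most $|S_n|\cdot 2^{-n\hat c}$, and finally choose $\delta_n$ so small that the net bound extends to all of $S$ while the exponent stays positive.

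What you do differently is to reprove Mosonyi's result from scratch, via random coding with the pretty-good measurement built from universal (type-based) typical projectors and the Hayashi--Nagaoka inequality. This is a legitimate route --- essentially the Hayashi/Bjelakovi\'c--Boche approach rather than Mosonyi's R\'enyi-divergence method --- and your error analysis (capture term uniform in $s$ by universality of the projectors, cross-talk bounded by $M'\cdot 2^{-n(\inf_s\chi(p,W_s)-\delta)}$) is sound. The payoff of your approach is self-containment and an explicit universal decoder; the paper's approach buys brevity by outsourcing all of this and focusing only on the average-to-maximal conversion, which is the part not literally stated in \cite{mosonyi15}.
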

   \begin{proof}
    The statement is the same as in the above cited Theorem IV.18 from \cite{mosonyi15}, except that therein the average instead of the maximal decoding error was considered. That the statement also
    holds with maximal error can be easily shown by standard methods, which we shortly indicate. \newline 
    Chosse for each $n \in \bbmN$, a $\delta_n$-net $S_{n,\delta_n} \subset S$ in $S$, i.e. a subset of $S$ according to Lemma 2.2 in \cite{mosonyi15}. Let $\cC_n := \{u_m,D_m\}_{m=1}^{M_{n}}$ be an $(n,M)$ message 
    transmission code with average error 
    \begin{align}
     \sup_{s \in S} \ \frac{1}{M_n} \  \sum_{m=1}^{M_n} \tr(D_m^cW_s^{\otimes n}(u_m)) \leq 2^{-n\hat{c}}
    \end{align}
    with some constant $\hat{c} > 0$. Clearly,  
    \begin{align}
    \frac{1}{M_n} \  \sum_{m=1}^{M_n} \tr\left(D_m^c \tfrac{1}{|S_n|}\sum_{s \in S_n} W_s^{\otimes n}(u_m)\right) \leq 2^{-n\hat{c}} 
    \end{align}
    also holds. By a standard argument, we can select a subcode $\tilde{\cC}_n = (u_m,D_m)_{m \in \cM'_n}$ of $\cC_n$ with $|\cM'_n| \geq \frac{M_n}{2}$ and maximal error
    \begin{align}
    \sup_{s \in S_n} \sup_{m \in \cM'}  \tr\left(D_m^c W_s^{\otimes n}(u_m)\right) \leq |S_n|\cdot 2^{-n\hat{c}}. 
    \end{align}
    If we now choose $\delta_n := \exp(-n\hat{c}/(2|\cX|\cdot \dim\cK^2))$, $n \in \bbmN$, and set $\tilde{c} = \hat{c}/2$, we have by properties of the sets $S_n$
    \begin{align}
    \sup_{s \in S} \sup_{m \in \cM'}  \tr\left(D_m^c W_s^{\otimes n}(u_m)\right) \leq  2^{-n\tilde{c}}
    \end{align}
    for each large enough $n \in \bbmN$.
   \end{proof}

   The second ingredient to the proof of Proposition \ref{prop:comp_ea_coding} is an encoding construction introduced in \cite{hsieh08} for proving the coding theorem for the entanglement-assisted
   classical capacities of quantum multiple access channels. In Appendix \ref{appendix:encoding} we review the definitions and some properties known from \cite{hsieh08}.
   The following lemma provides universal approximations of certain quantum mutual information quantities arising from the mentioned encoding maps by Holevo quantities of certain effective classical-quantum channels. 
   \begin{lemma} \label{lemma:mutual_approx}
    Let $\cH \simeq \bbmC^d$, $\sigma \in \cS(\cH)$ be a state, and
    \begin{align*}
     \psi := \sum_{i=1}^{d} \sqrt{\alpha_i}\  \gamma_i \otimes \gamma_i,
    \end{align*}
    the Schmidt decomposition of a state vector of a purification of $\sigma$ with $\{\gamma_i: 1 \leq i \leq d\}$ being an orthonormal basis of eigenvectors of $\sigma$, 
    and $\alpha_1,\dots,\alpha_d$ the eigenvalues (counting zero eigenvalues).
    Let $k \in \bbmN$. There is a family $\{\tilde{\cE}_x\}_{x \in \cX} \in \cC(\cH^{\otimes k}, \cH^{\otimes k})$, such that for each Hilbert space 
    $\cK$, and each channel 
    $\cN \in \cC(\cH,\cK)$, with $q_\ast$ being the equidistribution on $\cX$, and the cq channel $V$ being defined by 
    \begin{align*}
     V(x) := \cN^{\otimes k}\circ \tilde{\cE}_{x} \otimes \id_{\cH}^{\otimes k}(\ket{\psi}\bra{\psi}^{\otimes k}) &&(x \in \cX),
    \end{align*}
    the inequality 
    \begin{align*}
     \left| k\cdot I(\sigma, \cN) - \chi(q_\ast, V) \right| \leq 2d \cdot \log(k+1)
    \end{align*}
    holds.
   \end{lemma}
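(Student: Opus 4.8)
The plan is to exhibit the family $\{\tilde{\cE}_x\}_{x\in\cX}$ explicitly using the Hsieh--Devetak--Winter-type encoding construction from Appendix~\ref{appendix:encoding}, and then to estimate $\chi(q_\ast,V)$ against $k\cdot I(\sigma,\cN)$ by a type-counting argument. The encoding maps are built from the Schmidt basis of $\psi$: one takes discrete Weyl (generalized Pauli) operators $W(a,b)$ acting on $\cH^{\otimes k}$ in the eigenbasis $\{\gamma_i\}$, together with a type-dependent permutation, so that applying $\tilde{\cE}_x$ to $A$'s half of $\ket{\psi}\bra{\psi}^{\otimes k}$ and tracing out $B$'s half yields, on average over a uniformly random $x$, the maximally mixed state adapted to the type classes of $\sigma^{\otimes k}$. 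The key structural fact I would invoke from \cite{hsieh08} is that the output ensemble $\{q_\ast(x),V(x)\}_{x\in\cX}$ has average state $\bar V = \sum_x q_\ast(x)V(x)$ close to $(\cN\otimes\id)(\ket{\psi}\bra{\psi})^{\otimes k}$ up to the coarse-graining over types, and that each $V(x)$ is a unitary rotation of $(\cN\otimes\id)(\ket{\psi}\bra{\psi})^{\otimes k}$ composed with such a permutation, hence has the same spectrum as $(\cN\otimes\id)(\ket{\psi}\bra{\psi})^{\otimes k}$ up to the type block decomposition.

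The main computation then has two halves. First, $\sum_x q_\ast(x) S(V(x))$: since each $\tilde{\cE}_x$ is, up to a fixed channel $\cN^{\otimes k}\otimes\id$, a unitary conjugation followed by a permutation acting \emph{within} each type class, each $V(x)$ is block-diagonal across the type classes of the reference system with blocks that are unitarily equivalent to the corresponding blocks of $(\cN\otimes\id)(\ket{\psi}\bra{\psi})^{\otimes k}$. Hence $S(V(x))$ differs from $S\big((\cN\otimes\id)(\ket{\psi}\bra{\psi})^{\otimes k}\big) = k\cdot S(\cN\otimes\id(\ket{\psi}\bra{\psi}))$ only by the entropy of the classical type distribution, which is at most $\log|\{\text{types}\}|\le d\log(k+1)$. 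Second, $S(\bar V)$: the averaged state $\bar V$ is the type-projected version of $(\cN\otimes\id)(\ket{\psi}\bra{\psi})^{\otimes k}$, i.e. $\bar V = \sum_\tau \Pi_\tau\,(\cN\otimes\id)(\ket{\psi}\bra{\psi})^{\otimes k}\,\Pi_\tau$ where $\Pi_\tau$ projects onto the type-$\tau$ subspace of the reference system; since $\sum_\tau\Pi_\tau(\sigma^{\otimes k})\Pi_\tau=\sigma^{\otimes k}$ exactly (the $\gamma_i$ are eigenvectors of $\sigma$), we get $S(\bar V)$ is within $\log|\{\text{types}\}|\le d\log(k+1)$ of $S\big((\cN\otimes\id)(\ket{\psi}\bra{\psi})^{\otimes k}\big)$ as well. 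Wait — I should be careful: $S(\bar V) = S\big(\cN^{\otimes k}(\bar\sigma_{\text{typed}})\big)$ where after tracing out reference, $\bar\sigma_{\text{typed}}=\sigma^{\otimes k}$ exactly, so actually $\cN^{\otimes k}(\sigma^{\otimes k})$ appears cleanly; the type smearing only costs entropy on the reference leg, bounded by $\log|\{\text{types}\}|$.

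Assembling: $\chi(q_\ast,V)=S(\bar V)-\sum_x q_\ast(x)S(V(x))$, and plugging in the three estimates $S(\sigma)$ from the reference marginal, $S(\cN^{\otimes k}(\sigma^{\otimes k}))=k\,S(\cN(\sigma))$ for the output marginal, and $k\,S((\cN\otimes\id)(\ket{\psi}\bra{\psi}))$ for the joint, each perturbed by at most $d\log(k+1)$, and recalling $k\cdot I(\sigma,\cN)=k\,S(\sigma)+k\,S(\cN(\sigma))-k\,S((\cN\otimes\id)(\ket{\psi}\bra{\psi}))$, the three error terms add up to at most $2d\log(k+1)$ (two of the three type-entropy corrections combine since the joint-entropy terms enter $\chi$ and $kI$ with opposite signs). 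The main obstacle I anticipate is bookkeeping the exact form of the $\tilde{\cE}_x$ from \cite{hsieh08} so that (a) $\bar V$ really is the type-projected channel output with \emph{no} error on the already-eigenbasis-diagonal reference marginal, and (b) the per-letter states $V(x)$ are genuinely isospectral to the untwisted output within each type block — i.e. verifying that the permutation part of the encoding acts only inside type classes and the Weyl part is a genuine unitary, so that the entropy defects are purely the $\log(k+1)^d$ type-counting terms and nothing else. Once the structure of the encoding is pinned down, the entropy estimates are routine via subadditivity and the bound $|\{\text{types of }\cX^k\}|\le(k+1)^{|\cX|}$ applied with $|\cX|$ governed by $d$.
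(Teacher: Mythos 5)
Your choice of construction is the right one --- the block-unitary encoding of Appendix~\ref{appendix:encoding} is exactly what the paper uses, and your observation that each $V(x)$ is isospectral to $(\cN^{\otimes k}\otimes\id)(\Psi^{\otimes k})$ (because the block-diagonal $u_x$ commutes with $\sigma^{\otimes k}$, so the twisted state is a purification of the same marginal and hence a unitary rotation on the reference leg) is the correct reason why the subtracted term in $\chi(q_\ast,V)$ equals $k\,S(\cN\otimes\id(\Psi))$ \emph{exactly}, with no type correction. But the treatment of $S(\bar V)$ contains a genuine error. The averaged state is \emph{not} the type-pinched state $\sum_\tau \Pi_\tau\,(\cN\otimes\id)(\Psi)^{\otimes k}\,\Pi_\tau$: pinching the reference system onto type subspaces leaves a subnormalized \emph{maximally entangled} state on each block $\cH_\tau\otimes\cH_\tau$, whereas the twirl over a Hilbert--Schmidt orthogonal unitary basis inside each block fully decoheres it into the \emph{product} $\pi_\tau\otimes\pi_\tau$ of maximally mixed states (this is (\ref{encode_1})). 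The difference is not a $\log|\fT_k|$-sized bookkeeping term: the pinched state has entropy close to $k\,S(\cN\otimes\id(\Psi))$, so your estimate would force $\chi(q_\ast,V)\lesssim d\log(k+1)$, i.e.\ essentially zero, contradicting the lemma. The decoherence into product blocks is precisely what pushes $S(\bar V)$ up to $S(\cN(\sigma)^{\otimes k})+S(\sigma^{\otimes k})$, and it is the whole point of the construction.

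Relatedly, your final error accounting does not close. You assign an error of $d\log(k+1)$ to each of three entropy terms and then assert that two of them ``combine'' to land at $2d\log(k+1)$; there is no cancellation mechanism that does this, since the perturbations need not have matching signs. In the paper's proof the budget is spent differently: the joint-entropy term carries \emph{zero} error (by the isospectrality you already noted), and both factors of $\log|\fT_k|\le d\log(k+1)$ arise inside the single estimate of $S(\bar V)$ --- one from almost-convexity applied to $\sum_\tau\alpha^k(T_\tau^k)\,S(\cN^{\otimes k}(\pi_\tau))$ versus $S(\cN^{\otimes k}(\sigma^{\otimes k}))$, and one from the same argument applied to $\sum_\tau\alpha^k(T_\tau^k)\,S(\pi_\tau)$ versus $S(\sigma^{\otimes k})$, using that $\sum_\tau\alpha^k(T_\tau^k)\,\pi_\tau=\sigma^{\otimes k}$. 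Once you replace your description of $\bar V$ with (\ref{encode_1}) and redo the bound along these lines, the $2d\log(k+1)$ constant comes out cleanly.
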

   \begin{proof}
    See Appendix \ref{appendix:encoding}.
   \end{proof}
    \begin{proof}[Proof of Proposition \ref{prop:comp_ea_coding}]
     Define $d_A := \dim \cH_A,\ d_B := \dim \cH_B$. Fix a state $\sigma \in \cS(\cH_A)$, and a number $\delta > 0$, such that 
     \begin{align*}
      \underset{s \in S}{\inf} \ I(\sigma, \cN_s) - \delta > 0, 
     \end{align*}
      otherwise there is nothing to prove. Let $k \in \bbmN$ 
     be large enough to suffice the inequality 
     \begin{align}
      \frac{1}{k+1}\left(2 d_A \log (k+1) + \log(d_A \cdot d_B)\right) \leq \frac{\delta}{2}. \label{k_bound}
     \end{align}
     Let $\sigma := \sum_{i=1}^{d_A} \alpha_i \ket{\gamma_i}\bra{\gamma_i}$ be a spectral decomposition of $\sigma$,  
     \begin{align*}
      \psi := \sum_{i=1}^{d_A} \sqrt{\alpha_i} \ \gamma_i \otimes \gamma_i
     \end{align*}
     be a state vector of a purification of $\sigma$, and set $\Psi := \ket{\psi}\bra{\psi}$. Define, for each $s \in S$, the cq channel $V_s: \cX \rightarrow \cS((\cH_B \otimes \cH_B)^{\otimes k})$, by
     \begin{align*}
      V_s (x) := \cN_s^{\otimes k} \circ \tilde{\cE}_x\otimes \id_{\cH}^{\otimes k}(\Psi^{\otimes k}) &&(x \in \cX), 
     \end{align*}
     where $\{\tilde{\cE}_x\}_{x \in \cX} \subset \cC(\cH_A^{\otimes k}, \cH_A^{\otimes k})$ is a family which fulfills the assertions from Lemma \ref{lemma:mutual_approx}. 
     Let $n > k$ be a blocklength, written as $n = k\cdot a + b$ with $a, b \in \bbmN$, $0 \leq b < k$, 
     and fix $q_\ast$ to be the equidistribution on $\cX$. For each large enough $n$ (and consequently large enough $a$) we find, according to Proposition \ref{prop:comp_cq_codes}, an $(a,M)$ cq message transmission code 
     $\tilde{\cC} := (u_m, \tilde{D}_m)_{m =1}^{M}$ with 
     \begin{align}
      &\frac{1}{a} \log M \geq \inf_{s \in S} \chi(q_\ast, V_s) -\frac{\delta}{2}, \ \text{and} \label{good_cq_codes_ea_rate} \\ 
      &\underset{m \in [M]}{\max} \ \tr(\tilde{D}_m^c V_s^{\otimes a}(u_m)) \leq 2^{-a\tilde{c}} \nonumber 
     \end{align}
     for each $s \in S$ with a constant $\tilde{c} > 0$. Based on the objects introduced, we construct an EA message transmission code for $\fI$, where we assume 
     $\Psi^{\otimes n}$ to be the 
     entanglement resource consumed (at this stage, it is clear that the code constructed will suffice the stated bound on $L$). 
     Define for each message $m \in [M]$, and corresponding codeword $u_m = (u_{m,1},\dots, u_{m,a})$ from $\tilde{\cC}$
     \begin{align*}
      \cE_m := \tilde{\cE}_{u_{m,1}} \otimes \cdots \otimes \tilde{\cE}_{u_{m,a}} \otimes \id_{\cH}^{\otimes b},
     \end{align*}
     and 
     \begin{align*}
     D_m := \tilde{D}_m \otimes \bbmeins^{\otimes b}.
     \end{align*}
     With these definitions, $\cC: = (\Psi^{\otimes n}, \cE_m, D_m)_{m=1}^M$ is an $(n, L, M)$-code for EA message transmission with maximal error 
     \begin{align}
      e(n, \cC, \fI) = \sup_{s \in S} \ \max_{m \in [M]} \tr(D_m^c V_s^{\otimes a}(u_m)) \leq  2^{-a\tilde{c}} \leq 2^{-nc},  \label{prop:comp_ea_coding_final_1}
     \end{align}
     with $c := \tilde{c}/(k+1)$, and rate
     \begin{align}
      \frac{1}{n} \ \log M 
      &\geq \frac{1}{(k+1)}\left( \inf_{s \in s}(\chi(q_\ast, V_s) - \frac{\delta}{2}\right)   \nonumber \\
      &\geq \frac{1}{(k+1)}\left( k \cdot \underset{s \in S}{\inf} \ I(\sigma, \cN_s) - 2 \cdot d_A \cdot \log(k+1) -  \frac{\delta}{2}\right)  \nonumber \\
      &\geq \underset{s \in S}{\inf} \ I(\sigma, \cN_s) - \delta.  \label{prop:comp_ea_coding_final_2}
     \end{align}
     The first of the above inequalities is the one in (\ref{good_cq_codes_ea_rate}), and the second arises from application of Lemma (\ref{lemma:mutual_approx}). 
     The third inequality is by our choice of $k$ from (\ref{k_bound}) together with the trivial bound $I(\sigma, \cN_s) \leq \log d_A \cdot d_B$ on the quantum mutual information. \newline 
     The inequalities in (\ref{prop:comp_ea_coding_final_1}) and (\ref{prop:comp_ea_coding_final_2}) together prove the claim of the proposition. 
    \end{proof}
    Next, we prove the full statement of Theorem \ref{theorem:comp_ea_capacity}. Achievability follows from Proposition \ref{prop:comp_ea_coding}. For proving the weak converse, we invoke the following two
    lemmas. The first one is from \cite{holevo02}
    \begin{lemma}[\cite{holevo02}] \label{lemma:holevo_mutual_ineq}
     Let $\cN \in \cC(\cH_A,\cH_B)$ be a c.p.t.p. map, $\Psi \in \cS(\cK_A \otimes \cK_B)$ a pure state, $\rho := \tr_{\cK_B} \Psi$,  $\cE_1,\dots,\cE_M \in \cC(\cK_A, \cH_A)$, 
     and $q \in \fP([M])$ a probability distribution on $[M]$. It holds
     \begin{align}
      \chi(q,V) \leq I(\overline{\tau},\cN),
     \end{align}
     where $V$ is the cq-channel defined by 
     \begin{align}
      V(m) := \cN \circ \cE_m \otimes \id_{\cK_B}(\Psi),
     \end{align}
     and 
     \begin{align}
      \overline{\tau} := \sum_{i =1}^M \ q(m)  \cdot \cE_m(\rho).
     \end{align}
    \end{lemma}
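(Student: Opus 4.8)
The plan is to recast the Holevo quantity as a bipartite quantum mutual information, to push it through the channel $\cN$ by a data processing argument, and to exploit that the message is completely uncorrelated with the retained half $\cK_B$ of the entanglement resource. Throughout, for a state $\omega_{XY}$ write $I(X{:}Y)_\omega := S(\omega_X)+S(\omega_Y)-S(\omega_{XY})$ for the quantum mutual information; it is nonnegative by subadditivity, and strong subadditivity gives $I(XX'{:}Y)_\omega \ge I(X{:}Y)_\omega$ for any tripartite $\omega_{XX'Y}$.

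First I would introduce a classical register carried by a Hilbert space $\cH_M$ with orthonormal basis $\{\ket{m}\}_{m\in[M]}$, and the two classical--quantum states
\begin{align*}
 \omega_{MBK_B} \ := \ \sum_{m=1}^{M} q(m)\, \ket{m}\bra{m}\otimes V(m), \qquad\qquad
 \vartheta_{MAK_B} \ := \ \sum_{m=1}^{M} q(m)\, \ket{m}\bra{m}\otimes (\cE_m\otimes\id_{\cK_B})(\Psi),
\end{align*}
living respectively on $\cH_M\otimes\cH_B\otimes\cK_B$ and $\cH_M\otimes\cH_A\otimes\cK_B$. By the definition of the Holevo quantity, $\chi(q,V)=I(M{:}BK_B)_\omega$. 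Applying $\cN$ to the $\cH_A$-part of $\vartheta$ yields $\omega$; moreover, since each $\cE_m$ is trace preserving, $\tr_{\cH_A}[(\cE_m\otimes\id_{\cK_B})(\Psi)]=\tr_{\cK_A}\Psi$ does not depend on $m$, so on the one hand $\tr_{\cH_M\cK_B}(\vartheta)=\overline{\tau}$ (that is, $\vartheta$ is an extension of $\overline{\tau}\in\cS(\cH_A)$ with reference system $MK_B$) and on the other hand $\omega_{MK_B}=\omega_M\otimes\tr_{\cK_A}\Psi$ is a product state.

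Next I would chain two inequalities. From the product form of $\omega_{MK_B}$ a one-line entropy computation gives $I(MK_B{:}B)_\omega-I(M{:}BK_B)_\omega = S(\omega_B)+S(\omega_{K_B})-S(\omega_{BK_B}) = I(B{:}K_B)_\omega\ge 0$, hence $\chi(q,V)=I(M{:}BK_B)_\omega\le I(MK_B{:}B)_\omega$. To bound the right-hand side by $I(\overline{\tau},\cN)$, pick a state vector $\ket{\Theta}$ of a purification of $\vartheta_{MAK_B}$, say on $\cH_A\otimes\cH_M\otimes\cK_B\otimes\cH_{R_0}$ with an auxiliary space $\cH_{R_0}$, and set $\Theta:=\ket{\Theta}\bra{\Theta}$; since $\vartheta$ extends $\overline{\tau}$, $\ket{\Theta}$ is also a state vector of a purification of $\overline{\tau}$. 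Plugging $\ket{\Theta}$ into the definition (\ref{quant_mut_inf_def}) and reading off the marginals of $(\cN\otimes\id)(\Theta)$ shows $I(\overline{\tau},\cN)=I(MK_BR_0{:}B)_{(\cN\otimes\id)(\Theta)}$. Discarding $R_0$ is a channel on the first party, so $I(MK_BR_0{:}B)\ge I(MK_B{:}B)_{(\cN\otimes\id_{MK_B})(\vartheta)}=I(MK_B{:}B)_\omega$ by strong subadditivity. Combining, $\chi(q,V)\le I(MK_B{:}B)_\omega\le I(\overline{\tau},\cN)$, which is the assertion.

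The main obstacle, and really the only nontrivial point, is the first of the two inequalities: $I(M{:}BK_B)$ and $I(MK_B{:}B)$ are \emph{not} comparable for a general tripartite state, and what makes the argument work is precisely the trace preservation of the encoding channels $\cE_m$, which forces $\omega_{MK_B}$ to be a product state — the message has been written into $\cK_A$, which became the channel input, and never into the kept half $\cK_B$. The remaining ingredients are the standard facts that quantum mutual information cannot increase under a local channel or when one of the two parties is shrunk (both instances of strong subadditivity), together with the bookkeeping that identifies $MK_B$ as part of a purifying reference of $\overline{\tau}$ which is then sent through $\cN$.
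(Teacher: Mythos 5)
The paper does not actually prove this lemma; it is imported verbatim from Holevo's paper \cite{holevo02} with only the citation, so there is no internal proof to compare against. Your argument is a correct, self-contained proof. The three pillars all check out: (i) the identification $\chi(q,V)=I(M{:}BK_B)_\omega$ is the standard entropy computation for a cq state; (ii) the product structure $\omega_{MK_B}=\omega_M\otimes\tr_{\cK_A}\Psi$ really does follow from trace preservation of the $\cE_m$ (the $MK_B$ marginal is unaffected by $\cN$, and $\tr_{\cH_A}[(\cE_m\otimes\id)(\Psi)]=\tr_{\cK_A}\Psi$ for every $m$), and your identity $I(MK_B{:}B)_\omega-I(M{:}BK_B)_\omega=I(B{:}K_B)_\omega\ge 0$ is exactly right — this is indeed the one place where the hypothesis on the encoders is used, and you correctly flag that without it the two mutual informations are incomparable; (iii) since $\vartheta_A=\overline{\tau}$, any purification of $\vartheta$ purifies $\overline{\tau}$, so the purification-independence of $I(\cdot,\cN)$ (noted after (\ref{quant_mut_inf_def})) gives $I(\overline{\tau},\cN)=I(MK_BR_0{:}B)$ after the channel, and discarding $R_0$ is monotonicity of mutual information. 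You correctly identify which inequality carries the content of the lemma. The only cosmetic remark is that the second inequality can be phrased without introducing $R_0$ at all: $I(MK_B{:}B)_\omega\le I(E{:}B)$ for \emph{any} extension $E\supseteq MK_B$ of $\overline{\tau}$, and the purifying one attains $I(\overline{\tau},\cN)$, which is the largest such value; but your version is equivalent and complete.
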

     The lemma below states subadditivity for the quantum mutual information, originally known to hold from \cite{adami97}. 
     \begin{lemma}[\cite{adami97}]\label{mutual_subadditivity} 
      Let $\rho \in \cS(\cH_1 \otimes \cH_2)$ be a density matrix with $\rho_1, \rho_2$ being the marginals on $\cH_1, \cH_2$ deriving from $\rho$, and let $\cN_i \in \cC(\cH_i, \cK_i)$ be a c.p.t.p. 
      map for $i = 1,2$. 
      It holds
      \begin{align*}
       I(\rho, \cN_1 \otimes \cN_2) \leq I(\rho_1, \cN_1) + I(\rho_2, \cN_2).
      \end{align*}
     \end{lemma}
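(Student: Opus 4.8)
The plan is to recognize the quantity $I(\rho,\cN)$ from (\ref{quant_mut_inf_def}) as an ordinary bipartite quantum mutual information, and then to reduce the claim to strong subadditivity of the von Neumann entropy. If $\psi_{AR}$ is a purification of $\rho \in \cS(\cH_A)$ with reference system $R$, and $\omega_{BR} := (\cN \otimes \id_R)(\ket{\psi}\bra{\psi})$ is the resulting state on the output system $B$ and $R$, then, using $S(\rho) = S(\omega_R)$, $S(\cN(\rho)) = S(\omega_B)$ and the defining formula,
\[
 I(\rho,\cN) = S(\omega_B) + S(\omega_R) - S(\omega_{BR}) =: I(B;R)_\omega ,
\]
the usual mutual information of $\omega_{BR}$; independence of the purification (already recorded after (\ref{quant_mut_inf_def})) makes this unambiguous.

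First I would fix a purification $\psi$ of $\rho = \rho_{12}$ on $\cH_1 \otimes \cH_2 \otimes \cH_R$ and set $\theta := (\cN_1 \otimes \cN_2 \otimes \id_R)(\ket{\psi}\bra{\psi})$, a state on $\cK_1 \otimes \cK_2 \otimes \cH_R$. Since $\psi$ purifies $\rho_{12}$, the identification above gives $I(\rho_{12}, \cN_1 \otimes \cN_2) = I(\cK_1\cK_2;R)_\theta$, so the goal becomes $I(\cK_1\cK_2;R)_\theta \le I(\rho_1,\cN_1) + I(\rho_2,\cN_2)$.

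The next step is two data-processing estimates. Reading $\psi$ as a bipartite pure state across the cut $\cH_1 \mid (\cH_2 \otimes \cH_R)$, the composite system $2R$ is a purification of $\rho_1$, so $I(\rho_1,\cN_1) = I(\cK_1;2R)_\xi$ with $\xi := (\cN_1 \otimes \id_{\cH_2 \otimes \cH_R})(\ket{\psi}\bra{\psi})$. Applying the channel $\cN_2 \otimes \id_R$ on the ``$2R$'' part of $\xi$ produces exactly $\theta$, so monotonicity of quantum mutual information under local c.p.t.p.\ maps gives $I(\cK_1;\cK_2 R)_\theta \le I(\cK_1;2R)_\xi = I(\rho_1,\cN_1)$; symmetrically $I(\cK_2;\cK_1 R)_\theta \le I(\rho_2,\cN_2)$. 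It then remains to prove the purely entropic inequality $I(\cK_1\cK_2;R) \le I(\cK_1;\cK_2 R) + I(\cK_2;\cK_1 R)$, valid for every tripartite state. Expanding both sides into von Neumann entropies, this amounts to
\[
 \bigl(S(\cK_1) + S(\cK_2) - S(\cK_1\cK_2)\bigr) + \bigl(S(\cK_1 R) + S(\cK_2 R) - S(R) - S(\cK_1\cK_2 R)\bigr) \ge 0 ,
\]
where the first bracket is nonnegative by subadditivity of the von Neumann entropy and the second by strong subadditivity. Chaining this with the two data-processing bounds closes the argument.

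In the end the proof is mostly bookkeeping; the only genuine analytic ingredient is strong subadditivity, invoked once. The step I expect to require the most care is the identification phase: verifying that $I(\rho,\cN)$ computed with an arbitrary purification equals the bipartite mutual information between the channel output and the reference, and that the global purification $\psi$ of $\rho_{12}$, read across the cut $1 \mid 2R$ (respectively $2 \mid 1R$), is a legitimate purification of the marginal $\rho_1$ (respectively $\rho_2$) --- so that monotonicity of mutual information under c.p.t.p.\ maps can be applied with the systems in the correct roles.
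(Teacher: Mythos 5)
Your proof is correct. Note that the paper itself offers no proof of this lemma: it is simply quoted from Adami and Cerf \cite{adami97}, so there is no in-paper argument to compare against. Your derivation is a legitimate self-contained one. The identification $I(\rho,\cN)=I(B;R)_\omega$ is exactly right, the two data-processing steps are sound (the composite system $2R$ is indeed a purifying system for $\rho_1$ since $\psi$ is pure on $\cH_1\otimes\cH_2\otimes\cH_R$, and applying $\cN_2\otimes\id_R$ to the $2R$ part of $\xi$ does produce $\theta$), and the residual entropic inequality
\begin{align*}
 \bigl(S(\cK_1)+S(\cK_2)-S(\cK_1\cK_2)\bigr)+\bigl(S(\cK_1 R)+S(\cK_2 R)-S(R)-S(\cK_1\cK_2 R)\bigr)\ \geq\ 0
\end{align*}
checks out: expanding $I(\cK_1;\cK_2 R)+I(\cK_2;\cK_1 R)-I(\cK_1\cK_2;R)$ gives precisely these two brackets, the first nonnegative by subadditivity and the second by strong subadditivity. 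This is essentially the same mechanism as the original Adami--Cerf argument, which also rests on strong subadditivity of the von Neumann entropy; your packaging via the tripartite inequality $I(AB;C)\leq I(A;BC)+I(B;AC)$ together with monotonicity under local c.p.t.p.\ maps is a clean and standard way to organize it. The one point worth stating explicitly if you write this up is that monotonicity of the mutual information under a channel on one subsystem is itself a consequence of monotonicity of relative entropy (or of strong subadditivity via Stinespring dilation), so the whole proof really does reduce to the single analytic ingredient you identify.
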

    With the prerequisites picked up, we are ready for the proof of Theorem \ref{theorem:comp_ea_capacity}.
    \begin{proof}[Proof of Theorem \ref{theorem:comp_ea_capacity}]
     The proof of achievability with the maximal message transmission error under consideration follows directly from Proposition \ref{prop:comp_ea_coding}. 
    It remains, to prove the (weak) converse for $\overline{C}_{EA}$. Let, for an arbitrary, fixed blocklength $n \in \bbmN$, $\cC := (\Phi, \cE_m, D_m)_{m=1}^M$ be any $(n,L,M)$-EA message transmission code for
    $\fI$ with average error $\overline{e}(n,\cC,\fI) := \overline{e}_n < 1$. With $q_{\ast}$ being the equidistribution on $[M]$, define states 
    \begin{align*}
     \rho 					&:= \tr_{\cK_B}\Phi, \hspace{.3cm} \text{and} \\
     \tau					&:= \sum_{m=1}^M q_\ast(m) \cE_{m}(\rho).  
    \end{align*}
    Let, for each $1 \leq i \leq n$, $\tau_i$ be the marginal density matrix deriving from $\tau$ on the $i$-th tensor factor in $\cH_A^{\otimes n}$. 
    Define a cq-channel $V_{s}$ by
    \begin{align*}
     V_{s}(m) := \cN_s^{\otimes n} \circ \cE_{m} \otimes \id_{\cK_B}(\ket{\Phi}\bra{\Phi}) &&(m \in [M]).
    \end{align*}
    for each $s \in S$. It holds
    \begin{align}
     \chi(q_\ast, V_{s}) 
     & \leq I(\tau, \cN_s^{\otimes n})  \leq \sum_{i=1}^n I(\tau_i, \cN_s), \label{thm:comp_ea_capacity_converse_4_einhalb}
    \end{align}
    where the left inequality above is by Lemma \ref{lemma:holevo_mutual_ineq}, and the right inequality follows from $(n-1)$-fold application of Lemma \ref{mutual_subadditivity}.
    Now, let $X$ be the equidistributed random variable on the message set $[M]$ (i.e. $\prob(X = m) = q_{\ast}(m)$ for each $m \in [M]$), and define a conditional probability by
    \begin{align*}
     \prob\left(Y_s = m' | X = m\right) := \tr(D_{m'}V_{s}(m)) &&(m,m' \in [M]).
    \end{align*}
    for each $s \in S$. In each case of $s$, we have
    \begin{align}
     \log M  
     &= H(X)  \nonumber \\
     &= I(X,Y_s) + H(X|Y_{s}) \nonumber \\
     &\leq I(X;Y_{s}) + \overline{e}_n \log M + 1 \nonumber \\
     &\leq \chi(q_\ast, V_{s}) + \overline{e}_n \log M + 1,  \label{thm:comp_ea_capacity_converse_5}
    \end{align}
    where the first inequality above is by Fano's Lemma, and the second is by Holevo's bound \cite{holevo73}. 
    We conclude
    \begin{align}
     \frac{1}{n}\log M \ 
     &\leq  \ \frac{1}{n}(\chi(q_\ast, V_{s}) + \overline{e}_n \log M + 1) \\
     &\leq  \ \frac{1}{n} \sum_{i=1}^n I(\tau_i, \cN_s) + \frac{\overline{e}_n \log M}{n} + \frac{1}{n} \\
     &\leq  \ I(\overline{\tau}, \cN_s) + \frac{\overline{e}_n \log M}{n} + \frac{1}{n}. \label{thm:comp_ea_capacity_converse_6}
    \end{align}
    where the second inequality above is by (\ref{thm:comp_ea_capacity_converse_4_einhalb}), and the third, by concavity of the quantum channel mutual  information in the input state, together with the definition
    \begin{align*}
     \overline{\tau} := \frac{1}{n} \sum_{i=1}^n \tau_i.
    \end{align*}
    Finally, minimizing over $s \in S$ and subsequently maximizing over states in $\cS(\cH_A)$ on the r.h.s. of
    (\ref{thm:comp_ea_capacity_converse_6}), 
    we arrive at the inequality 
    \begin{align*}
     \frac{1}{n}\log M \leq  \sup_{\rho \in \cS(\cH_A)}\inf_{s \in S}I(\rho, \cN_s) + \overline{e}_n \cdot \frac{\log M}{n} + \frac{1}{n}.
    \end{align*}
    Since for each sequence of $(n,L,M)$-codes with average errors $\overline{e}_n \rightarrow 0 \ (n\rightarrow \infty)$, the remainder terms vanish, the converse holds.
    \end{proof}
    \begin{remark}
    As can be noticed by inspection of the proof of the converse part in Theorem \ref{theorem:comp_ea_capacity} above, even dropping the condition of principal rate-boundedness on the dimensions of the entanglement resource 
    would not lead to higher communication rates. On the other hand, the proof of the achievability part shows that memoryless extensions of a pure entangled 
    state on $\cH_A \otimes \cH_A$ always suffice as a resource to achieve the capacity.
    \end{remark}
We conclude this section by demonstrating that for both capacities a general strong converse fails to hold. The following example of a compound quantum channel without an average error EA message transmission capacity 
is inspired from \cite{ahlswede67}. 
\begin{example} \label{nostrongconverse_example}
 There is a set $\fI = \{\cN_1, \cN_2\} \subset \mathcal{C}(\mathbbm{C}^5, \mathbbm{C}^5)$ such that for each $n$ 
 \begin{align}
  \frac{1}{n}\log \overline{N}_{EA}(n,\fI, 1, \tfrac{1}{2})  > \sup_{\rho \in \cS(\cC^5)} \ \min_{i=1,2} I(\rho, \cN_i)
 \end{align}
 holds.
\end{example} \label{comp_av_str_conv_counterexample}
In the following, we present the example we stated to exist. 
Define the set $\fI:=\{\mathcal{N}_1,\mathcal{N}_2\}\subset \mathcal{C}(\mathbbm{C}^5, \mathbbm{C}^5)$ formed by the entanglement breaking channels
\begin{align}
 \mathcal{N}_1(a) &:= \sum_{i=1}^2 tr(E_{ii}a) E_{ii} + \sum_{j=3}^5 tr(E_{jj}a) E_{33} &(a \in \mathcal{L}(\mathbbm{C}^5))\\
 \mathcal{N}_2(a) &:= \sum_{i=4}^5 tr(E_{ii}a) E_{ii} + \sum_{j=1}^3 tr(E_{jj}a) E_{33} &(a \in \mathcal{L}(\mathbbm{C}^5)),
\end{align}
where we used the shortcuts $E_{ij} := \ket{e_i} \bra{e_j}$ for $i,j \in \{1,\dots,5\}$ with an orthonormal basis $\{e_i\}_{i=1}^5 \subset \mathbbm{C}^5$ for the matrix units. 
One can show that 
\begin{align}
 \max_{\rho \in \mathcal{S}(\mathbbm{C}^5)} I(\rho,\mathcal{N}_1) = \max_{\rho \in \mathcal{S}(\mathbbm{C}^5)} I(\rho,\mathcal{N}_2) = \log3 \label{nostrongconverse_1}
\end{align}
holds, which for the channel $\mathcal{N}_1$ ($\mathcal{N}_2$) is attained on the set 
\begin{align}
 A_1 &:=  \{\rho:\ \braket{e_1,\rho e_1} = \braket{e_2, \rho e_2} = \tfrac{1}{3}\} \\
 A_2 &:=  \{\rho:\ \braket{e_4,\rho e_4} = \braket{e_5, \rho e_5} = \tfrac{1}{3}\}
\end{align}
and nowhere else. Consequently, the sets $A_1$ an $A_2$ are non-intersecting, and therefore
\begin{align}
  \max_{\rho \in \mathcal{S}(\mathbbm{C}^5)} \min_{i = 1,2} I(\rho,\mathcal{N}_i) < \log3 \label{nostrongconverse_2}
\end{align}
holds. We now show, by constructing sufficient codes, the inequality 
\begin{align}
 \overline{N}_{EA}(n,\fI, 1, \frac{1}{2}) \ \geq \  2 \cdot 3^n - 1,
\end{align}
which, together with (\ref{nostrongconverse_2}) leads to a contradiction to the inequality from (\ref{average_error_compound_strong_converse}) in this case.
For fixed $n \in \mathbbm{N}$, define $e := \bigotimes_{i=1}^n e_1$, 
\begin{align}
V &:= \{1,2,3\}^n \cup \{3,4,5\}^n \\
e_v &:= e_{v_1} \otimes \dots \otimes e_{v_n}, \\
 \mathcal{E}_{v}(\cdot) &:=  \ket{e_v}\bra{e}(\cdot)\ket{e}\bra{e_v}  \ \text{and} \  D_v := \ket{e_v}\bra{e_v} \otimes \mathbbm{1}_{\mathbbm{C}^5}
\end{align}
for each $v := (v_1,...,v_n) \in V$. Then $\cC: = (\Phi^{\otimes n}, \mathcal{E}_v, D_v)_{v \in V}$ with $\Phi := \ket{\phi}\bra{\phi}$, $\phi := e_1 \otimes e_1$is an $(n, 1,|V|)$-code for entanglement-assisted message transmission with
$|V| = 2\cdot 3^n - 1$ and average error bounded by 
\begin{align}
 \max_{i = 1,2} \overline{e}(n, \cC, \mathcal{N}_i^{\otimes n}) = \frac{3^n}{2\cdot 3^n - 1} < \frac{1}{2}.
\end{align}
Consequently, the channel has no strong converse property for the average error criterion.\newline
\begin{remark}
 The reader may notice that the above introduced set $\fI$ is also an explicit example of a compound quantum channel, where the users have to pay a price in capacity for not knowing
 the channel. Combination of (\ref{nostrongconverse_1}) and (\ref{nostrongconverse_2}) together with the coding theorem for the entanglement-assisted capacities for perfectly 
 known memoryless quantum channels leads to
 \begin{align}
    C_{EA}(\fI) = \underset{\rho \in \mathcal{S}(\mathbbm{C}^5)}{\max} \underset{i=1,2}{\min} I(\rho, \mathcal{N}_i) <   
    \underset{i=1,2}{\min} \underset{\rho \in \mathcal{S}(\mathbbm{C}^5)}{\max} I(\rho, \mathcal{N}_i) = \underset{i= 1,2}{\min}\ C_{EA}(\cN_i).
 \end{align}
\end{remark}
It remains to give evidence to the claim that also no general strong converse does hold to the maximal-error EA classical capacity of compound quantum channel. 
It becomes apparent from the following lemma that there is no essential difference between the average error and maximal error criterion, if the users are supplied with free rate-bounded entanglement assistance. 
\begin{lemma}\label{lemma:comp_av_max_equal}
 For each set $\fI \subset \cC(\cH_A, \cH_B)$ there is a number $R_e$, such that for each $n$ and each $\lambda \in (0,1)$, 
 \begin{align}
  \overline{N}_{EA}(n,\fI,R_e,\lambda) = N_{EA}(n,\fI,R_e,\lambda).
 \end{align}
\end{lemma}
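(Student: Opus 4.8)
The plan is to prove the two inequalities separately. That $\overline{N}_{EA}(n,\fI,R_e,\lambda)\ge N_{EA}(n,\fI,R_e,\lambda)$ holds for every $R_e$ is immediate, since any code with maximal error at most $\lambda$ automatically has average error at most $\lambda$. For the converse inequality I would set $R_e:=\log\dim\cH_A$ and turn an average-error code into a maximal-error code \emph{with the same number of messages} by symmetrising it over a cyclic relabelling of the messages, the shared randomness needed for this being generated for free from an auxiliary maximally entangled state.

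More precisely, let $\cC=(\Psi,\cE_m,D_m)_{m}$ be an $(n,L,M)$-code for $\fI$ with $\overline{e}(n,\cC,\fI)\le\lambda$, $L\le 2^{nR_e}$ and $M=\overline{N}_{EA}(n,\fI,R_e,\lambda)$. Identify the message index set with $\{0,\dots,M-1\}$ and write $j\oplus m$ for addition modulo $M$. Adjoin to $\Psi$ the maximally entangled state $\Phi_M:=\ket{\phi}\bra{\phi}$, $\ket{\phi}:=M^{-1/2}\sum_{j=0}^{M-1}\ket{j}\otimes\ket{j}$ on $\bbmC^{M}\otimes\bbmC^{M}$, one tensor factor held by each party, and put
\begin{align*}
 \cE'_m(X):=\sum_{j=0}^{M-1}\cE_{j\oplus m}\big((\bbmeins_{\cK_A}\otimes\bra{j})X(\bbmeins_{\cK_A}\otimes\ket{j})\big),
 \qquad
 D'_m:=\sum_{j=0}^{M-1}D_{j\oplus m}\otimes\ket{j}\bra{j}.
\end{align*}
It is routine to verify that each $\cE'_m$ is c.p.t.p., that $0\le D'_m\le\bbmeins$ and $\sum_m D'_m\le\bbmeins$, so $\cC':=(\Psi\otimes\Phi_M,\cE'_m,D'_m)_{m}$ is an $(n,LM,M)$-code for $\fI$. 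Since $\Phi_M$ is maximally entangled, a measurement of its two halves in the Schmidt basis yields the \emph{same} uniformly distributed outcome $j$ on both sides; using this together with the mutual orthogonality of the sectors $\ket{j}\bra{j}$ one computes that for every $s\in S$ and every $m$
\begin{align*}
 \tr\big\{(D'_m)^c(\cN_s^{\otimes n}\circ\cE'_m\otimes\id)(\Psi\otimes\Phi_M)\big\}
 =\frac1M\sum_{m'=0}^{M-1}\tr\big\{D^c_{m'}(\cN_s^{\otimes n}\circ\cE_{m'}\otimes\id)(\Psi)\big\}.
\end{align*}
The right-hand side is independent of $m$, and its supremum over $s\in S$ equals $\overline{e}(n,\cC,\fI)\le\lambda$; hence $e(n,\cC',\fI)\le\lambda$, while $\cC'$ still transmits $M$ messages.

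The remaining point is that the entanglement system of $\cC'$ has dimension $LM$, which in general exceeds $2^{nR_e}$. To deal with this I would invoke the standard fact that any entanglement-assisted code can be replaced, with the same number of messages and without increasing either the average or the maximal error, by one whose sender-side entanglement system $\cK_A$ has dimension at most $(\dim\cH_A)^{n}$: Schmidt-decompose the pure resource state, discard the parts of $\cK_A$ and $\cK_B$ lying outside the supports of the respective marginals, and use that whatever leaves Alice's laboratory lives in $\cH_A^{\otimes n}$. Applying this compression to $\cC'$ produces a maximal-error $(n,L'',M)$-code with $L''\le(\dim\cH_A)^{n}=2^{nR_e}$ and $e\le\lambda$, so $N_{EA}(n,\fI,R_e,\lambda)\ge M=\overline{N}_{EA}(n,\fI,R_e,\lambda)$. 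Together with the trivial inequality this gives $\overline{N}_{EA}(n,\fI,R_e,\lambda)=N_{EA}(n,\fI,R_e,\lambda)$ for every $n$ and every $\lambda\in(0,1)$, which is the assertion.

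The step I expect to be the genuine obstacle is the last one — making precise that arbitrarily large entanglement resources cannot help, so that the dimension inflation $L\mapsto LM$ caused by the symmetrisation can be undone while keeping $\dim\cK_A\le(\dim\cH_A)^n$; everything else reduces to checking the c.p.t.p.\ and POVM properties of $\cE'_m$, $D'_m$ and a short computation of the error of $\cC'$. (One could symmetrise over the whole permutation group $\fS_M$ instead; the cyclic shift is preferred only because then the auxiliary maximally entangled state has Schmidt rank $M$ rather than $M!$.)
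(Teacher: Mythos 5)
Your construction is essentially the paper's own proof: the paper likewise symmetrises the given average-error code over the cyclic shifts $m\mapsto m\oplus i$, coordinates the shift by adjoining a maximally entangled state of Schmidt rank $M$ shared between the two parties, and checks that the block-diagonal decoder turns the maximal error of the new code into the average error of the old one. Your $\cE'_m$ and $D'_m$ coincide with the paper's $\cE_m$ and $D_m$, and the error computation is the same.

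The one point where you diverge is the dimension accounting, and there you are in fact more careful than the paper. You correctly note that the symmetrised code consumes an entanglement system of dimension $LM$ rather than $L$, so it need not satisfy the constraint $L\le 2^{nR_e}$ entering the definition of $N_{EA}(n,\fI,R_e,\lambda)$; the paper's proof announces that the reverse inequality holds ``for large enough $R$'' but never verifies the bound on the entanglement dimension of the constructed code, so this is a gap in the published argument as well. However, the repair you propose rests on the claim that any EA code can be compressed, with the same $M$ and without increasing either error, to one with $\dim\cK_A\le(\dim\cH_A)^n$. That is not an obviously ``standard fact,'' and your sketch does not deliver it: restricting to the Schmidt supports only reduces $\dim\cK_A$ to the Schmidt rank of the resource state, which can exceed $(\dim\cH_A)^n$, and the remark that ``whatever leaves Alice's laboratory lives in $\cH_A^{\otimes n}$'' does not explain how to shrink $\cK_A$ further without disturbing the correlations with $\cK_B$ that the decoder exploits. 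A cleaner way to close the gap is to note that any code with average error $\lambda<1$ obeys $\frac{1}{n}\log M\le\frac{\log(\dim\cH_A\cdot\dim\cH_B)+1/n}{1-\lambda}$ by the weak converse, so the inflation $L\mapsto LM$ costs only a bounded additive increase of the entanglement rate; this yields the lemma with a $\lambda$-dependent $R_e$ (or an $R_e$ on the maximal-error side exceeding the one on the average-error side by a fixed amount), which is all that is needed to transfer the failure of the strong converse from Example \ref{nostrongconverse_example} to the maximal-error criterion.
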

\begin{proof}
 The inequality $\overline{N}_{EA}(n,\fI,R,\lambda) \geq N_{EA}(n,\fI,R,\lambda)$ is obvious for each $0 \leq R < \infty$ from the definitions. 
 We prove that the reverse inequality $N_{EA}(n,\fI,R,\lambda) \geq \overline{N}_{EA}(n,\fI,R,\lambda)$ does hold for large enough $R$. 
 Let $\tilde{\cC} := (\tilde{\Psi}, \tilde{\cE}_m, \tilde{D}_m)_{m = 1}^M$ be an 
 $(n,\tilde{L},M)$-EA message transmission code with average error 
 \begin{align}
  \overline{e}(n,\tilde{\cC},\fI) := \overline{\lambda} \in (0,1). 
 \end{align}
 We show that there exists an $(n, L, M)$-EA message transmission code $\cC := (\Psi, \cE_m, D_m)$ whose maximal error equals the average error of $\tilde{\cC}$, i.e. 
 \begin{align}
  e(n, \cC, \fI) = \overline{e}(n, \tilde{\cC}, \fI),
 \end{align}
 which implies, by maximization the desired inequality. Let $\{\sigma_1,\dots,\sigma_M\} \subset \fS_M$ be the set of cyclic translations on $[M]$. i.e. 
 \begin{align}
  \sigma_i(m) := m \oplus i   &&(m, i \in [M]),
 \end{align}
 where $\oplus$ is the modulo-M addition defined on $[M]$. It is clear that 
 \begin{align}
  \frac{1}{M} \sum_{m=1}^M \tr \left(\tilde{D}^c_m(\cN^{\otimes n}\circ \tilde{\cE}_{m} \otimes \id(\tilde{\Psi}))\right) 
  \ = \ \frac{1}{M} \sum_{i=1}^M \tr \left(\tilde{D}^c_{\sigma_i(m')}(\cN^{\otimes n}\circ \tilde{\cE}_{\sigma_i(m')} \otimes \id(\tilde{\Psi}))\right)
 \end{align}
 for each $m' \in [M],\ \cN \in \fI$. We define the components of the code $\cC$. Let $\Psi := \tilde{\Psi} \otimes \Phi$, where $\Phi := \ket{\phi}\bra{\phi}$ is the maximally entangled state on $\tilde{\cK}_A \otimes \tilde{\cK}_B$
 with $\tilde{\cK}_A = \tilde{\cK}_B = \bbmC^M$, and 
 \begin{align}
  \phi := \frac{1}{\sqrt{M}} \sum_{k=1}^M e_k \otimes e_k. 
 \end{align}
  We define 
  \begin{align}
   \cE_m(a) &:= \sum_{k=1}^M \ \tilde{\cE}_{\sigma_k(m)} \circ \tr_{\tilde{\cK}_A}(\bbmeins_{\cK_A} \otimes \ket{e_k}\bra{e_k} a ) &&(a \in \cL(\cK_A \otimes \tilde{\cK}_A), m \in [M]), \\
    D_m   &:= \sum_{k=1}^M \tilde{D}_{\sigma_k(m)} \otimes \ket{e_k}\bra{e_k}    &&(m \in [M]).
   \end{align}
  With these definitions, it holds for each $m \in [M]$, $\cN \in \fI$ 
  \begin{align}
   \tr\left(D^c_m\left(\cN^{\otimes n}\circ \cE_m \otimes \id_{\cK_B}(\Psi)\right) \right) 
   &= \frac{1}{M} \sum_{i=1}^M \tr \left(\tilde{D}^c_{\sigma_i(m)}(\cN^{\otimes n}\circ \tilde{\cE}_{\sigma_i(m)} \otimes \id(\tilde{\Psi}))\right) \\
   &= \overline{\lambda}.
  \end{align}
  Maximizing both sides of the above inequality over all $m \in [M]$ shows that the maximal error of $\cC$ equals the average error of $\tilde{\cC}$ for each given channel $\cN$. Maximizing over all channels in $\fI$ 
  proves our claim. 
\end{proof}
\end{subsection}

\begin{subsection}{Arbitrarily varying channels} \label{subsect:prfs_avqc}
In this section, we give a full proof of Theorem \ref{theorem:avqc_ea_capacity}. The proof of achievability is performed in two steps. In Lemma \ref{lemma:finite avqc_coding} 
below, 
we show that sufficient maximal-error codes exist for each large enough blocklength for each AVQC which is generated by a finite set $\fI$ of quantum channels. Afterwards, we derive sufficient codes
for each given AVQC (not necessarily generated by a finite or countable set of quantum channels). The strategy of proof in this case is, to combine codes
for finite AVQCs with suitable approximations of arbitrary AVQCs by finite AVQCs. For proving the coding result for finite AVQCs, we use Ahlswede's robustification lemma, which we state first.
\begin{theorem}[Robustification technique, cf. Theorem 6 in Ref. \cite{ahlswede86}]\label{robustification-technique}\ \\
Let $S$ be a set with $|S|<\infty$ and $n \in\bbmN$. If a function $f:S^n\to [0,1]$ satisfies
\begin{align}
 \sum_{s^n\in S^n}f(s^n)q(s_1)\cdot\ldots\cdot q(s_n)\geq 1-\gamma \label{eq:robustification-1}
\end{align}
for each type $q$ of sequences in $S^n$ for some $\gamma\in [0,1]$, then
\begin{align}
  \frac{1}{n!}\sum_{\sigma\in \fS_n}f(\sigma(s^n))\ge 1-(n+1)^{|S|}\cdot \gamma\qquad \forall s^n \in S^n.\label{eq:robustification-2}
\end{align}
\end{theorem}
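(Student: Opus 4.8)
The plan is to reduce the symmetrized sum over permutations to a uniform average of $f$ over a single type class, and then to compare that type class with the product distribution $q^{\otimes n}(t^n):=q(t_1)\cdots q(t_n)$ for the relevant type $q$. Fix $s^n\in S^n$, write $N(a|s^n)$ for the number of occurrences of $a\in S$ in $s^n$, let $q$ be the type of $s^n$ (so $q(a)=N(a|s^n)/n$), and let $T_q:=\{\sigma(s^n):\sigma\in\fS_n\}$ be its type class. The stabilizer of $s^n$ in $\fS_n$ has order $\prod_{a\in S}N(a|s^n)!$, and for any $t^n\in T_q$ the fibre $\{\sigma\in\fS_n:\sigma(s^n)=t^n\}$ is a coset of this stabilizer, hence of the same cardinality $n!/|T_q|$. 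Therefore
\begin{align}
 \frac{1}{n!}\sum_{\sigma\in\fS_n}f(\sigma(s^n))\;=\;\frac{1}{|T_q|}\sum_{t^n\in T_q}f(t^n)\;=:\;A_q. \label{robust-prop-avg}
\end{align}

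Next I would use that $q^{\otimes n}$ is constant on $T_q$, equal to $c_q:=\prod_{a\in S}q(a)^{N(a|s^n)}=2^{-nH(q)}$, with $|T_q|\,c_q=q^{\otimes n}(T_q)$. Splitting the assumption (\ref{eq:robustification-1}) according to type classes, keeping the class $T_q$ explicit and estimating every other type class by its total $q^{\otimes n}$-mass via $0\le f\le 1$, yields
\begin{align}
 1-\gamma\;\le\;\sum_{t^n\in S^n}f(t^n)\,q^{\otimes n}(t^n)\;=\;c_q\sum_{t^n\in T_q}f(t^n)+\sum_{p\neq q}\ \sum_{t^n\in T_p}f(t^n)\,q^{\otimes n}(t^n)\;\le\;q^{\otimes n}(T_q)\,A_q+\bigl(1-q^{\otimes n}(T_q)\bigr),
\end{align}
where the middle sum runs over all types $p\neq q$. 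Rearranging gives $q^{\otimes n}(T_q)\,(1-A_q)\le\gamma$. Finally I would invoke the standard method-of-types bound $q^{\otimes n}(T_q)\ge(n+1)^{-|S|}$, which follows from $q^{\otimes n}(T_q)=|T_q|\,2^{-nH(q)}$ together with the type-size estimate $|T_q|\ge(n+1)^{-|S|}2^{nH(q)}$; this gives $1-A_q\le(n+1)^{|S|}\gamma$, and in view of (\ref{robust-prop-avg}) this is exactly (\ref{eq:robustification-2}) since $s^n\in S^n$ was arbitrary.

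Once these two structural facts are in place, the rest is bookkeeping. The step I expect to need the most care is the identity (\ref{robust-prop-avg}): one has to observe that, even when $s^n$ has repeated letters, every element of the orbit $T_q$ is reached by the same number $n!/|T_q|$ of permutations, so that averaging over $\fS_n$ really produces the \emph{uniform} average over $T_q$ rather than a weighted one. The estimate $q^{\otimes n}(T_q)\ge(n+1)^{-|S|}$ is the only genuinely external input and is precisely where the factor $(n+1)^{|S|}$ in (\ref{eq:robustification-2}) originates; beyond matching these type-counting constants I do not anticipate real difficulties.
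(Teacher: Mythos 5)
Your proof is correct. Note that the paper itself gives no proof of this statement -- it is imported verbatim as Theorem 6 of the cited reference \cite{ahlswede86} -- so there is nothing internal to compare against; your argument is precisely the standard method-of-types proof underlying Ahlswede's original result. Both key steps check out: the orbit--stabilizer count showing that $\frac{1}{n!}\sum_{\sigma\in\fS_n}f(\sigma(s^n))$ equals the uniform average $A_q$ of $f$ over the type class $T_q$ of $s^n$, and the chain $1-\gamma\le q^{\otimes n}(T_q)A_q+(1-q^{\otimes n}(T_q))$ combined with $q^{\otimes n}(T_q)\ge (n+1)^{-|S|}$, which correctly isolates the factor $(n+1)^{|S|}$ in the conclusion.
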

The following lemma states existence of codes, sufficient for proving the achievability part of Theorem \ref{theorem:avqc_ea_capacity}.
 \begin{lemma}\label{lemma:finite avqc_coding}
 Let $\fI := \{\cN_s\}_{s \in S} \subset \cC(\cH_A,\cH_B)$ be a finite set of c.p.t.p. maps, and define $\tilde{\cN}_p := \sum_{s \in S} p(s) \cN_s$ for each probability distribution $p \in \fP(S)$. For each 
 $\delta > 0$ exists a number $n_0(\delta)$, such that we find for each $n > n_0$ an $(n,L,M)$-EA message transmission code $\cC$ with 
 \begin{enumerate}
  \item $e_{av}(n,\cC, \fI) \leq 2^{-n\hat{c}}$ with a constant $\hat{c}(\fI,\delta) > 0$, 
  \item $\frac{1}{n} \log M \geq \sup_{\rho \in \cS(\cH_A)}\ \inf_{p \in \fP(S)} \ I(\rho, \tilde{\cN}_p) - \delta$, and
  \item $\frac{1}{n} \log L \leq \dim \cH_A + \delta$.
 \end{enumerate}
 \end{lemma}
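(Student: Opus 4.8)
The strategy is to combine the compound-channel coding result (Proposition~\ref{prop:comp_ea_coding}) with Ahlswede's robustification technique (Theorem~\ref{robustification-technique}). First I would apply Proposition~\ref{prop:comp_ea_coding} to the \emph{finite} compound channel $\{\tilde{\cN}_p\}_{p \in \fP(S)}$ generated by the convex hull of $\fI$; this is legitimate because the set of averaged channels is itself a (compact) set of c.p.t.p.\ maps indexed by $\fP(S)$. For a fixed input state $\rho$ (chosen so that $\inf_{p}I(\rho,\tilde{\cN}_p)$ is within $\delta/2$ of the supremum value), Proposition~\ref{prop:comp_ea_coding} yields, for all large $n$, an $(n,L,M)$-EA code $\cC = (\Psi,\cE_m,D_m)_{m=1}^M$ with $L \leq (\dim\cH_A)^n$, rate $\tfrac1n\log M \geq \inf_p I(\rho,\tilde{\cN}_p)-\delta/2$, and \emph{maximal} error $e(n,\cC,\conv(\fI)) \leq 2^{-nc'}$ for some $c'>0$. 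Note this code already handles condition~(3) after observing $\tfrac1n\log L \leq \log\dim\cH_A$, which is even slightly better than the claimed bound (the extra $\delta$ slack is there to absorb the symmetrization step below).

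The second step converts this compound-channel code into a code good against the AVQC $\fI$. Here I would invoke robustification. For each codeword index $m$ and each sequence $s^n \in S^n$, set
\begin{align*}
 f_m(s^n) := \tr\!\left\{D_m\,(\cN_{s^n}\circ \cE_m \otimes \id_{\cK_B})(\Psi)\right\} \in [0,1].
\end{align*}
The key observation is that for any \emph{type} $q$ on $S^n$, averaging $f_m(s^n)$ over all sequences of that type with the product weights $q(s_1)\cdots q(s_n)$ equals $\tr\{D_m(\tilde{\cN}_q^{\otimes n}\circ\cE_m\otimes\id)(\Psi)\}$ — i.e.\ the success probability of the original code on the memoryless channel $\tilde{\cN}_q^{\otimes n} \in \conv(\fI)^{\otimes n}$ — which is $\geq 1 - 2^{-nc'}$ by the compound coding bound. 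Theorem~\ref{robustification-technique} then gives
\begin{align*}
 \frac{1}{n!}\sum_{\pi \in \fS_n} f_m(\pi(s^n)) \geq 1 - (n+1)^{|S|}\, 2^{-nc'} \qquad \forall\, s^n \in S^n,\ \forall m \in [M].
\end{align*}
This suggests the final code: permute the $n$ channel uses by a $\pi$ drawn uniformly from $\fS_n$, encoding $\cC^\pi := (\Psi, \cE_m\circ\mathcal P_\pi^{-1}, \mathcal P_\pi D_m \mathcal P_\pi^{-1})$ where $\mathcal P_\pi$ is the unitary permutation of the $n$ tensor factors of $\cH_B^{\otimes n}$. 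But a \emph{random} code is not yet what the lemma asks for; the crucial point — flagged in the remark preceding the lemma — is that \textbf{under entanglement assistance the shared randomness needed to coordinate $\pi$ can be embedded in the entangled state itself}. Concretely, I would enlarge the entanglement resource to $\Psi \otimes \omega$, where $\omega$ is a maximally entangled state on $\bbmC^{n!}\otimes\bbmC^{n!}$; the sender measures her half to learn $\pi$, applies $\cC^\pi$, and the receiver (who learns the same $\pi$ from his half) applies the matching decoder. The averaged success probability is exactly the robustified quantity above, which is $\geq 1 - (n+1)^{|S|} 2^{-nc'} \geq 1 - 2^{-n\hat c}$ for $\hat c := c'/2$ and all $n$ large enough. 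This gives $\overline{e}_{av}(n,\cC,\fI)$ — and in fact $e_{av}(n,\cC,\fI)$, since the bound holds uniformly in $m$ — at most $2^{-n\hat c}$. The cost in entanglement rate is $\tfrac1n\log(n! \cdot L) \leq \tfrac1n\log n! + \log\dim\cH_A$; since $\tfrac1n\log n! \to 0$, this is $\leq \dim\cH_A + \delta$ (indeed $\leq \log\dim\cH_A + \delta$) for $n$ large, establishing condition~(3). The rate $\tfrac1n\log M$ is unchanged, giving condition~(2) after recalling the minimax identity $\sup_\rho\inf_p I(\rho,\tilde{\cN}_p) = \sup_\rho\inf_{\cN\in\conv(\fI)} I(\rho,\cN)$ (the infimum over a convex set of channels is attained on the generating distributions since $I(\rho,\cdot)$ need only be minimized, and every element of $\conv(\fI)$ is some $\tilde{\cN}_p$).

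**Main obstacle.** The genuinely delicate point is verifying that the type-averaging identity holds \emph{as stated} for the quantum code: one must check that $\sum_{s^n:\,\mathrm{type}(s^n)=q} q(s_1)\cdots q(s_n)\, \cN_{s^n}(\,\cdot\,) = \tilde{\cN}_q^{\otimes n}(\,\cdot\,)$ only after we also average over the decoder permutation, because a single fixed $D_m$ need not be permutation-symmetric — so the clean identity really lives at the level of the permutation-averaged success function $\tfrac{1}{n!}\sum_\pi f_m(\pi(s^n))$, and one should phrase robustification accordingly (this is exactly why $f_m$ above is defined with the \emph{fixed} code and robustification is applied to it, rather than trying to symmetrize the code first). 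A secondary technical nuisance is bookkeeping the sub-codeword structure inherited from Proposition~\ref{prop:comp_ea_coding} (the blocklength there was $a = \lfloor n/k\rfloor$, not $n$) through the permutation; this only changes constants and the threshold $n_0(\delta)$, not the stated bounds, so I would handle it by absorbing the $k$-block granularity into the exponent $\hat c$ exactly as in the proof of Proposition~\ref{prop:comp_ea_coding}.
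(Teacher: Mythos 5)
Your overall architecture (compound code on $\conv(\fI)$ from Proposition \ref{prop:comp_ea_coding}, robustification of the success function $f_m$, then coordination of the random permutation through the shared entanglement) is exactly the paper's route up to the last step, but the last step as you wrote it fails, and the failure is precisely the point the lemma's condition (3) is guarding against. You claim that sharing a uniformly random $\pi\in\fS_n$ via a maximally entangled state on $\bbmC^{n!}\otimes\bbmC^{n!}$ costs entanglement rate $\tfrac1n\log n!\to 0$. This is false: by Stirling, $\log n! = n\log n - O(n)$, so $\tfrac1n\log n!\sim\log n\to\infty$, and your code consumes entanglement at an unbounded rate, violating $\tfrac1n\log L\le \dim\cH_A+\delta$ (and indeed violating the rate-boundedness built into Definition \ref{def:avqc_ea_capacity}). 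The robustified bound $\tfrac{1}{n!}\sum_\pi f_m(\pi(s^n))\ge 1-(n+1)^{|S|}2^{-nc'}$ gives you a \emph{random} code whose randomness is too expensive to ship as entanglement directly.

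What is missing is the elimination-of-correlation step. The paper draws $K:=\lceil 2^{nc/8}\rceil$ i.i.d.\ uniform permutation pairs and uses a Markov/Chernoff bound plus a union bound over the $|S|^n\cdot M$ pairs $(s^n,m)$ to show that with positive probability a \emph{fixed} family $(\sigma_1,\alpha_1),\dots,(\sigma_K,\alpha_K)$ exists whose empirical error average is at most $2^{-n\hat c}$ uniformly in $s^n$ and $m$; this works because $K$ grows exponentially while the union bound only costs a factor exponential in $n$. Only then is the $K$-valued shared randomness embedded into a maximally entangled state of Schmidt rank $K$, which costs entanglement rate $\tfrac1n\log K\approx c/8$, absorbable into the $\delta$ slack of condition (3). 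Without this reduction from $n!\,$ (or $n!\,M!$) alternatives to $2^{O(n)}$ alternatives, your construction does not prove the lemma as stated. (Two minor remarks: your worry about the type-averaging identity is a non-issue, since $\sum_{s^n}q(s_1)\cdots q(s_n)\cN_{s^n}=\tilde\cN_q^{\otimes n}$ holds exactly by multilinearity, so the hypothesis of Theorem \ref{robustification-technique} is met for the fixed code's $f_m$ directly; and your reduction of $\inf_{\cN\in\conv(\fI)}$ to $\inf_{p}$ is fine.)
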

 \begin{proof}
  Applying Proposition \ref{prop:comp_ea_coding} on the compound channel generated by the set $\conv(\fI)$ of c.p.t.p. maps, we obtain, provided $n$ is large enough, an $(n,\tilde{L},M)$-EA message transmission code
  $\tilde{\cC}=(\Psi, \tilde{\cE}_m, \tilde{D}_m)_{m=1}^M$ with 
  \begin{align}
   \frac{1}{n} \log M &\geq \underset{\rho \in \cS(\cH_A)}{\sup} \ \underset{p \in \fP(S)}{\inf} \ I(\rho, \tilde{\cN}_p) - \delta,  \label{prf:fin_av_rate}\\
   e(n,\tilde{\cC}, \conv(\fI)) &\leq 2^{-nc} =: \tilde{\epsilon}_n,   \label{prf:fin_av_comp_err}
  \end{align}
  and $\tilde{L} \leq \dim (\cH_A)^{\otimes n}$. We define a function $f_m: S^n \rightarrow [0,1]$, $m \in [M]$ by 
  \begin{align}
   f_m(s^n) := \tr\left(\tilde{D}_m\left(\cN_{s^n}\circ \tilde{\cE}_m \otimes \id_{\cK_B}(\Psi)) \right)\right).  \label{prf:fin_av_rob_def}
  \end{align}
  From (\ref{prf:fin_av_comp_err}) we infer
  \begin{align*}
   \sum_{s^n \in S^n} p(s_1)\cdot \dots \cdot p(s_n) f_m(s^n) \geq 1 - \tilde{\epsilon}_n
  \end{align*}
   for each $p \in \fP(S)$, $m \in [M]$. Define, for each $\sigma \in \fS_n$, and $\alpha \in \fS_M$ an $(n,\tilde{L},M)$-EA message transmission code $\tilde{\cC}_{\alpha,\sigma} := (\Psi, \tilde{\cE}_{m,\alpha, \sigma}, 
   \tilde{D}_{m,\alpha,\sigma})_{m=1}^M$ with 
   \begin{align*}
    \tilde{\cE}_{m,\alpha,\sigma}(a) := U_{A,\sigma} \tilde{\cE}_{\alpha(m)}(a) U_{A,\sigma}^\ast &&(a \in \cL(\cH_{A}^{\otimes n})
   \end{align*} 
   and 
   \begin{align*}
    \tilde{D}_{m,\alpha,\sigma} := (U_{B,\sigma} \otimes \bbmeins_{\cK_B})\tilde{D}_{\alpha(m)}(U_{B,\sigma} \otimes \bbmeins_{\cK_B})^\ast
   \end{align*}
   for each $m \in [M]$, where $U_{A,\sigma}$, $U_{B,\sigma}$ are the unitaries on $\cH_A^{\otimes n}$ resp. $\cH_B^{\otimes n}$ permuting the tensor factors according to $\sigma$, i.e. 
   \begin{align*}
    U_{X,\sigma} \ x_1 \otimes \dots \otimes x_n := x_{\sigma(1)} \otimes \dots \otimes x_{\sigma(n)} &&(x_1,\dots,x_n \in \cH_X, \sigma \in \fS_n)
   \end{align*}
   with $X = A, B$. It then holds with the definitions given 
   \begin{align*}
    f_{\alpha(m)}(\sigma(s^n)) = \tr\left(\tilde{D}_{\alpha, \sigma, m}\left(\cN_{s^n}\circ \tilde{\cE}_{\alpha, \sigma, m} \otimes \id_{\cK_B}(\Psi)) \right)\right)
   \end{align*}
   for each $\sigma \in \fS_n, m \in [M], \ s^n \in S^n$. From (\ref{prf:fin_av_comp_err}), and (\ref{prf:fin_av_rob_def}), together with application of Theorem \ref{robustification-technique} and the fact that 
   permutations of the messages do not change the maximal error of the code,
   we conclude that
   \begin{align}
    1 - \frac{1}{n! M!} \sum_{\sigma \in \fS_n}\sum_{\alpha \in \fS_M} f_{\alpha(m)}(\sigma(s^n))\  \leq \  (n+1)^{|S|} \cdot \tilde{\epsilon}_n =: \epsilon_n \label{prf:av_fin_rcod}
    \end{align}
   holds for each $s^n \in S^n$. Let $X_1,\dots,X_K$ be a sequence of i.i.d. random variables each equidistributed on $\fS_n$, and $Y_1,\dots, Y_K$ be a sequence of i.i.d. random variables equidistributed on $\fS_M$ with 
   $K := \lceil 2^{n\frac{c}{8}} \rceil$. Define 
   $g_m(\alpha, \sigma ,s^n) := 1 - f_{\alpha(m)}(\sigma(s^n))$ for each $\sigma \in \fS_n, \alpha \in \fS_M, s^n \in S^n$. If we choose $\sigma$, and $\alpha$ randomly according to $(X_j, Y_j)$ the expectation equals the l.h.s. 
   of (\ref{prf:av_fin_rcod}), which implies 
   \begin{align}
    \bbmE\left[g_m(X_j, Y_j, s^n)\right] \leq \epsilon_n \label{prf:fin_av_expect}
   \end{align}
   for each $m \in [M], s^n \in S^n$. It holds for each $\nu > \epsilon_n$, $s^n \in S^n$, and $\alpha > 0$
   \begin{align}
    \prob\left(\sum_{k=1}^K g_m(X_k, Y_k,s^n) > K \nu \right) 
    &= \prob\left(\prod_{k=1}^K \exp(\alpha \cdot g_m(X_k, Y_k, s^n)) > 2^{\alpha K  \nu}\right) \nonumber \\
    &\leq 2^{-\alpha K\nu} \cdot \prod_{k=1}^{K}\bbmE\left[\exp(\alpha \cdot g_m(X_k, Y_k ,s^n))\right],
  \end{align}    
    by Markov's inequality. By convexity of the exponential function, it holds $2^{\alpha x} \leq (1-x) 2^{\alpha \cdot 0} + x \cdot 2^{\alpha \cdot 1} \leq 1 + 2^\alpha \cdot x$ for each $x \in [0,1]$. Consequently 
    \begin{align}
     \bbmE\left[\exp(\alpha \cdot g_m(X_k, Y_k ,s^n))\right] \leq 1 + 2^{\alpha} \cdot  \bbmE g_m(X_k, Y_k ,s^n) \leq 1 + 2^\alpha \epsilon_n
    \end{align}
    holds for each $k \in [K], m \in [M]$. We have
    \begin{align}
    \prob\left(\sum_{k=1}^K g_m(X_k, Y_k,s^n) > K \nu \right) 
    &\leq 2^{-\alpha K\nu} \cdot (1 + 2^\alpha \epsilon_n)^K \\
    &\leq 2^{-K(\alpha \nu - \log(1 + 2^\alpha \epsilon_n)) } \\   
    &\leq 2^{-K(\alpha \nu - 2 \cdot 2^\alpha \epsilon_n)) } \label{prf:av_fin_prob_bound}
    \end{align}
    where the last inequality holds for each large enough $n$ by $\log(1+x) \leq 2x$ which is true for all $x \in [0,1]$.
    From (\ref{prf:av_fin_prob_bound}), the choice $\alpha = 1$ and de Morgan's laws, we conclude that
    \begin{align}
     \prob\left(\forall s^n \in S^n, m \in [M]: \ \frac{1}{K}\sum_{k=1}^K \ \tr\left(\tilde{D}_{Y_k, X_k, m}\left(\cN_{s^n}\circ \tilde{\cE}_{Y_k, X_k, m} \otimes \id_{\cK_B}(\Psi)) \right)\right)  \leq \nu \right) 
     \geq 1 -  |S|^n \cdot M \cdot 2^{-K(\nu - 4 \epsilon)} \label{prf:av_fin_prob_bound_x}
    \end{align}
   holds. If we set $\nu := 2^{-n\hat{c}}$, $\hat{c} := c/4$, and choose $n$ large enough, $2^{-K(\nu -4 \epsilon_n)}$ does grow super-exponentially with $n$, therefore the r.h.s. of (\ref{prf:av_fin_prob_bound_x}) is strictly positive 
   by our choice of $K$. Consequently, we find a family $\{(\sigma_1, \alpha_1),\dots, (\sigma_K, \alpha_K)\}$, such that 
   \begin{align}
    \frac{1}{K} \sum_{k=1}^K \ e_{av}(n,\tilde{\cC}_{\alpha_k,\sigma_k}, \cN_{s^n}) \leq 2^{-n\hat{c}}
   \end{align}
   holds.  For each large enough blocklength $n$, we define an $(n,L,M)$-EA message transmission code $\cC := (\Psi \otimes \Phi, \cE_m, D_m)_{m=1}^M$ with $\Phi$ being the maximally entangled state with Schmidt vector
   \begin{align}
    \phi := \frac{1}{\sqrt{K}} \sum_{k=1}^K e_k \otimes e_k
   \end{align}
   on $\tilde{\cK}_A \otimes \tilde{\cK}_B$, $\tilde{\cK}_A = \tilde{\cK}_B = \bbmC^K$, 
   \begin{align}
    \cE_m(a) := \sum_{k=1}^K \tilde{\cE}_{m,\alpha_k,\sigma_k} \circ \tr_{\tilde{\cK}_A}\left((\bbmeins_{\cK_A} \otimes \ket{e_k}\bra{e_k})a\right) &&\left(a \in \cL(\cK_A \otimes \tilde{\cK}_A)\right),
   \end{align}
    and 
    \begin{align}
     D_m := \sum_{k=1}^K \tilde{D}_{m, \alpha_k, \sigma_k} \otimes \ket{e_k} \bra{e_k}
    \end{align}
    for each $m \in [M]$. By definition of $\cC$, we have
    \begin{align}
     e_{av}(n, \cC, \cN_s) = \frac{1}{K} \sum_{k=1}^K e_{av}(n, \tilde{\cC}_{\alpha_k,\sigma_k}, \cN_{s^n}) \leq 2^{-n\frac{c}{2}}.
    \end{align}
    The bound on $L$ is easily verified, it holds, by construction $L = \tilde{L} \cdot K$, and therefore
    \begin{align}
    \frac{1}{n}\log L =  \dim \cH_A + \frac{nc}{8n},
    \end{align}
    which clearly verifies the bound on $L$, if $n$ is large enough.
   \end{proof}
 Next we drop the condition of finiteness for the set $\fI$ generating the AVQC. We show existence of codes for each large enough blocklength, suitable to show the achievability part in 
 Theorem \ref{theorem:avqc_ea_capacity}. This will be done by using codes as derived in Lemma \ref{lemma:finite avqc_coding} for a suitable approximation of $\fI$ by a finite AVQC. Explicitly, such an
 approximation is obtained, by approximating $\conv(\fI)$. For this reason, we first introduce some notions and results from convex geometry. \newline
 For a subset $A$ of a normed space $(V,\|\cdot\|)$, $\overline{A}$ is the closure and $\aff A$ is the affine 
 hull of $A$. If $A$ is a 
 convex set, the relative interior $\ri A$ is the interior and the relative boundary $\rebd A$ of $A$ 
 are the interior and boundary of $A$ regarding the topology on $\aff A$ induced by $\|\cdot\|$. 
  \begin{lemma}[Ref. \cite{ahlswede13}, Lemma 34]\label{haussdorff_1}
  Let $A$, $B$ be compact sets in  $\bbmC^n$ with $A \subset B$ and
  \begin{align}
   d_H(\rebd B, A) = t > 0,
  \end{align}
  where $d_H$ is the Hausdorff distance induced by any norm $\| \cdot \|$ on $\bbmC^n$. Let $P$ a polytope with $A \subset P$ and $d_H(A,P) \leq 
  \delta$, where $\delta \in (0,t]$. Then 
  $P' := P \cap \aff A$ is also a polytope and $P \subset B$. 
 \end{lemma}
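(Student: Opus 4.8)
The plan is to treat the structural claim and the geometric one separately. Structurally, $\aff A$ is an affine subspace, hence the intersection of finitely many affine hyperplanes, so $P':=P\cap\aff A$ is a bounded intersection of finitely many closed half-spaces, i.e.\ again a polytope; and $A\subseteq P$ together with $A\subseteq\aff A$ gives $A\subseteq P'$. Two remarks on the hypotheses before the real work: the conclusion that is relevant downstream is $P'\subseteq B$; and the hypothesis $d_H(\rebd B,A)=t$ is to be read as saying that $A$ and $\rebd B$ are separated by a gap of size at least $t$ --- equivalently $\mathrm{dist}(a,\rebd B)\ge t$ for every $a\in A$, so in particular $A\subseteq\ri B$. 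Together with $B$ being compact and convex, so that $B=\ri B\cup\rebd B$ with $\ri B$ relatively open and nonempty in $\aff B$, this is all I will use about $A$, $B$, $t$.

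For the geometric part I would prove $P'\subseteq B$ by a segment argument. Since $A\subseteq B$ we have $\aff A\subseteq\aff B$, so everything takes place inside $\aff B$. Fix $y\in P'$. From $y\in P$ and $d_H(A,P)\le\delta$ one gets $\mathrm{dist}(y,A)\le\delta$, and by compactness of $A$ there is $a^\ast\in A$ with $\|y-a^\ast\|=\mathrm{dist}(y,A)\le\delta\le t$; the segment $[a^\ast,y]$, indeed the whole line through $a^\ast$ and $y$, lies in $\aff\{a^\ast,y\}\subseteq\aff A\subseteq\aff B$. Suppose $y\notin B$ and set $\lambda^\ast:=\sup\{\lambda\in[0,1]:a^\ast+\lambda(y-a^\ast)\in B\}$; this set is nonempty (it contains $0$) and closed ($B$ is compact), hence contains $\lambda^\ast$, and $\lambda^\ast<1$ since $y\notin B$. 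Put $z:=a^\ast+\lambda^\ast(y-a^\ast)\in B$. If $z$ were in $\ri B$, then since $\ri B$ is relatively open in $\aff B$ and the line through $a^\ast,y$ stays in $\aff B$, points $a^\ast+\lambda(y-a^\ast)$ with $\lambda$ slightly above $\lambda^\ast$ and still $\le 1$ would lie in $\ri B\subseteq B$, contradicting maximality of $\lambda^\ast$; hence $z\in B\setminus\ri B=\rebd B$. Then $\mathrm{dist}(a^\ast,\rebd B)\le\|z-a^\ast\|=\lambda^\ast\|y-a^\ast\|\le\|y-a^\ast\|\le\delta\le t$, and the hypothesis $\mathrm{dist}(a^\ast,\rebd B)\ge t$ forces equality throughout: either $\|y-a^\ast\|=0$, whence $y=a^\ast\in A\subseteq B$, or $\|y-a^\ast\|=t>0$ and $\lambda^\ast=1$, whence $y=z\in\rebd B\subseteq B$. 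Both contradict $y\notin B$, so $y\in B$; as $y\in P'$ was arbitrary, $P'\subseteq B$.

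The main obstacle, such as it is, is the small piece of convex geometry that locates the first exit point $z$ of the segment on $\rebd B$: the $\sup$-and-closedness definition of $\lambda^\ast$ together with the fact that a point of the relatively open set $\ri B$ cannot be such a last point of $B$ along the segment --- no separation theorems are needed. The only other thing requiring care is that the boundary value $\delta=t$ permitted in the statement turns the expected strict contradiction into a chain of equalities, which is why the degenerate cases $\lambda^\ast=1$ and $a^\ast=y$ must be mentioned explicitly; they too place $y$ inside $B$. Finally, this is the point of the lemma: applied with $A=\conv(\fI)$ and $B$ a slightly fattened convex set of channels, the containments $A\subseteq P'\subseteq B$ let $\conv(\fI)$ be replaced, in the robustification step of the proof of Theorem~\ref{theorem:avqc_ea_capacity}, by the polytope $P'$ --- hence, via its finitely many extreme points, by a finite AVQC --- whose generated channels still lie between $\fI$ and its fattening.
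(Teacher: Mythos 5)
Your proof is correct, but there is nothing in the paper to compare it against: the lemma is imported verbatim from \cite{ahlswede13} (Lemma 34) and used as a black box in the proof of Lemma \ref{lemma:arb_av_coding}, so the paper supplies no proof of its own. Judged on its merits, your argument is sound, and the two repairs you make to the statement are exactly the ones needed for it to be true and usable downstream: the conclusion must read $P'\subset B$ rather than $P\subset B$ (a polytope Hausdorff-close to $A$ can still leave $\aff B$, so only its slice $P'=P\cap\aff A$ can be expected to stay inside $B$), and the hypothesis $d_H(\rebd B,A)=t$ must be read as the separation condition $\inf\{\|a-x\|:\ a\in A,\ x\in\rebd B\}=t$, since with the literal Hausdorff distance the claim is false (e.g.\ $B$ a square, $A$ its diagonal, $P$ a thin polytope around the diagonal protruding past the corners). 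Granting that reading, both halves of your argument are complete: the structural step is Weyl--Minkowski (cutting an H-polytope by the finitely many hyperplanes defining $\aff A$), and the exit-point argument correctly places the last point $z$ of $B$ along the segment from the nearest point $a^\ast\in A$ to $y$ on $\rebd B$, forcing $\mathrm{dist}(a^\ast,\rebd B)\le\delta\le t$ and a contradiction, with the boundary case $\delta=t$ handled properly.
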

 
 \begin{lemma}\label{lemma:arb_av_coding}
  Let $\fI := \{\cN_s\}_{s \in S} \subset \cC(\cH_A, \cH_B)$  be a set of c.p.t.p. maps. For each $\delta > 0$, $\epsilon > 0$, there exists a number $n_0 := n_0(\epsilon, \delta)$, such that
  for each $n > n_0$ there is an $(n,L,M)$-EA message transmission code $\cC$ fulfilling 
  \begin{enumerate}
   \item $e_{av}(n,\cC, \fI) \leq 2^{-n c}$ with a constant $c > 0$,
   \item $\frac{1}{n} \log M \geq \sup_{\rho \in \cS(\cH_A)} \ \inf_{\tilde{\cN} \in \overline{\conv(\fI)}} I(\rho, \tilde{\cN}) - \delta$, and
   \item $\frac{1}{n} \log L \leq \dim \cH_A + \delta$.
  \end{enumerate}
 \end{lemma}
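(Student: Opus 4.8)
The plan is to derive the statement from the finite-index case, Lemma~\ref{lemma:finite avqc_coding}, via a convex-geometric reduction and a signed-combination transfer argument. Put $A := \overline{\conv(\fI)}$, fix a channel $\cN_\ast$ in the relative interior $\ri(A)$ (taken inside $\aff(A)$), and for small $\mu\in(0,1)$ form the shrunken body $A_\mu := (1-\mu)A + \mu\,\cN_\ast$. By the line-segment principle of convex analysis $A_\mu$ is a compact convex set sitting in the relative interior of $\cC(\cH_A,\cH_B)\cap\aff(A)$, at positive distance from its relative boundary, so Lemma~\ref{haussdorff_1} --- applied inside $\aff(A)$ with that section as the bounding set --- provides a polytope $P'\subseteq\cC(\cH_A,\cH_B)$ with $A_\mu\subseteq P'$ and $D_\Diamond$-distance from $A_\mu$ as small as we please. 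Its vertex set $\fJ=\{\cM_1,\dots,\cM_N\}$ is then a \emph{finite} set of genuine c.p.t.p.\ maps with $A_\mu\subseteq\conv(\fJ)$ and $D_\Diamond(\overline{\conv(\fI)},\conv(\fJ))\le\tau$, for a $\tau$ that can be made arbitrarily small by refining $\mu$ and the approximation.

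Next I would apply Lemma~\ref{lemma:finite avqc_coding} to the finite AVQC generated by $\fJ$: for all large $n$ it yields an $(n,L,M)$-EA code $\cC$ with $e_{av}(n,\cC,\fJ)\le 2^{-n\hat c}$, with $\tfrac{1}{n}\log M\ge\sup_{\rho}\inf_{\cN\in\conv(\fJ)}I(\rho,\cN)-\tfrac{\delta}{2}$, and with $\tfrac{1}{n}\log L\le\dim\cH_A+\delta$, which is already assertion~(3). Here I would invoke the crucial fact --- read off from the proofs of Lemma~\ref{lemma:finite avqc_coding} and Proposition~\ref{prop:comp_ea_coding} --- that $\hat c>0$ may be taken independent of $\fJ$, depending only on $\delta$, $\dim\cH_A$ and $\dim\cH_B$, so that refining the polytope does not shrink it. For assertion~(2): every $\cN\in\conv(\fJ)$ lies within $\tau$ of $\overline{\conv(\fI)}$, hence by uniform continuity of $(\rho,\cN)\mapsto I(\rho,\cN)$ in the channel argument, $\inf_{\cN\in\conv(\fJ)}I(\rho,\cN)\ge\inf_{\cN'\in\overline{\conv(\fI)}}I(\rho,\cN')-\xi(\tau)$ with $\xi(\tau)\to0$; choosing $\mu$ and the approximation fine enough that $\xi(\tau)\le\delta/2$ then gives $\tfrac{1}{n}\log M\ge\sup_\rho\inf_{\cN\in\overline{\conv(\fI)}}I(\rho,\cN)-\delta$.

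It remains to transfer the exponentially small error from the finite AVQC $\fJ$ to $\fI$, and this is the step I expect to be the genuine obstacle, because $\overline{\conv(\fI)}$ is in general not a polytope, so one cannot demand $\fI\subseteq\conv(\fJ)$. Instead I would use, for each $s\in S$, the affine identity $\cN_s=\tfrac{1}{1-\mu}\,\cP_s-\tfrac{\mu}{1-\mu}\,\cN_\ast$ with $\cP_s:=(1-\mu)\cN_s+\mu\cN_\ast$, together with the fact that \emph{both} $\cP_s\in A_\mu\subseteq\conv(\fJ)$ and $\cN_\ast\in\ri(A)\subseteq A\subseteq A_\mu\subseteq\conv(\fJ)$. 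Expanding $\cN_{s^n}=\bigotimes_{i=1}^n(\tfrac{1}{1-\mu}\cP_{s_i}-\tfrac{\mu}{1-\mu}\cN_\ast)$ by multilinearity of the tensor product gives $\cN_{s^n}=\sum_{T\subseteq[n]}c_T\,\Theta_T$, where each $\Theta_T$ is a tensor product of channels lying in $\conv(\fJ)$ --- hence, the error functional being affine in each tensor factor, a channel the code $\cC$ handles with error $\le 2^{-n\hat c}$ --- while $\sum_T c_T=1$ and $\sum_T|c_T|=(\tfrac{1+\mu}{1-\mu})^n$. Since the terms with $c_T<0$ can only decrease the error, this yields, uniformly in $s^n\in S^n$, the error bound $2^{-n\hat c}\cdot\tfrac{1}{2}(1+(\tfrac{1+\mu}{1-\mu})^n)\le 2^{-n\hat c}(\tfrac{1+\mu}{1-\mu})^n$. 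Fixing $\mu$ at the outset small enough that $\log\tfrac{1+\mu}{1-\mu}\le\hat c/2$ --- legitimate precisely because $\hat c$ does not depend on $\mu$ --- finally gives $e_{av}(n,\cC,\fI)\le 2^{-nc}$ with $c:=\hat c/2>0$, which is assertion~(1). The crux is exactly this balancing: the error exponent furnished by the finite coding theorem must survive both the refinement of the approximating polytope and the $(\tfrac{1+\mu}{1-\mu})^n$ blow-up from the signed decomposition.
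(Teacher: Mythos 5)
Your proposal is correct, but the error-transfer step is genuinely different from the one in the paper, so a comparison is in order. The paper also reduces to Lemma \ref{lemma:finite avqc_coding} via a circumscribing polytope, but it pushes $\overline{\conv(\fI)}$ into the relative interior of the \emph{full} set $\cC(\cH_A,\cH_B)$ by composing with a depolarizing channel $\cD_\gamma$, takes the vertices of a polytope there (these are channels, but generally not in $\overline{\conv(\fI)}$), and then transfers the error bound by absorbing $\cD_\gamma^{\otimes n}$ into the decoding POVM through the Hilbert--Schmidt adjoint, $D_m := (\cD_\gamma^\ast\otimes\id_{\cK_B})^{\otimes n}(\tilde D_m)$; the effective channel $\cD_\gamma\circ\cN_{s^n}$ is then an honest convex combination of vertex tensor products, so the error transfers by convexity alone with no blow-up, and the rate loss is controlled by Fannes' inequality applied to $\cD_\gamma$. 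Your route instead shrinks $\overline{\conv(\fI)}$ towards one of its own relative-interior points and pays for the fact that $\cN_s$ itself need not lie in the polytope by a \emph{signed} affine decomposition, absorbing the $\left(\tfrac{1+\mu}{1-\mu}\right)^n$ blow-up of the $\ell_1$-norm of the coefficients into the error exponent. This is valid --- the decomposition identity, the nonnegativity of each term $\tr\{D_m^c(\Theta_T\circ\cE_m\otimes\id_{\cK_B})(\Psi)\}$, and the bound $\sum_{T:c_T>0}c_T\le\left(\tfrac{1+\mu}{1-\mu}\right)^n$ all check out --- and it has the mild advantage that the polytope's vertices land inside $\overline{\conv(\fI)}$, which makes the rate comparison essentially trivial (no continuity argument is even needed there). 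But it buys this at the cost of two things the paper's argument does not require: (i) the error decay from the finite coding theorem must be genuinely exponential, since a merely vanishing error would not survive the $\left(\tfrac{1+\mu}{1-\mu}\right)^n$ factor; and (ii) the exponent $\hat c$ must be uniform over the approximating polytopes, i.e.\ independent of $\mu$ and $\fJ$. Point (ii) is the one place where you must open the black box: the statement of Lemma \ref{lemma:finite avqc_coding} only asserts $\hat c=\hat c(\fI,\delta)$, and it is only by tracing the constant back through Proposition \ref{prop:comp_ea_coding} to the universal compound cq constant $\tilde c(\eta,p)$ of Proposition \ref{prop:comp_cq_codes} (which does not depend on the channel set) that one sees $\hat c$ depends only on $\delta$ and the dimensions. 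You flag this correctly, so the circularity between choosing $\mu$ and obtaining $\hat c$ is broken; just be aware that this makes your proof sensitive to the internals of the earlier lemmas in a way the paper's depolarize-and-adjoint trick is not.
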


 \begin{proof}
  Let $\delta > 0$ be a fixed number, such that 
  \begin{align}
   \sup_{\rho \in \cS(\cH_A)} \ \inf_{\tilde{\cN} \in \overline{\conv(\fI)}} I(\rho, \tilde{\cN}) - \delta > 0
  \end{align}
  holds (otherwise the lemma is trivial). For technical reasons, it is necessary that $\overline{\conv(\fI)}$ does not touch the boundary of $\cC(\cH_A, \cH_B)$. To also cover situations, where this is not the case, let 
  for $\gamma \in [0,1)$, $\cD_\gamma$ be the $\gamma$-depolarizing channel on $\cL(\cH_A)$ defined by 
  \begin{align}
   \cD_\gamma(x) := (1-\gamma)x + \gamma \cdot \tr(x) \frac{\bbmeins_{\cH_A}}{\dim \cH_A} &&(x \in \cL(\cH_A)).
  \end{align}
  Since $\cD_\gamma \circ \cN  \notin \rebd \cC(\cH_A, \cH_B)$ holds for each $\cN \in \cC(\cH_A, \cH_B)$, $\gamma \in (0,1)$, we have by Lemma 2.3.3 in \cite{webster94} 
  \begin{align}
   \cD_\gamma(\overline{\conv(\fI)}) \subsetneq \ri \cC(\cH_A, \cH_B),
  \end{align}
  which implies, via the obvious relation $\cD_\gamma(\overline{\conv(\fI)}) = \overline{\cD_\gamma(\conv(\fI))}$, positive distance of $\cD_\gamma(\overline{\conv(\fI)})$ to the relative boundary of $\cC(\cH_A, \cH_B)$, i.e.
  \begin{align}
   \min\left\{\|\cN - \cN'\|_{\Diamond}:\ \cN \in \cD_\gamma(\overline{\conv(\fI)}),\ \cN \in \rebd\cC(\cH_A, \cH_B)\right\} \ > \ 0.
  \end{align}
  With Lemma \ref{haussdorff_1} and Theorem 3.1.6 from \cite{webster94}, we find a polytope $\fF_\gamma$ such that $\cD_\gamma(\overline{\conv(\fI)}) \subset \fF_\gamma \subsetneq \cC(\cH_A, \cH_B)$, and moreover 
  \begin{align}
   D_{\Diamond}(\cD_\gamma(\overline{\conv(\fI)}), \fF_\gamma) \leq 2 \eta \label{polytope_approx_distance}
  \end{align}
  holds. Let 
  $\fE_\gamma := \{\tilde{\cN}_e\}_{e \in E_\gamma}$ be the (finite) set of extremal elements of $\fF_\gamma$. 
  If $n$ is large enough, we find, according to Lemma \ref{lemma:finite avqc_coding}, an $(n,L,M)$-EA message transmission code $\tilde{\cC} := (\Psi, \cE_m, \tilde{D}_m)_{m=1}^M$ for the AVQC generated by $\fE_\gamma$, which fulfills
  \begin{align}
   e_{av}(n,\tilde{\cC}, \fE_\gamma) \leq 2^{-n c}, \label{prf:arb_av_coding_perf}
  \end{align}
   $c > 0$, and 
   \begin{align}
    \frac{1}{n} \log M \geq \sup_{\rho \in \cS(\cH_A)} \ \inf_{\tilde{\cN} \in \fF_\gamma} \ I(\rho,\tilde{\cN}) - \frac{\delta}{2}.  \label{prf:arb_av_coding_rate}
   \end{align}
   Each member of $\cD_\gamma(\conv(\fI))$ can be written as a convex combination of elements of $\fE_\gamma$, i.e. 
   \begin{align}
    \cD_\gamma \circ \cN_s  = \sum_{e \in E_\gamma} \ q(e|s) \tilde{\cN}_e
   \end{align}
   with a probability distribution $q(\cdot|s)$ on $E_\gamma$ for each $s \in S$. Define, based on the objects from $\tilde{\cC}$ an $(n,L,M)$-code $\cC := (\Psi, \cE_m, D_m)_{m=1}^M$ with the definition
   $D_m := (\cD_\gamma^\ast \otimes \id_{\cK_B})^{\otimes n}(\tilde{D}_m)$ for each $m \in [M]$, where $\cD_\gamma^{\ast}$ denotes the unital Hilbert-Schmidt adjoint of $\cD_\gamma$. 
   It holds, for each $s^n \in S^n$, $m \in [M]$,
   \begin{align}
    \tr\left(D_m\left(\cN_ {s^n} \circ \cE_m \otimes \id_{\cK_B} (\Psi)\right)\right)
    &= \tr\left(\tilde{D}_m\left(\cD_\gamma^{\otimes n} \circ \cN_ {s^n} \circ \cE_m \otimes \id_{\cK_B} (\Psi)\right)\right) \\
    &= \tr\left(\tilde{D}_m\left(\bigotimes _{i=1}^n  \cD_\gamma \circ \cN_{s_i} \circ \cE_m \otimes \id_{\cK_B} (\Psi)\right)\right) \\
    &= \tr\left(\tilde{D}_m\left(\bigotimes _{i=1}^n  \sum_{e_i \in E_\gamma} q(e_i|s_i) \tilde{\cN}_{e_i} \circ \cE_m \otimes \id_{\cK_B} (\Psi)\right)\right) \\
    &= \sum_{e^n \in E_\gamma^n} q^n(e^n|s^n) \ \tr\left(\tilde{D}_m \left(\tilde{\cN}_{e^n} \circ \cE_m \otimes \id_{\cK_B}(\Psi)\right)\right). 
    \end{align}
    Rearranging and maximizing both sides of the equality above over all messages $m \in [M]$, we infer using (\ref{prf:arb_av_coding_perf})
    \begin{align}
       e_{av}(n,\cC, \fI) \leq e_{av}(n, \tilde{\cC}, \fE_\gamma) \leq 2^{-nc}. 
    \end{align}
    On the other hand, by (\ref{polytope_approx_distance}) in combination with Fannes' inequality, it holds
    \begin{align}
     \sup_{\rho \in \cS(\cH_A)} \ \inf_{\tilde{\cN} \in \fF_\gamma} \ I(\rho,\tilde{\cN}) \geq \sup_{\rho \in \cS(\cH_A)} \ \inf_{\tilde{\cN} \in \overline{\conv(\fI)}} \ I(\rho,\tilde{\cN}) - f(2\gamma).
    \end{align}
    with a function $f: [0,1] \rightarrow [0,1]$, $f(x) \rightarrow 0 \ (x \rightarrow 0)$. Choosing $\gamma > 0$ small enough, it holds with (\ref{prf:arb_av_coding_rate})
    \begin{align}
     \frac{1}{n}\log M \geq \inf_{\tilde{\cN} \in \overline{\conv(\fI)}} \ I(\rho,\tilde{\cN}) - \delta.
    \end{align}
    We are done.
  \end{proof} 
  \begin{proof}[Proof of Theorem \ref{theorem:avqc_ea_capacity}.]
   The inequality 
   \begin{align}
    C_{EA}^{AV}(\fI) \geq \sup_{\rho \in \cS(\cH_A)} \ \inf_{\tilde{\cN} \in \overline{\conv(\fI)}} I(\rho, \tilde{\cN}) 
   \end{align}
   follows directly from Lemma \ref{lemma:arb_av_coding}. The converse inequality is also obvious, since $C_{EA}^{AV}(\fI) \leq C_{EA}(\conv(\fI))$.
  \end{proof}
 \end{subsection}
  Directly from the characterization of the EA message transmission capacities in (\ref{theorem:avqc_ea_capacity_1}), we obtain the following two
  corollaries.
  \begin{proof}[Proof of Corollary \ref{corr:avqc_ea_additivity}]
    The inequality $\overline{C}_{EA}^{AV}(\fI) + \overline{C}_{EA}^{AV}(\fI') \leq \overline{C}_{EA}^{AV}(\fI \otimes \fI')$ is obviously true from the operational definition of the capacities. To  show the reverse 
    inequality, we first consider two  finite sets $\fI := \{\cN_s\}_{s \in S}$, $\fI' := \{\cN_{s'}\}_{{s'} \in S'}$, $|S|, |S'| < \infty$. For each $p \in \fP(S \times S')$ with marginal distributions 
    $q \in \fP(S)$, $q' \in \fP(S')$, we define 
    \begin{align}
     \overline{\cM}_p &:= \sum_{(s,s') \in S \times S'} \ p(s,s') \ \cN_s \otimes \cN'_{s'}, \\
     \overline{\cN}_q &:= \sum_{s \in S} \ q(s) \ \cN_s, \\
     \overline{\cN}'_{q'} &:= \sum_{s' \in S'} \ q'(s') \ \cN'_{s'}.
    \end{align}
    Subadditivity of the quantum mutual information (Lemma \ref{mutual_subadditivity}) then implies for each state $\rho \in \cS(\cH_A \otimes \cH'_A)$ the inequality 
    \begin{align}
     I(\rho, \overline{\cM}_p) \ \leq \ I(\sigma, \overline{\cN}_q) + I(\sigma', \overline{\cN}_{q'}), \label{prf:avqc_add}
    \end{align}
    where $\sigma := \tr_{\cH'_A}\rho$, $\sigma' := \tr_{\cH_A}\rho$ are the marginals of $\rho$. Minimizing both sides of (\ref{prf:avqc_add}) over all probability distributions on $S \times S'$, it holds
     \begin{align}
     \inf_{p \in \fP(S \times S')} \ I(\rho, \overline{\cM}_p) \ \leq \  \inf_{q \in \fP(S)} I(\sigma, \overline{\cN}_q) + \inf_{q' \in \fP(S')} I(\sigma', \overline{\cN}_{q'}). \label{prf:avqc_add_2}
     \end{align}
    We now drop the condition of finiteness on the sets $\fI$, $\fI'$, and notice that Caratheodory's Theorem allows to express each member of $\conv(\fI \otimes \fI')$ by a convex combination of finitely many 
    channels from $\fI \otimes \fI'$ (the same statement holds for $\conv(\fI)$, $\conv(\fI')$). Therefore, we conclude
    \begin{align}
     \inf_{\cM \in \conv(\fI \otimes \fI)} I(\rho, \cM) 
     &= \inf_{\substack{\tilde{S} \subset S \\ |\tilde{S}| < \infty}}\inf_{\substack{\tilde{S}' \subset S' \\ |\tilde{S}'| < \infty}} \inf_{p \in \fP(\tilde{S} \times \tilde{S}')} I(\rho, \tilde{\cM}_p) \label{prf:avqc_add_3}\\
     &\leq \inf_{\substack{\tilde{S} \subset S \\ |\tilde{S}| < \infty}}\inf_{\substack{\tilde{S}' \subset S' \\ |\tilde{S}'| < \infty}} \inf_{\substack{q \otimes q' \\ q \in \fP(\tilde{S}), q' \fP(\tilde{S}')}} 
     I(\rho, \tilde{\cM}_{q \otimes q'}) \label{prf:avqc_add_3_a}\\
     &\leq  \inf_{\substack{\tilde{S} \subset S \\ |\tilde{S}| < \infty}} \inf_{q \in \fP(\tilde{S})} I(\sigma, \overline{\cN}_{q}) + \inf_{\substack{\tilde{S}'  
         \subset S' \\ |\tilde{S}'| < \infty}} \inf_{q' \in \fP(\tilde{S}')} I(\sigma', \overline{\cN}_{q'}) \label{prf:avqc_add_4} \\
     &= \inf_{\tilde{\cN} \in \conv(\fI)} I(\sigma, \tilde{\cN}) + \inf_{\tilde{\cN}' \in \conv(\fI')} I(\sigma, \tilde{\cN}'). \label{prf:avqc_add_5}
    \end{align}
    The equalities in (\ref{prf:avqc_add_3}), (\ref{prf:avqc_add_5}) follow from Caratheodory's Theorem. The inequality in (\ref{prf:avqc_add_3_a}) arises from the fact that the minimization over all probability distributions
    on $S \times S'$ is replaced by minimization over the smaller set of product probability distributions. The inequality in (\ref{prf:avqc_add_4}) is by application of (\ref{prf:avqc_add_2}). Maximizing over all 
    $\rho \in \cS(\cH_A \otimes \cH'_A)$ together with application of Theorem \ref{theorem:avqc_ea_capacity} yields 
    \begin{align*}
     \overline{C}_{EA}^{AV}(\fI \otimes \fI') \leq \overline{C}_{EA}^{AV}(\fI) + \overline{C}_{EA}^{AV}(\fI'). 
    \end{align*}
    \end{proof}
    \begin{proof}[Proof of Corollary \ref{corr:avqc_cap_stab}]
   The claim follows immediately from Theorem \ref{theorem:avqc_ea_capacity}. Twofold application of Alicki-Fannes' inequality \cite{alicki04} implies for each state $\rho \in \cS(\cH_A)$ and channels $\cN, \cN' \in \cC(\cH_A,\cH_B)$,
   $\|\cN - \cN' \|_{\Diamond} := \gamma$
   \begin{align}
    \left|I(\rho, \cN) - I(\rho, \cN') \right| \leq 6 \gamma \log \dim \cH_A + h(\gamma).
   \end{align}
   We conclude 
   \begin{align}
    \left|C_{EA}^{AV}(\fI) - C_{EA}^{AV}(\fI')\right| 
    &\leq 6 D_{\Diamond}(\conv(\fI), \conv(\fI')) \dim \cH_A + h(D_{\Diamond}(\conv(\fI), \conv(\fI'))) \\
    &\leq 6 D_{\Diamond}(\fI, \fI') \dim \cH_A + h(D_{\Diamond}(\fI, \fI')).
   \end{align}
   \end{proof}
   We conclude this section by showing that the strong converse theorem \ref{theorem:strong_converse} can be easily inferred from strong converse statements for the corresponding capacities of perfectly known memoryless
   quantum channels. Strong converse bounds for this case have been shown in \cite{bennett14}. The following statement is the result from \cite{gupta14} rephrased to fit our notation. Actually, the assertions
   proven therein are even stronger than stated below. It has been shown that for each sequence of codes with rates strictly above the capacity, the transmission errors approach one with exponentially decreasing 
   trade-offs in the asymptotic limit. 
   \begin{theorem}[\cite{gupta14}, Theorem 11]
   Let $\cN \in \cC(\cH_A, \cH_B)$ be a channel. For each $\lambda \in (0, 1)$, and $R_e < \infty$, 
   \begin{align}
    \limsup_{n \rightarrow \infty} \ \frac{1}{n}\log \overline{N}_{EA}(n, \fI, R_e, \lambda) \ \leq \sup_{\rho \in \cS(\cH_A)} \ I(\rho, \cN).
   \end{align}
   \end{theorem}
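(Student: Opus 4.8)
The plan is to prove this memoryless strong converse via the \emph{sandwiched R\'enyi relative entropy}, following the multiplicative strategy of \cite{gupta14}. Throughout I identify the singleton $\fI = \{\cN\}$ so that $\overline{N}_{EA}(n,\fI,R_e,\lambda)$ refers to the single channel $\cN$. For $\alpha > 1$ and states $\rho, \sigma$ with $\supp(\rho) \subseteq \supp(\sigma)$, recall
\begin{align*}
 \tilde{D}_\alpha(\rho \| \sigma) := \frac{1}{\alpha - 1} \log \tr\left[\left(\sigma^{\frac{1-\alpha}{2\alpha}} \rho \, \sigma^{\frac{1-\alpha}{2\alpha}}\right)^\alpha\right],
\end{align*}
and define the associated channel mutual information
\begin{align*}
 \tilde{I}_\alpha(\cN) := \sup_{\rho \in \cS(\cH_A)} \ \inf_{\sigma \in \cS(\cH_B)} \ \tilde{D}_\alpha\left(\omega_{RB} \,\big\|\, \omega_R \otimes \sigma\right),
\end{align*}
where $\omega_{RB} := (\id_R \otimes \cN)(\ket{\psi}\bra{\psi})$ for a purification $\psi$ of $\rho$ on $\cH_R \otimes \cH_A$, and $\omega_R := \tr_{\cH_B} \omega_{RB}$. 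This quantity interpolates to the ordinary mutual information, $\lim_{\alpha \downarrow 1} \tilde{I}_\alpha(\cN) = \sup_{\rho} I(\rho, \cN)$, which is what furnishes the eventual single-letter bound.

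First I would establish a \emph{one-shot converse bound}: for every $(n,L,M)$-EA code $\cC$ for $\cN$ with average error $\overline{e}(n,\cC,\{\cN\}) \leq \lambda$, and every $\alpha > 1$,
\begin{align*}
 \log M \ \leq \ \tilde{I}_\alpha(\cN^{\otimes n}) + \frac{\alpha}{\alpha - 1} \log \frac{1}{1 - \lambda}.
\end{align*}
The argument feeds the genuine output $(\cN^{\otimes n} \circ \cE_m \otimes \id_{\cK_B})(\Psi)$ together with the "no-signal" product reference $\omega_R \otimes \sigma$ through the decoding measurement $(D_m)_m$, then invokes the data-processing inequality for $\tilde{D}_\alpha$ and the operational inequality relating the hypothesis-testing divergence at confidence level $1-\lambda$ to $\tilde{D}_\alpha$. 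The supremum/infimum structure of $\tilde{I}_\alpha$ absorbs the optimal choices of input and reference states.

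The central step is then \emph{additivity}, $\tilde{I}_\alpha(\cN^{\otimes n}) = n \cdot \tilde{I}_\alpha(\cN)$. Superadditivity follows at once by restricting to product inputs and product reference states; the nontrivial subadditivity I would obtain from multiplicativity of the underlying completely bounded $1 \to \alpha$ norm, exactly as in \cite{gupta14}. Combining with the one-shot bound yields, for each fixed $\alpha > 1$,
\begin{align*}
 \frac{1}{n} \log \overline{N}_{EA}(n, \{\cN\}, R_e, \lambda) \ \leq \ \tilde{I}_\alpha(\cN) + \frac{1}{n} \cdot \frac{\alpha}{\alpha - 1} \log \frac{1}{1 - \lambda}.
\end{align*}
Letting $n \to \infty$ eliminates the second term, so $\limsup_{n} \frac{1}{n} \log \overline{N}_{EA} \leq \tilde{I}_\alpha(\cN)$; letting $\alpha \downarrow 1$ and applying the interpolation limit delivers the bound $\sup_{\rho} I(\rho, \cN)$. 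Note that the entanglement rate cap $R_e$ never enters, since the converse is wholly insensitive to the amount of assistance consumed.

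The hard part will be the additivity of $\tilde{I}_\alpha$: the one-shot bound and the double limit are essentially bookkeeping once the correct R\'enyi quantity is in hand, whereas tensor-additivity of the sandwiched R\'enyi channel mutual information is a genuine theorem requiring either the completely bounded norm multiplicativity of \cite{gupta14} or an equivalent operator-algebraic argument. An alternative route sidestepping explicit additivity keeps $\cN^{\otimes n}$ intact throughout and instead controls $\frac{1}{n}\tilde{I}_\alpha(\cN^{\otimes n})$ by showing its single-letter limit as $\alpha \downarrow 1$ sufficiently uniformly in $n$; this trades the clean additivity statement for a more delicate interchange of the $n \to \infty$ and $\alpha \downarrow 1$ limits.
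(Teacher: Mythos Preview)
The paper does not actually prove this theorem; it is imported verbatim as Theorem~11 of \cite{gupta14} and used as a black box in the proof of Theorem~\ref{theorem:strong_converse}. Your sketch correctly outlines the Gupta--Wilde argument (sandwiched R\'enyi mutual information, one-shot converse via data processing, additivity from completely bounded $p$-norm multiplicativity, then the $\alpha\downarrow 1$ limit), so there is nothing to compare against beyond noting that the present paper simply cites the result rather than reproving it.
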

   \begin{proof}[Proof of Theorem \ref{theorem:strong_converse}]
    It suffices to show the first claim of the theorem. Notice that we do not have to make a difference between $\conv(\fI)$ and its closure $\overline{\conv(\fI)}$. The capacity and error functions are continuous.  
    Since $I(\rho, \cN)$ is a convex-concave function, von Neumanns min-max Theorem \cite{vonneumann28} applies, and it holds
    \begin{align}
     \sup_{\rho \in \cS(\cH_A)} \ \inf_{\tilde{\cN} \in \overline{\conv(\fI)}} I(\rho, \tilde{N}) = \inf_{\tilde{\cN} \in \overline{\conv(\fI)}}  \sup_{\rho \in \cS(\cH_A)}  I(\rho, \tilde{N}).
    \end{align}
     We conclude 
     \begin{align}
      \limsup_{n \rightarrow \infty} \frac{1}{n} \log \overline{N}_{EA}^{AV}(n, \fI, R_e, \lambda)  
      &\leq  \inf_{\tilde{\cN} \in \overline{\conv(\fI)}} \limsup_{n \rightarrow \infty} \frac{1}{n} \log \overline{N}_{EA}(n, \tilde{\cN}, R_e, \lambda) \\
      &\leq  \inf_{\tilde{\cN} \in \overline{\conv(\fI)}} \sup_{\rho \in \cS(\cH_A)} I(\rho, \tilde{\cN})\\
      &=   \sup_{\rho \in \cS(\cH_A)} \inf_{\tilde{\cN} \in \overline{\conv(\fI)}} I(\rho, \tilde{\cN}).
     \end{align}
   \end{proof}
 
 \end{section}

\begin{section}{Conclusion}
In this work, we considered the task of entanglement-assisted classical message transmission over compound memoryless and arbitrarily varying quantum channels. For both channel models, we obtained single-letter capacity formulae. 
We have shown that the entanglement-assisted classical capacity is additive under composition of AVQCs and stable under perturbation of the generating set of channels. Both of these features fail to hold for the 
corresponding unassisted capacities in general. We demonstrated that the entanglement-assisted message transmission capacities obey no general converse bound for compound quantum channels. For arbitrarily varying quantum channels 
strong converse statements always hold. \newline 
An interesting question is, how these capacities do behave for the mentioned channel models, if the amount of shared entanglement provided for coding is limited. We leave the determination of a full trade-off relation between the 
optimal entanglement and message transmission rates. \newline 
It is known that optimal protocols for several other quantum communication tasks can be derived from coherent versions of entanglement-assisted message transmission codes in case of perfectly known memoryless quantum channels.
The codes developed for the compound quantum channel and AVQC models might be used to derive universal protocols for related quantum communication tasks also in case of these models of system uncertainty. 
\end{section}

\begin{appendix} \label{appendix}
\begin{section}{The encoding construction from \cite{hsieh08}} \label{appendix:encoding}
In this section, we give an account to the encoding maps introduced in \cite{hsieh08} which we use in Section \ref{subsect:prfs_compound} to derive codes sufficient for proving the achievability part of 
Theorem \ref{theorem:comp_ea_capacity}. First, we state the following well-known assertion about existence of certain Hilbert-Schmidt orthonormal bases of unitary matrices in $\cL(\bbmC^m)$, $m \in \bbmN$. 
\begin{lemma}\label{appendix:unitary_basis_lemma}
 There exists a family $\{v_j\}_{j=1}^{m^2} \subset \cL(\bbmC^m)$ of matrices with the properties 
 \begin{enumerate}
  \item $v_j^{\ast}v_j = \bbmeins_m$, for each $i \in [m]$, and
  \item $\tr(v_j^{\ast}v_l) = m \cdot \delta_{jl}$ for each $j,l \in [m]$.
 \end{enumerate}
\end{lemma}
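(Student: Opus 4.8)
**Proof proposal for Lemma \ref{appendix:unitary_basis_lemma} (existence of a Hilbert--Schmidt orthonormal basis of unitaries in $\cL(\bbmC^m)$).**

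The plan is to give the classical explicit construction via the discrete Weyl (generalized Pauli) operators. Fix an orthonormal basis $\{e_0,\dots,e_{m-1}\}$ of $\bbmC^m$, index arithmetic taken modulo $m$, and let $\omega := e^{2\pi i/m}$. First I would define the shift $X$ and clock $Z$ by $X e_k := e_{k+1}$ and $Z e_k := \omega^k e_k$ for all $k$, and then set, for $(a,b) \in \{0,\dots,m-1\}^2$, the operator $v_{(a,b)} := X^a Z^b$. Relabelling the $m^2$ pairs $(a,b)$ by a single index $j \in [m^2]$ gives the claimed family.

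The verification breaks into two routine steps. For property (1), each $X$ and $Z$ is unitary (permutation matrix, resp. diagonal with unimodular entries), hence every product $X^a Z^b$ is unitary, so $v_j^{\ast} v_j = \bbmeins_m$. For property (2), I would compute $\tr(v_{(a,b)}^{\ast} v_{(c,d)}) = \tr(Z^{-b} X^{c-a} Z^{d})$. Writing this out in the basis $\{e_k\}$, the matrix $X^{c-a}$ has a nonzero diagonal entry only if $c \equiv a \pmod m$, in which case $X^{c-a} = \bbmeins_m$ and the trace becomes $\tr(Z^{d-b}) = \sum_{k=0}^{m-1} \omega^{k(d-b)}$, which equals $m$ if $b \equiv d \pmod m$ and $0$ otherwise by the geometric sum / orthogonality of characters. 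Hence $\tr(v_j^{\ast} v_l) = m\,\delta_{jl}$.

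Finally I would note that these two properties together already force linear independence of the $m^2$ operators in the $m^2$-dimensional space $\cL(\bbmC^m)$, so the family is in fact a Hilbert--Schmidt orthogonal basis consisting of unitaries, which is exactly the assertion. There is essentially no obstacle here; the only mild care needed is bookkeeping of the modular index arithmetic in the commutation $Z X = \omega X Z$ when one rewrites products, but for the trace computation above one does not even need the commutation relation, only that $X^{c-a}$ is traceless unless $c \equiv a$. An alternative, equally short route is to invoke the well-known fact that $\{X^a Z^b\}$ forms an orthogonal operator basis (the Heisenberg--Weyl basis) and cite it, but the direct computation is self-contained and short enough to include.
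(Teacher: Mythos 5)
Your construction is correct, and it supplies a proof where the paper gives none: the lemma is stated in the appendix as a ``well-known assertion'' without any argument, so there is nothing to compare against except the implicit folklore. The discrete Weyl (clock-and-shift) basis you build is the standard witness for this folklore and is the same operator basis used in the encoder construction of \cite{hsieh08} that the appendix goes on to describe, so your route is exactly the intended one. Your verification is sound: unitarity of $X^aZ^b$ is immediate, and the trace computation correctly reduces to the vanishing of the diagonal of $X^{c-a}$ unless $c\equiv a$, followed by the character-orthogonality sum $\sum_k \omega^{k(d-b)}$; as you note, the commutation relation $ZX=\omega XZ$ is not even needed for this. One incidental remark: the index ranges in the lemma as printed contain typos (``$i\in[m]$'' and ``$j,l\in[m]$'' should both read $[m^2]$); you have silently and correctly repaired the intended statement.
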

Since the considerations in this appendix depend highly on the concept of frequency typical sets and subspaces, we state the corresponding definitions. For a finite alphabet $\cY$, and a probability distribution
$p \in \fP(\cY)$, the set of $p$-frequency typical sequences of length $k$ is defined by 
\begin{align}
 T_{p}^k := \left\{y^k \in \cY^k: \forall a \in \cY:\   \ k \cdot p(a) = N(a|y^k)\right\},
\end{align}
where $N(a|x^k)$ is the number of occurrences of the letter $a$ in $y^k$. A probability distribution $q \in \fP(\cY)$ is called a type of sequences in $\cY^k$, if $T_q^k \neq \emptyset$. If we denote the set of types 
in $\cY^k$ by $\fT_k$, it is well-known that
\begin{align}
 |\fT_k| \leq (k+1)^{|\cY|} 
\end{align}
holds. For further information on the concept of types, the reader is referred to \cite{csiszar11}. \newline 
Let $\sigma \in \cS(\cH)$, $\cH \simeq \bbmC^d$ be a density matrix with spectral decomposition
\begin{align*}
 \sigma := \sum_{i=1}^d \alpha_i \ \ket{\gamma_i}\bra{\gamma_i}
\end{align*}
counting zero eigenvalue eigenspaces. For each given $k \in \bbmN$, we can 
decompose $\cH^{\otimes k}$ into a direct product of frequency typical subspaces 
\begin{align}
 \cH^{\otimes k} = \bigoplus_{\lambda \in \fT_k} \cH_{\lambda}, \hspace{1cm} \cH_\lambda := \spann\left\{\gamma_{i^k}: i^k := (i_1,\dots,i_k) \in T_{\lambda}^k\right\}
\end{align}
for each $\lambda \in \fT_k$, where we have used the abbreviation $\fT_k := \cT(k,[d])$. We define shortcuts
\begin{align}
 d_\lambda := \dim \cH_\lambda = |T_\lambda^k|,\hspace{.4cm} \cX_\lambda := [d^2_\lambda] \times \{0,1\}, \hspace{.4cm} \text{and} \hspace{.4cm}\cX := \prod_{\lambda \in \fT_k} \cX_\lambda
\end{align}
Let, for each $\lambda \in \fT_k$, $\{v_{j_\lambda}^{\lambda} \}_{j_\lambda}^{d_\lambda^2} \subset \cL(\cH_\lambda)$ be a family of unitaries having the properties stated in 
Lemma \ref{appendix:unitary_basis_lemma} (applied with $m := d_\lambda$). 
We assume each matrix to be extended to the whole space $\cH^{\otimes k}$ by zero-padding. Define for each $\lambda \in \fT_k$ and $x_\lambda := (j_\lambda, r_\lambda)  \in \cX_\lambda$ a unitary 
$u_{x_\lambda}^{\lambda} := v_{j_\lambda}^\lambda \cdot (-1)^{r_\lambda}$. For each $x := (x_\lambda)_{\lambda \in \fT_k} \in \cX$, we define $u_x := \sum_{\lambda \in \fT_k} u_{x_\lambda}^{\lambda}$,
and a c.p.t.p. map $\tilde{\cE}_x \in \cC(\cH^{\otimes k},\cH^{\otimes k})$ by
\begin{align}
  \tilde{\cE}_x(a) :=  u_{x}a u_{x}^{\ast}  &&(a \in \cL(\cH^{\otimes k})).
\end{align}
The following nice properties of the family $\{\tilde{\cE}_x\}_{x \in \cX}$ we use throughout this paper, were shown in \cite{hsieh08}. If $\Psi := \ket{\psi}\bra{\psi}$ is the purification of $\sigma$ with 
state vector
\begin{align}
 \psi := \sum_{i=1}^d \sqrt{\alpha_i} \gamma_i \otimes \gamma_i \ \in \ \cH \otimes \cH,
\end{align}
it holds
\begin{align}
 \frac{1}{|\cX|} \sum_{x \in \cX} \  \tilde{\cE}_{x} \otimes \id_{\cH}^{\otimes k}(\Psi^{\otimes k}) = \sum_{\lambda \in \fT_k}   \alpha^k(T_\lambda^k) \ \pi_\lambda \otimes \pi_\lambda, \label{encode_1}
\end{align}
where $\alpha^k(T_\lambda^k) := \sum_{i^k \in T_\lambda^k} \ \alpha_{i_1}\cdot \dots \alpha_{i_k}$  and $\pi_\lambda := \frac{\bbmeins_{\cH_\lambda}}{d_\lambda}$ is the maximally mixed state on $\cH_\lambda$.
Moreover, the equality
\begin{align}
 S\left(\cN \circ \tilde{\cE}_x \otimes \id_{\cH}^{\otimes k}(\Psi^{\otimes k}) \right) = S\left(\cN \otimes \id_{\cH}^{\otimes k}(\Psi^{\otimes k}) \right)
\end{align}
holds for each $x \in \cX$, $\cN \in \cC(\cH^{\otimes k}, \cK)$ with $\cK$ being any (finite dimensional) Hilbert space was shown in \cite{hsieh08}.

\begin{proof}[Proof of Lemma \ref{lemma:mutual_approx}]
 We show that the family $\{\tilde{\cE}_x\}_{x \in \cX}$ introduced preceding this section has the desired properties. Note that by definition of $V$, it holds 
 \begin{align}
  \chi(q_\ast, V) = S(\overline{V}_{q_\ast}) - \sum_{x \in \cX} q_\ast(x)\ S\left(\cN \circ \tilde{\cE}_x \otimes \id_{\cH}^{\otimes k}(\Psi^{\otimes k}) \right),
 \end{align}
 with $\overline{V}_{q_\ast} := \sum_{x \in \cX} q_{\ast}(x) V(x)$. It holds
 \begin{align} 
  S\left(\overline{V}_{q_\ast}\right) 
  &=  S\left(\cN^{\otimes k} \otimes \id_{\cH}^{\otimes k}\left(\sum_{\lambda \in \cT_{k}} \alpha^k(T_\lambda^k) \ \pi_\lambda \otimes \pi_\lambda\right)\right) \label{approx_1}\\
  &\geq \sum_{\lambda \in \cT_{k}} \alpha^k(T_\lambda^k) \left( S(\cN^{\otimes k}(\pi_\lambda)) + S(\pi_\lambda) \right) \label{approx_2}\\
  &\geq S(\cN(\sigma)^{\otimes k}) + S(\sigma^{\otimes k}) - 2\cdot \log{|\fT_k|}\label{approx_3}.
 \end{align}
 The inequality in (\ref{approx_1}) is by definition of $V$ together with (\ref{encode_1}) and linearity of $\cN$, and the inequality in (\ref{approx_2}) is by concavity and additivity of $S$ for tensor product states. 
 The last of the above inequalities is by almost-convexity of $S$, i.e. the inequality
 \begin{align*}
  S(\overline{\tau}) \leq \sum_{x \in \cX} \ p(x) S(\tau_x) + H(p),
 \end{align*}
 which holds for each $p \in \fP(\cX)$ and set $\{\tau_x: x \in \cX \}$ of quantum states on a Hilbert space with average state $\overline{\tau} := \sum_{x \in \cX} \ p(x) \tau_x$. 
 Additionally, the fact that $\sum_{\lambda \in \fT_k} \alpha^k(T_\lambda^k) \pi_\lambda = \sigma^{\otimes k}$
 holds, was used. By similar reasoning as above, also the reverse inequality 
 \begin{align}
  S\left(\overline{V}_{q_\ast}\right) \leq S(\cN(\sigma)^{\otimes k}) + S(\sigma^{\otimes k}) + 2\cdot \log{|\fT_k|}
 \end{align}
 is proven. With the preceding bounds, and additivity of the quantum mutual information on inputs with tensor product structure, i.e. $I(\sigma^{\otimes k}, \cN^{\otimes k}) = k \cdot I(\rho, \cN)$, we have
 \begin{align}
  \left | k \cdot I(\sigma, \cN) - \chi(q_\ast, V)\right| \ = \ \left|I(\sigma^{\otimes k}, \cN^{\otimes k}) - \chi(q_\ast, V)\right| \ \leq \ 2\log|\fT_k| \ \leq \ 2d \cdot  \log (k+1)
 \end{align}
 where the rightmost inequality above is by type counting. 
\end{proof}
\end{section}

\end{appendix}

\end{document}